\newcommand{\anon}{1}
\newcommand{\ind}{\overset{\text{ind.}}{\sim}}
\newtheorem{proposition}{Proposition}
\newtheorem{algorithm}{Algorithm}
\newtheorem{corollary}{Corollary}
\newtheorem{remark}{Remark}
\newtheorem{lemma}{Lemma}
\newcommand{\Atr}{A^{(\textnormal{tr})}}
\newcommand{\atr}{a^{(\textnormal{tr})}}
\newcommand{\Antr}{A_n^{(\textnormal{tr})}}
\newcommand{\antr}{a_n^{(\textnormal{tr})}}
\newcommand{\Antrij}{A_{n,ij}^{(\textnormal{tr})}}
\newcommand{\Ante}{A_n^{(\textnormal{te})}}
\newcommand{\Anteij}{A_{n,ij}^{(\textnormal{te})}}
\newcommand{\Ztr}{\hat{Z}^{(\textnormal{tr})}}
\newcommand{\ZtrT}{\hat{Z}^{(\textnormal{tr})\top}}
\newcommand{\Zntr}{\hat{Z}_n^{(\textnormal{tr})}}
\newcommand{\ZntrT}{\hat{Z}_n^{(\textnormal{tr})\top}}
\newcommand{\Zniktr}{\hat{Z}_{n, i k}^{(\textnormal{tr})}}
\newcommand{\Znjltr}{\hat{Z}_{n, j \ell}^{(\textnormal{tr})}}
\newcommand{\Ziktr}{\hat{Z}_{i k}^{(\textnormal{tr})}}
\newcommand{\Zjltr}{\hat{Z}_{j \ell}^{(\textnormal{tr})}}
\newcommand{\Ate}{A^{(\textnormal{te})}}
\newcommand{\Aijtr}{A_{ij}^{(\textnormal{tr})}}
\newcommand{\Aijte}{A_{ij}^{(\textnormal{te})}}
\newcommand{\Ic}{\mathcal{I}}
\newcommand{\Calpha}{\mathcal{C}^\alpha}
\newcommand{\nskl}{n^{(s)}_{k \ell}}
\newcommand{\nzkl}{n^{(0)}_{k \ell}}
\newcommand{\nokl}{n^{(1)}_{k \ell}}
\newcommand{\nkl}{n_{k \ell}}
\newcommand{\Bskl}{B^{(s)}_{k \ell}}
\newcommand{\Bzkl}{B^{(0)}_{k \ell}}
\newcommand{\Bokl}{B^{(1)}_{k \ell}}
\newcommand{\Ktr}{K^{\text{true}}}
\newcommand{\Ickl}{\mathcal{I}_{k \ell}}
\newcommand{\Iczkl}{\mathcal{I}^{(0)}_{k \ell}}
\newcommand{\Icokl}{\mathcal{I}^{(1)}_{k \ell}}
\newcommand{\Icskl}{\mathcal{I}^{(s)}_{k \ell}}
\newcommand{\Ickltil}{\tilde{\mathcal{I}}_{k \ell}}
\newcommand{\Iczkltil}{\tilde{\mathcal{I}}^{(0)}_{k \ell}}
\newcommand{\Icokltil}{\tilde{\mathcal{I}}^{(1)}_{k \ell}}
\newcommand{\Icskltil}{\tilde{\mathcal{I}}^{(s)}_{k \ell}}
\newcommand{\Icnskl}{\mathcal{I}^{(s)}_{n,k \ell}}
\newcommand{\Icnkl}{\mathcal{I}_{n, k \ell}}
\newcommand{\Icnzkl}{\mathcal{I}^{(0)}_{n, k \ell}}
\newcommand{\Icnokl}{\mathcal{I}^{(1)}_{n, k \ell}}
\newcommand{\Hkl}[1]{H_{kl}^{(#1)}}
\DeclareMathOperator{\expit}{expit}
\DeclareMathOperator{\vct}{vec}
\DeclareMathOperator{\diag}{diag}
\DeclareMathOperator{\Pois}{Poisson}
\DeclareMathOperator{\Unif}{Uniform}
\DeclareMathOperator{\Bin}{Binomial}
\DeclareMathOperator{\Bern}{Bernoulli}
\DeclareMathOperator*{\logit}{logit}
\DeclareMathOperator{\E}{E}
\DeclareMathOperator{\Var}{Var}
\DeclareMathOperator*{\argmax}{arg\,max}
\title{Post-selection inference with a single realization of a network}
\author[1]{Ethan Ancell}
\author[1,2]{Daniela Witten}
\author[3,4]{Daniel Kessler}
\affil[1]{Department of Statistics, University of Washington}
\affil[2]{Department of Biostatistics, University of Washington}
\affil[3]{Department of Statistics and Operations Research, University of North Carolina at Chapel Hill}
\affil[4]{School of Data Science and Society, University of North Carolina at Chapel Hill}
\begin{document}

\maketitle

\begin{abstract}
  Given a dataset consisting of a single network, we consider inference on a
parameter selected from the data. We focus on the setting where the selected
parameter is a linear combination of the mean connectivities within and between
estimated communities. Inference in this setting poses a challenge, as the
communities are themselves estimated from the data. Furthermore, since only a
single realization of the network is available, sample splitting is not
possible. We show how to split a single realization of a network
with $n$ nodes into two (or more) networks involving the same $n$ nodes; the
first network can be used to select a data-driven parameter, and the second to
conduct inference on that parameter. In the case of weighted networks with
Poisson or Gaussian edges, we obtain two independent realizations of the
network; by contrast, in the case of Bernoulli edges, the two realizations are
dependent, and so extra care is required. We establish the theoretical
properties of our estimators, in the sense of confidence intervals that attain
the nominal (selective) coverage, and demonstrate their utility in numerical
simulations and in application to a dataset representing the relationships among
dolphins in Doubtful Sound, New Zealand.

\end{abstract}

\section{Introduction}
\label{sec:introduction}

A \textit{network} captures the pairwise relationships (called \textit{edges}) among
a set of \textit{nodes}. Networks arise in a plethora of application areas, including the
social \citep{omalley2008AnalysisSocialNetworks,
  snijders2011StatisticalModelsSocial} and biological
\citep{desilva2005ComplexNetworksSimple, liu2020ComputationalNetworkBiology}
sciences. In many settings, the edges (e.g., their presence, sign, associated weight, etc.) are
treated as random. A number of models for random networks have been well-studied
in the literature; examples include the exponential random graph
\citep{chatterjee2013EstimatingUnderstandingExponential}, the random dot product
graph \citep[RDPG,][]{young2007RandomDotProduct,
  athreya2018StatisticalInferenceRandom}, and the stochastic block model
\citep[SBM,][]{holland1983StochasticBlockmodelsFirst} along with its variants
\citep{airoldi2008MixedMembershipStochastic,karrer2011StochasticBlockmodelsCommunity,kao2019HybridMixedMembershipBlockmodel}.

This paper focuses on the setting where we have access to a single
realization of a network whose edges are random, and we wish to (i) use that
single realization to select a parameter of interest, and (ii) conduct
 inference on that selected parameter. For instance,
if we suspect the presence of latent community structure in a network, then
we might (i) estimate community membership among the nodes (where the parameter of interest is defined in terms of estimated communities as in Figure~\ref{fig:target_selection_cartoon}(a)), and (ii) conduct inference on the
expected connectivity within or between the estimated communities (Figure~\ref{fig:target_selection_cartoon}(b)). Critically, 
for step (ii) to yield valid inference, it must account for the fact that the communities were
estimated using the observed network.  In general, failure to account for
the data-dependent selection of the parameter in (i) leads to
statistical issues in (ii), including lack of type 1 error control and
confidence intervals that do not attain the nominal coverage; such issues are related to what has been described in the scientific literature as \emph{double dipping} \citep{kriegeskorte2009CircularAnalysisSystems, button2019DoubledippingRevisited}.

Our contributions enable statistical answers to data-driven questions of a network. For example, an analyst might estimate communities in a network and then seek a confidence interval for the difference in mean connectivities between two estimated communities (see Figure~\ref{fig:target_selection_cartoon}). Although the literature on community estimation \citep{amini2013PseudolikelihoodMethodsCommunity, harenberg2014community, hwang2024estimation}, statistical inference \citep{funke2019stochastic, fan2022SimpleStatisticalInference, tang2022AsymptoticallyEfficientEstimators, duranthon2023optimal, agterberg2023OverviewAsymptoticNormality}, and model selection \citep{wang2017likelihoodsbm, li2020NetworkCrossvalidationEdge} for networks is rich,
we are unaware of any general strategy that offers valid inference when a single network is used to both select a parameter \textit{and} conduct statistical inference on this data-driven parameter.

\begin{figure}[!htbp]
    \centering
    \includegraphics[width=0.90\linewidth]{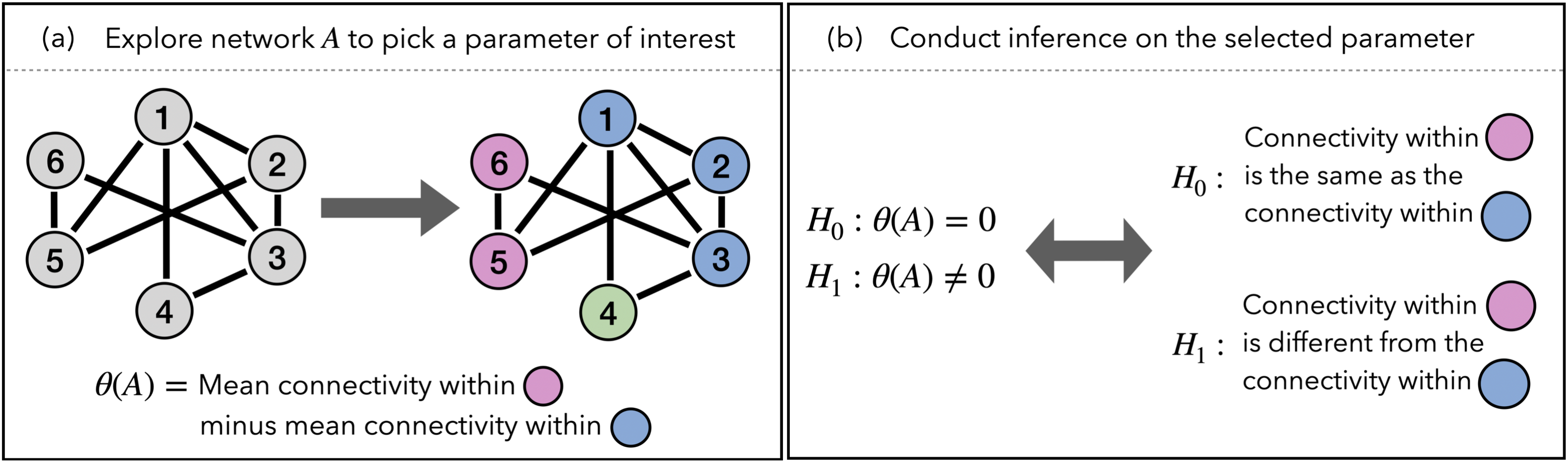}
\caption{We consider the setting where an analyst (a) uses a single realization of a network $A$ to select a parameter, and (b) proceeds to conduct inference on that parameter. In step (b), it is crucial to account for the fact that the parameter was selected using the data.}
    \label{fig:target_selection_cartoon}
\end{figure}
 
Often, the simplest strategy for inference on a data-driven parameter is sample splitting
\citep{cox1975NoteDataSplittingEvaluation}, in which a sample of $n$ independent
and identically distributed observations is partitioned into a train set and a test set. If the train and test sets are independent of one another, the train set can be used to select a parameter, and the test set can be used for inference on the selected parameter. However, when only a \textit{single} realization of a network is available, sample splitting cannot be
readily applied. \citet{chen2018NetworkCrossValidationDetermining} suggest
partitioning the nodes into two sets\textemdash$\mathcal{N}_1$ and $\mathcal{N}_2$\textemdash to
achieve two disjoint sets of edges: one set composed of all edges incident to
nodes in $\mathcal{N}_1$, and another set composed of the remaining edges, as shown in Figure~\ref{fig:cartoon}(a). However, this approach is
not applicable when the parameter of interest depends on the entire network (e.g., a function of the estimated community membership such as the average expected degree of all nodes in the first estimated community). \citet{chakrabarty2025NetworkCrossValidationModel} propose a computational improvement to this approach, which still inherits this restriction. In contrast,
\citet{li2020NetworkCrossvalidationEdge} hold out individual edges of the
network to use as the test set (see Figure~\ref{fig:cartoon}(b)), and show that under a low-rank
assumption, matrix completion techniques can be used to obtain a train network that asymptotically resembles the original network. However,
this approach is predicated on the assumption of a low-rank mean structure, is
applicable only to a relatively narrow class of parameters that can be estimated
using a small number of edges, lacks finite-sample guarantees, and requires that
the majority of the edges be used for training in order for the
matrix-completion to be well-behaved.

In this paper, our goal is to ``split'' a single realization of a network into train and test networks, where each contains the same set of nodes as the original network; see Figure~\ref{fig:cartoon}(c). We will then (i) select a parameter based on the train network, and (ii) conduct inference on that selected parameter using the test network. The strategy used to split the network into train and
test networks, and the details of inference with the test set, will depend on the
distribution of the edges. If the edges are independent and follow Poisson or Gaussian distributions, then
we apply \textit{data thinning} to obtain independent train and test networks that follow the
same distribution, up to a known scaling of the mean parameter \citep{rasines2023SplittingStrategiesPostselection,dharamshi2023GeneralizedDataThinning}. If the edges are
independent and follow a Bernoulli distribution, then we apply \textit{data fission} to obtain dependent train and test
networks, and we conduct inference using the test network conditional on the
train network \citep{leiner2025DataFissionSplitting}. In the specific case that each edge in the network follows
a Poisson distribution, our proposal is closely related to recent work by \citet{chen2021EstimatingGraphDimension};
however, we exploit recent developments in the field of selective inference to
expand the reach of that proposal to a far larger set of distributions, and
furthermore we focus on the task of inference. Our work bears a passing resemblance
to recent papers on the network jackknife and bootstrap, which involve
generating multiple ``copies'' of the network \citep{thompson2016UsingBootstrapStatistical, green2022BootstrappingExchangeableRandom, levin2025bootstrapping, linTheoreticalPropertiesNetwork}. However, in contrast to those proposals, our approach yields train and test networks whose dependence is
well-understood. This is critical to downstream inference on parameters selected with the train network.

This paper makes only two assumptions about the network: (i) each edge is
independent; and (ii) the edges are drawn from one of three distributions: Gaussian with known variance, Poisson, or Bernoulli. Critically, we do not make any further assumptions about the
parameters of the edge distributions nor their structure. For instance, we do not
assume that there are true communities in the network, nor that the network is
drawn from a specific model such as an SBM \citep{holland1983StochasticBlockmodelsFirst} or an RDPG \citep{young2007RandomDotProduct}. While the SBM acts as a working model to motivate the selected parameter that we consider in Section~\ref{sec:defining_target_of_inference} and beyond, our theoretical results require no such assumption and allow each edge to have a different mean parameter.

\begin{figure}[!htbp]
    \centering
\includegraphics[width=0.90\linewidth]{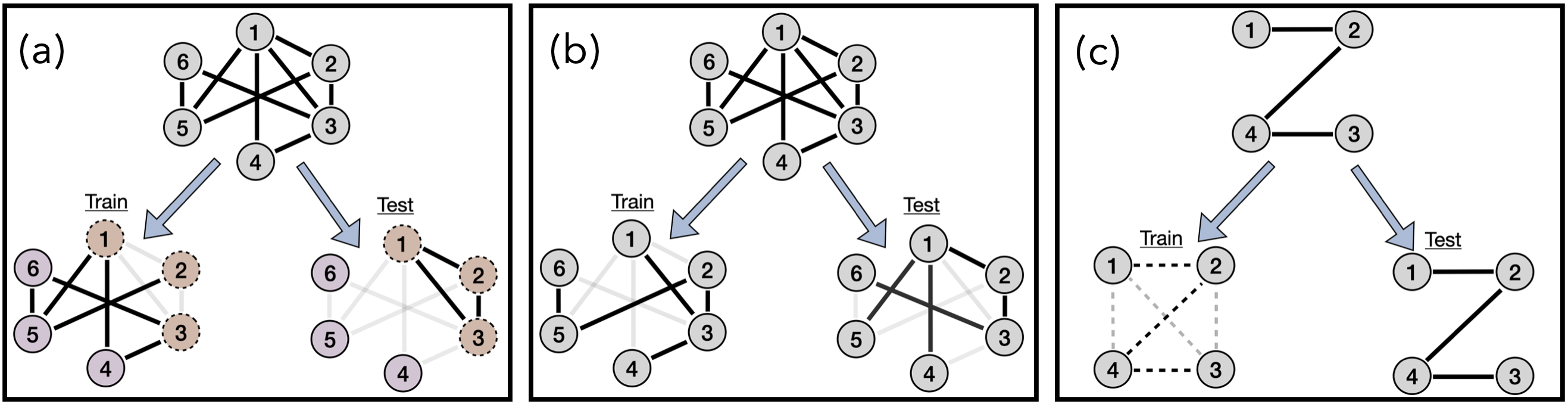}
\caption{\emph{(a):} \cite{chen2018NetworkCrossValidationDetermining} propose partitioning the
  nodes into two disjoint sets, depicted with solid and dashed circles. Edges incident to solid nodes are used for training, and testing is performed using the remaining edges. \emph{(b):} \cite{li2020NetworkCrossvalidationEdge} propose partitioning the edges into two disjoint
  sets: training uses the first set with the aid of matrix completion, and
  testing uses the second set. \emph{(c):} For networks with Bernoulli edges, our proposal produces a train network by ``toggling'' each edge
  (or non-edge) with probability $\gamma \in (0, 0.5)$ (see Proposition~\ref{prop:univariate_bernoulli_fission}). The conditional distribution of the original network given the train network is used for inference.}
    \label{fig:cartoon}
\end{figure}

The rest of this paper is organized as follows. We present an overview of the
general strategy in Section~\ref{sec:strategy}. Then, in Sections \ref{subsec:thinning_fission}--\ref{sec:inference_on_data_driven_parameter},
we instantiate each step of the general strategy in a setting where the edges of the network
are assumed to independently follow Gaussian (with common known variance), Poisson, or Bernoulli distributions. In Section~\ref{sec:simulation} we present a simulation study, and in Section~\ref{sec:dolphins_application} we consider an application to data consisting of the relationships among a group of dolphins in Doubtful Sound, New Zealand \citep{lusseau2003bottlenose}. The discussion is in
Section~\ref{sec:discussion}. Additional simulation details and proofs of all theoretical results are provided in the Supplement.

\section{The general strategy}
\label{sec:strategy} 
The edges in a network with $n$ nodes are represented via the adjacency
matrix $A \in \mathcal{S}^{n \times n}$, where the value of $A_{ij}$ encodes the status of an edge linking node \(i\) to node \(j\). In a Bernoulli
network, $\mathcal{S} = \left\{ 0, 1 \right\}$, where a zero indicates the
absence of an edge and a one indicates its presence. In a weighted
network, \(\mathcal{S}\) may be more general, e.g., all of \(\mathbb{R}\).
Networks may be undirected so that $A$ is an upper-triangular matrix, or disallow self-loops with the
convention that $A_{ii} = 0$ for all $i = 1, 2, \dots, n$. To streamline
discussion, in the main text we assume that $A$ is a directed network that
allows self-loops, but Supplement \ref{app:undirected} extends our results to 
undirected networks and networks that disallow self-loops.

We propose the following approach for inference on data-driven network parameters. 

\begin{algorithm}[Inference on data-driven network parameters]
  \label{alg:general_algorithm} \leavevmode
  \begin{enumerate}[I.]
  \item Split the adjacency matrix $A$ into two $n \times n$ adjacency
    matrices $\Atr$ and $\Ate$ such that the conditional distribution of $\Ate$
    given $\Atr$ is known, $A=T(\Atr, \Ate)$ for some deterministic
    function $T(\cdot,\cdot)$, and both $\Atr$ and $\Ate$ contain information
    about all unknown parameters in the distribution of $A$.
    \label{bulletpoint:step1_split_inference}
  \item Define a parameter $\theta(\Atr )$, which is a function of $\Atr$.
    \label{bulletpoint:step2_split_inference}
  \item Perform inference on $\theta(\Atr )$ using the conditional distribution
    of $\Ate \mid \Atr$.
    \label{bulletpoint:step3_split_inference}
  \end{enumerate}
\end{algorithm}

Our goal is to conduct valid inference on $\theta(\Atr )$, in the sense of
confidence sets that attain the nominal selective coverage \citep{fithian2017OptimalInferenceModel}. That is, for
any $\alpha \in (0,1)$, we want to construct $\Calpha(\Ate; \Atr)$
satisfying
\begin{equation}
    P\left(\theta\left(\Atr\right) \in \Calpha(\Ate; \Atr) \mid \Atr\right) \ge 1 - \alpha,
    \label{eq:prop_alg_control_have}
\end{equation}
where the probability is taken over the randomness in $\Ate \mid \Atr$.

\section{Step I:  splitting a single network}
\label{subsec:thinning_fission}

We now consider Step~\ref{bulletpoint:step1_split_inference} in Algorithm~\ref{alg:general_algorithm}, under the
assumption that the entries of the adjacency matrix \(A\) are mutually
independent, and $M_{ij} := \E \left[ A_{ij} \right]$ is unknown. In the case of Gaussian or Poisson edges, we make use of recent results
that allow us to ``thin'' each edge into two independent edges
\citep{neufeld2024DataThinningConvolutionclosed,
  dharamshi2023GeneralizedDataThinning,
  rasines2023SplittingStrategiesPostselection,
  tian2018SelectiveInferenceRandomized, leiner2025DataFissionSplitting},
ultimately arriving at two independent adjacency matrices. Critically, this is quite different from partitioning the edges or the nodes into two sets (see Figure~\ref{fig:cartoon}). 

\begin{proposition}[Thinning for Gaussian edges]
\label{prop:univariate_gaussian_thinning}
Suppose that $\epsilon \in (0,1)$, and $A_{ij} \ind \mathcal{N}(M_{ij}, \tau^2)$ for $i=1,\ldots,n$
and $j=1,\ldots,n$. For \\ $\Atr_{ij} \mid A_{ij} \ind \mathcal{N}(\epsilon A_{ij}, \epsilon (1- \epsilon) \tau^2)$ and $\Ate_{ij} := A - \Atr_{ij}$, it follows that
\begin{enumerate}[(i)]
    \item $\Atr_{ij} \sim \mathcal{N}(\epsilon M_{ij}, ~ \epsilon \tau^2)$, 
    \item $\Ate_{ij} \sim \mathcal{N}((1-\epsilon) M_{ij}, ~ (1-\epsilon) \tau^2)$, and 
    \item $\Atr$ is independent of $\Ate$. 
\end{enumerate}
\end{proposition}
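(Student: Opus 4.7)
The plan is to work one edge at a time and then promote the entry-wise conclusions to a statement about the full matrices using the independence of edges. The key device is a constructive representation: I would write $\Atr_{ij} = \epsilon A_{ij} + W_{ij}$, where $W_{ij} \sim \mathcal{N}(0, \epsilon(1-\epsilon)\tau^2)$ is drawn independently of $A_{ij}$, and independently across $(i,j)$. This representation reproduces the stated conditional distribution of $\Atr_{ij} \mid A_{ij}$ and makes the pair $(\Atr_{ij}, \Ate_{ij}) = (\epsilon A_{ij} + W_{ij},\ (1-\epsilon) A_{ij} - W_{ij})$ an explicit linear image of the bivariate Gaussian vector $(A_{ij}, W_{ij})$ with independent coordinates.

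With this representation in hand, parts (i) and (ii) reduce to routine moment calculations. Each of $\Atr_{ij}$ and $\Ate_{ij}$ is a linear combination of the two independent Gaussians $A_{ij}$ and $W_{ij}$, hence Gaussian; I would simply read off the means $\epsilon M_{ij}$ and $(1-\epsilon) M_{ij}$, and verify the variances $\epsilon^2 \tau^2 + \epsilon(1-\epsilon)\tau^2 = \epsilon \tau^2$ and $(1-\epsilon)^2 \tau^2 + \epsilon(1-\epsilon)\tau^2 = (1-\epsilon)\tau^2$.

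The main obstacle is part (iii), which is the most conceptually delicate: it is tempting to think that because $\Ate_{ij} = A_{ij} - \Atr_{ij}$, the two variables must share information. The resolution is joint Gaussianity. Because $(\Atr_{ij}, \Ate_{ij})$ is a fixed linear image of the bivariate Gaussian $(A_{ij}, W_{ij})$, the pair is itself jointly Gaussian, and uncorrelatedness then upgrades to independence. A short calculation using $A_{ij} \indep W_{ij}$ yields $\Cov(\Atr_{ij}, \Ate_{ij}) = \epsilon(1-\epsilon)\tau^2 - \epsilon(1-\epsilon)\tau^2 = 0$, which gives entry-wise independence of $\Atr_{ij}$ and $\Ate_{ij}$.

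Finally, to promote entry-wise independence to independence of the matrices $\Atr$ and $\Ate$, I would use that the $A_{ij}$ are mutually independent across $(i,j)$ and that the auxiliary noises $W_{ij}$ are drawn independently across $(i,j)$ as well. These two facts together imply that the pairs $\{(\Atr_{ij}, \Ate_{ij})\}_{i,j}$ are mutually independent; combining this with the per-edge independence from (iii) yields that the full $\Atr$ matrix is independent of the full $\Ate$ matrix, as required.
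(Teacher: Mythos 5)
Your proof is correct, and it is the standard argument for Gaussian data thinning. The paper itself does not include a proof of this proposition; it simply cites the data thinning literature \citep{neufeld2024DataThinningConvolutionclosed, dharamshi2023GeneralizedDataThinning, rasines2023SplittingStrategiesPostselection}, and the argument given in those references is essentially the one you describe: realize $\Atr_{ij} = \epsilon A_{ij} + W_{ij}$ with $W_{ij} \sim \mathcal{N}(0, \epsilon(1-\epsilon)\tau^2)$ independent of $A_{ij}$, check that this reproduces the stated conditional law, read off the marginal means and variances, and use joint Gaussianity plus the vanishing covariance $\epsilon(1-\epsilon)\tau^2 - \epsilon(1-\epsilon)\tau^2 = 0$ to upgrade uncorrelatedness to independence, with mutual independence across edges promoting the entry-wise statement to independence of the full matrices. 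No gaps.
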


\begin{proposition}[Thinning for Poisson edges] 
\label{prop:univariate_poisson_thinning}
Suppose that $\epsilon \in (0,1)$, and $A_{ij} \ind \Pois(M_{ij})$ for $i=1,\ldots,n$
and $j=1,\ldots,n$. For \\ $\Atr_{ij} \mid A_{ij} \ind \Bin(A_{ij}, \epsilon)$ and $\Ate_{ij} := A_{ij} - \Atr_{ij}$, it follows that
\begin{enumerate}[(i)]
\item $\Atr_{ij} \sim \Pois(\epsilon M_{ij})$,
\item $\Ate_{ij} \sim \Pois((1-\epsilon) M_{ij})$, and  
\item $\Atr$ is independent of $\Ate$.
\end{enumerate}
\end{proposition}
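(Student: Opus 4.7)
The plan is to prove all three claims simultaneously by deriving the joint distribution of $(\Atrij, \Aijte)$ for a single pair $(i,j)$, then leverage independence across $(i,j)$ to conclude joint independence of the full matrices $\Atr$ and $\Ate$.

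First I would fix an arbitrary $(i,j)$ and compute the joint probability mass function of $(\Atrij, \Aijte)$ directly. Since $\Aijte = A_{ij} - \Atrij$, for any nonnegative integers $k, \ell$,
\begin{equation*}
P\bigl(\Atrij = k,\ \Aijte = \ell\bigr) = P\bigl(\Atrij = k \mid A_{ij} = k+\ell\bigr)\, P\bigl(A_{ij} = k+\ell\bigr).
\end{equation*}
Substituting the $\Bin(k+\ell, \epsilon)$ and $\Pois(M_{ij})$ pmfs and simplifying via $\binom{k+\ell}{k} = (k+\ell)!/(k!\,\ell!)$ and $M_{ij}^{k+\ell} = M_{ij}^{k} M_{ij}^{\ell}$, the joint pmf factors cleanly as
\begin{equation*}
\frac{(\epsilon M_{ij})^k e^{-\epsilon M_{ij}}}{k!} \cdot \frac{((1-\epsilon) M_{ij})^\ell e^{-(1-\epsilon) M_{ij}}}{\ell!}.
\end{equation*}
This single calculation simultaneously establishes (i) the marginal of $\Atrij$ by summing over $\ell$, (ii) the marginal of $\Aijte$ by summing over $k$, and the within-pair independence $\Atrij \indep \Aijte$ from the product form.

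Next I would lift this within-pair independence to the joint claim (iii). The thinning step defines each $\Atrij$ conditionally on $A_{ij}$ alone, with independent randomization across $(i,j)$; together with the assumed mutual independence of the $A_{ij}$'s, the pairs $\{(\Atrij, \Aijte)\}_{i,j}$ are mutually independent. Combined with the factorization above, it follows that the collection $\{\Atrij\}_{i,j}$ is independent of the collection $\{\Aijte\}_{i,j}$, i.e., $\Atr \indep \Ate$.

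I do not anticipate a substantive obstacle: the argument is essentially the classical Poisson splitting identity plus a bookkeeping step to promote entrywise independence to matrix-level independence. The only care needed is to make the conditional independence structure of the thinning step explicit, so that the randomization used to produce $\Atrij$ from $A_{ij}$ is genuinely independent across $(i,j)$; this is implicit in the statement $\Atr_{ij} \mid A_{ij} \ind \Bin(A_{ij}, \epsilon)$ but worth stating cleanly in the proof.
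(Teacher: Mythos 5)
Your proof is correct, and it is the standard Poisson-splitting argument: the paper does not prove this proposition itself but defers to the data thinning literature \citep{neufeld2024DataThinningConvolutionclosed, dharamshi2023GeneralizedDataThinning}, where the same joint-pmf factorization computation appears. Your added care in noting that the binomial randomization must be independent across entries to lift within-pair independence to matrix-level independence is exactly the right bookkeeping step.
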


The $n \times n$ matrices $\Atr$ and $\Ate$ arising from
Propositions~\ref{prop:univariate_gaussian_thinning} and
\ref{prop:univariate_poisson_thinning} are independent, and the Fisher
information about the unknown parameter $M_{ij}$ is neatly allocated between
$\Atr_{ij}$ and $\Ate_{ij}$ in proportion to $\epsilon \in
(0,1)$ \citep{neufeld2024DataThinningConvolutionclosed,
  dharamshi2023GeneralizedDataThinning}. However, as demonstrated in
\citet{dharamshi2023GeneralizedDataThinning}, it is not possible to decompose
$A_{ij} \sim \Bern(M_{ij})$ into (non-trivially) independent
$\Atr_{ij}$ and $\Ate_{ij}$ that satisfy \(A_{ij} = T \left( \Aijtr, \Aijte
\right)\) for a deterministic function \(T\). Instead, we make use of results
from \citet{leiner2025DataFissionSplitting} to obtain a \emph{dependent} pair
\((\Aijtr, \Aijte)\).

\begin{proposition}[Fission for Bernoulli edges] \label{prop:univariate_bernoulli_fission}
Suppose that $\gamma \in (0, 0.5)$, and $A_{ij} \ind \Bern(M_{ij})$ for $i=1,\ldots,n$
and $j=1,\ldots,n$. For independent noise $W_{ij} \ind \Bern(\gamma)$, $\Atr_{ij} := A_{ij}(1 - W_{ij}) + (1 - A_{ij}) W_{ij}$, and \(\Ate_{ij} := A_{ij}\), it follows that
\begin{enumerate}[(i)]
\item
  \(\Atr_{ij} \sim \Bern(M_{ij} + \gamma - 2 M_{ij} \gamma)\), and
\item \label{prop:fission_define_T} $\Aijte \mid \Aijtr \sim \Bern(T_{ij})$, where $T_{ij} := \frac{M_{ij}}{M_{ij} + (1 - M_{ij}) \left(\frac{\gamma }{1 - \gamma}\right)^{2 \Atr_{ij} - 1}}$.
\end{enumerate}
\end{proposition}
When applying Proposition~\ref{prop:univariate_bernoulli_fission}, $\Aijtr$ is obtained by toggling the entry $A_{ij} \in \{0,1\}$ with probability
$\gamma$ (see Figure~\ref{fig:cartoon}(c)). For small values of $\gamma$, more information about $M_{ij}$ is
allocated to $\Aijtr$, and for values of $\gamma$ close to $0.5$, more
information about $M_{ij}$ is allocated to $\Aijte \mid \Aijtr$. Following
\citet{leiner2025DataFissionSplitting}, we refer to the application of
Proposition~\ref{prop:univariate_bernoulli_fission} as Bernoulli ``fission.''

\section{Step II: defining the selected parameter}
\label{sec:defining_target_of_inference}

Step~\ref{bulletpoint:step2_split_inference} of
Algorithm~\ref{alg:general_algorithm} involves selecting a parameter that is a
function of $\Atr$. While Algorithm~\ref{alg:general_algorithm} is generally applicable to any network parameter selected using $\Atr$, to fix ideas we consider estimating latent node attributes, and then we define a data-driven parameter that is a function of those latent node attributes.

The literature contains a number of network models in which the edge
distribution depends on latent node attributes. Examples include the SBM
\citep{holland1983StochasticBlockmodelsFirst} and the RDPG \citep{young2007RandomDotProduct,rubin-delanchy2022StatisticalInterpretationSpectral}.
However, in what follows we do not assume that
any such network model holds: we assume only that the selected parameter is a
function of some \emph{estimated} latent node attributes.

The estimated latent node attributes can be either discrete, as in the context
of an SBM, or continuous-valued, as in the context of an RDPG or a mixed membership SBM
\citep{airoldi2008MixedMembershipStochastic}. For simplicity, we consider
estimating discrete latent node attributes from the train network $\Atr$; we will
interpret these as estimated ``communities.'' To encode estimated community
membership, we use $\Ztr \in \{0, 1\}^{n \times K}$, where $\Ztr_{ik} = 1$ when the $i$th node belongs to the $k$th estimated community. We emphasize that $\Ztr$ is a function of $\Atr$, and
perhaps also of auxiliary randomness \citep[e.g., in the context of spectral
clustering, as in][]{amini2013PseudolikelihoodMethodsCommunity}; however, in
what follows, for simplicity of notation we suppress any dependence on
auxiliary randomness.

Next, we consider the $K \times K$ matrix
\begin{equation}
    B(\Atr) := \left( \ZtrT \Ztr \right)^{-1} \ZtrT \E \left[ A  \right] \Ztr \left( \ZtrT \Ztr \right)^{-1},
    \label{eq:B_matrix_theta_def}
\end{equation}
where $B(\Atr)$ depends on $\Atr$ through $\Ztr$. In what follows, we will often
suppress the argument $\Atr$ and simply write $B$. The $(k, \ell)$th entry of $B$ takes
the form
\begin{equation}
    B_{k \ell} = \dfrac{1}{|\Ic_{k \ell}|} \sum_{(i,j) \in \Ic_{k \ell}} \E \left[ A_{ij}  \right],
    \label{eq:B_entry_definition}
\end{equation}
where we define $\Ic_{k \ell} := \left\{ (i,j) ~:~ \Ztr_{ik} = 1, \Ztr_{j \ell} =
  1 \right\}$ (i.e., $\Ic_{k \ell}$ is the set of edges originating in the $k$th estimated community and
terminating in the $\ell$th estimated community). Hence, $B$
contains the mean pairwise connectivities between the $K$ estimated communities.
We define the selected parameter to be a linear combination of the elements of
$B$, i.e.,
\begin{align}
        \theta \left( \Atr \right) &:= u^\top \vct(B) = u^\top \vct \left( \left( \ZtrT \Ztr \right)^{-1} \ZtrT \E \left[ A \right] \Ztr \left( \ZtrT \Ztr \right)^{-1}\right),
    \label{eq:target_of_inference}
\end{align}
where $u \in \mathbb{R}^{K^2}$ satisfies $\Vert u \Vert_2 = 1$ and is allowed to
depend on $\Atr$ if desired. For example, if $u = (1, 0, \dots, 0)^\top$, then the selected parameter 
is the mean connectivity within the first estimated community, and if $u = \left( \frac{1}{\sqrt{2}}, -\frac{1}{\sqrt{2}}, 0, \dots,
  0\right)^{\top}$ then the selected parameter is the mean connectivity within the first estimated community minus the mean connectivity from the second to the 
first estimated community.

The selected parameter $\theta \left( \Atr \right)$ is random in the sense that
it depends on $\Atr$. Thus, to conduct valid inference on this parameter, in the
spirit of \citet{fithian2017OptimalInferenceModel} we will construct confidence
intervals that cover $\theta(\Atr)$ at a rate of $1 - \alpha$, \emph{conditional} on $\Atr$.

\begin{remark}
  Suppose that $A$ follows an SBM with $n$ nodes and $K$ communities, where $Z
  \in \{0, 1\}^{n \times K}$ encodes ``true'' community
  membership, and $C \in \mathbb{R}^{K \times K}$ is the connectivity
  matrix. Let \(Z_{i}\) denote the \(i\)th row of \(Z\). Then, for Gaussian, Poisson, or
  Bernoulli edges, it follows that $\E[A_{ij}] = Z_i C Z_j^\top$ and $\E[A]
  = Z C Z^\top$. Hence, $C = (Z^\top Z)^{-1} Z^\top \E[A] Z (Z^\top
  Z)^{-1}$.
    
  Consequently, when $\Ztr = Z$ (i.e., the true communities are exactly
  recovered), $B$ defined in (\ref{eq:B_matrix_theta_def}) equals $C$. Thus, in a sense, the SBM motivates the selected parameter in
  \eqref{eq:target_of_inference}. However, this paper does not assume that $A$
  follows an SBM.
\end{remark}

One may wonder about the usefulness of a data-driven parameter such as $\theta(\Atr)$ that depends on estimated communities, as we have not assumed the presence of true communities in the data. Even if true communities exist, our estimated communities are not likely to recover them exactly. Does this make $\theta(\Atr)$ meaningless? We argue that inference for $\theta(\Atr)$ is nonetheless useful: it allows practitioners to answer statistical questions about \textit{their} estimated communities, and to determine whether they have captured signal, as opposed to simply noise in the data.

For example, suppose that an analyst applies community detection to identify two communities in a network where there are no true communities (e.g., a network arising from a homogeneous Erd\H{o}s-R\'{e}nyi model). To assess whether the two detected communities are meaningful, the analyst can test the null hypothesis that the mean connectivity within the two communities does not differ from the mean connectivity between them. To accomplish this, they can set $u = \left( \frac{1}{\sqrt{4}}, -\frac{1}{\sqrt{4}}, -\frac{1}{\sqrt{4}}, \frac{1}{\sqrt{4}} \right)$ and construct a confidence interval for $\theta(\Atr)$. If the confidence interval contains zero, then the analyst does not have evidence to conclude that their estimated communities are meaningful, and so inference for $\theta(\Atr)$ proved informative even in the absence of true communities.

In practice, even when true communities exist, exact community recovery is rarely guaranteed, or even verifiable when only a single observed network is available. In contrast to results that require exact community recovery (e.g., \citet{tang2022AsymptoticallyEfficientEstimators}), our selected parameter $\theta(\Atr)$ remains informative even when community recovery is not exact. For example, if two true communities exist but the estimated communities differ from the true communities by a single node, then inference procedures relying on exact recovery may fail, whereas valid inference for $\theta(\Atr)$ is still available under our framework.



\section{Step III: inference for a selected parameter}
\label{sec:inference_on_data_driven_parameter}

The selected parameter $\theta(\Atr)$ defined in \eqref{eq:target_of_inference} is a function
of $\Atr$, so our interest lies in \emph{selective} coverage \citep{fithian2017OptimalInferenceModel} in the sense of \eqref{eq:prop_alg_control_have}.
In Section~\ref{sec:gaussian-poisson}, we first show how this can be accomplished via data thinning for Gaussian or Poisson edges, where the former is stated as a finite sample result and the latter as an asymptotic result. Finally, in Section~\ref{sec:bernoulli_edges}, we address the case of
Bernoulli edges, which requires special considerations due to the inter-dependence of $\Atr$ and $\Ate$.

\subsection{Networks with Gaussian and Poisson edges}
\label{sec:gaussian-poisson}

In the case of Gaussian edges, we obtain an exact finite sample result. Here, $\phi_{1 - \alpha/2}$ is the $(1-\alpha/2)$-quantile of the $\mathcal{N}(0,1)$ distribution.

\begin{proposition}
    \label{prop:normal_estimation}
    Suppose that the random adjacency matrix $A$ has entries $ A_{ij} \ind
    \mathcal{N}(M_{ij}, \tau^2)$ with common known variance $\tau^2$ and unknown
    mean $M_{ij}$. Suppose that we fix $\epsilon \in (0,1)$ and construct $\Ate$ and $\Atr$ from $A$ by Proposition \ref{prop:univariate_gaussian_thinning}, and we then apply community detection to $\Atr$
    to yield the estimated community membership matrix $\Ztr \in \{0,1\}^{n
      \times K}$. Define
    \begin{equation}
        \hat{\theta}\left( \Ate, \Atr \right) := (1-\epsilon)^{-1} u^\top \vct \left( \left( \ZtrT \Ztr \right)^{-1} \ZtrT \Ate \Ztr \left( \ZtrT \Ztr \right)^{-1} \right), \label{eq:thetahat-gaussian}
    \end{equation}
    where $u \in \mathbb{R}^{K^2}$ satisfies $\Vert u \Vert_2 = 1$, and is allowed to depend on $\Atr$ if desired. Then,
    \begin{align*}
        P \Bigg( \theta(\Atr) \in \Big[ \hat{\theta}(\Ate, \Atr) \pm \phi_{1-\alpha/2} \cdot \sigma \Big] ~\Bigg|~ \Atr \Bigg) = 1 - \alpha,
    \end{align*}
    where $\theta \left( \Atr \right)$ was defined in \eqref{eq:target_of_inference}, $\sigma^2 := (1-\epsilon)^{-1} \tau^2 u^\top \{(\ZtrT \Ztr)^{-1} \otimes (\ZtrT \Ztr)^{-1}\} u$,  and $\otimes$ is the Kronecker product.
\end{proposition}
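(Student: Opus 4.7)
The plan is to exploit the independence of $\Atr$ and $\Ate$ guaranteed by Proposition~\ref{prop:univariate_gaussian_thinning}, so that conditioning on $\Atr$ leaves the marginal Gaussian distribution of $\Ate$ unchanged, while making $\Ztr$ (and, if needed, $u$) deterministic. Once this is established, the problem reduces to computing the conditional mean and variance of a linear functional of a Gaussian vector and invoking the standard pivotal argument.

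First, I would note that conditional on $\Atr$, the matrix $\Ztr$ is fixed, and the entries $\Anteij$ remain independent $\mathcal{N}((1-\epsilon) M_{ij}, (1-\epsilon) \tau^2)$. Then $\hat{\theta}(\Ate, \Atr)$ is conditionally a linear combination of these independent Gaussians, hence itself Gaussian. For the mean, linearity of expectation gives
\begin{equation*}
\E\bigl[\hat{\theta}(\Ate, \Atr) \mid \Atr\bigr] = (1-\epsilon)^{-1} u^\top \vct\!\left( (\ZtrT \Ztr)^{-1} \ZtrT \E[\Ate] \Ztr (\ZtrT \Ztr)^{-1} \right) = \theta(\Atr),
\end{equation*}
since $\E[\Ate] = (1-\epsilon) \E[A]$.

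Next, for the variance, I would rewrite $\hat\theta$ in vectorized form using the identity $\vct(PXB) = (B^\top \otimes P) \vct(X)$. Setting $P := (\ZtrT \Ztr)^{-1} \ZtrT$ and using symmetry of $(\ZtrT \Ztr)^{-1}$, the quantity $(\ZtrT\Ztr)^{-1} \ZtrT \Ate \Ztr (\ZtrT\Ztr)^{-1}$ vectorizes to $(P \otimes P) \vct(\Ate)$. Since $\vct(\Ate) \mid \Atr$ has covariance $(1-\epsilon) \tau^2 I_{n^2}$, the conditional variance of $\hat\theta$ is
\begin{equation*}
(1-\epsilon)^{-2} \cdot (1-\epsilon) \tau^2 \cdot u^\top (P P^\top \otimes P P^\top) u,
\end{equation*}
and the reduction $P P^\top = (\ZtrT \Ztr)^{-1}$ yields exactly $\sigma^2$.

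Combining these two computations, $\hat{\theta}(\Ate, \Atr) \mid \Atr \sim \mathcal{N}(\theta(\Atr), \sigma^2)$, and the stated equality follows from the definition of $\phi_{1-\alpha/2}$. The only genuinely delicate step is the Kronecker-product bookkeeping for the variance; the rest is routine once independence of $\Ate$ and $\Atr$ is invoked. I do not anticipate a serious obstacle, since the finite-sample Gaussian pivot is exact and no asymptotics or conditioning on a selection event beyond $\Atr$ is required.
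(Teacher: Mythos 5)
Your proof is correct and follows essentially the same route as the paper: condition on $\Atr$ (so that $\Ztr$ and $u$ are fixed and, by the independence in Proposition~\ref{prop:univariate_gaussian_thinning}, $\Ate$ retains its marginal Gaussian law), then identify the conditional mean and the Kronecker-structured conditional variance of the linear functional and read off the exact Gaussian pivot. The only difference is cosmetic: the paper packages the variance computation via the matrix-normal distribution and its closure under pre- and post-multiplication, whereas you apply the identity $\vct(PXB) = (B^\top \otimes P)\vct(X)$ directly together with $PP^\top = (\ZtrT\Ztr)^{-1}$; these are the same calculation.
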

Proposition~\ref{prop:normal_estimation} requires that the variance $\tau^2$ is known, a requirement of data thinning for Gaussian random variables (Proposition~\ref{prop:univariate_gaussian_thinning}). Recent work by \citet{dharamshi2024DecomposingGaussiansUnknown} suggests avenues to relax this requirement.

In the case of Poisson edges we arrive at the following asymptotic result.
\begin{proposition}
  \label{prop:poisson_estimation} Consider a sequence of $n \times n$ random adjacency matrices
  $(A_n)_{n=1}^\infty$ with entries $A_{n,ij} \ind \Pois(M_{n,ij})$,
  where $0 < N_0 \le M_{n,ij} \le N_1 < \infty$ holds for constants $N_0$ and
  $N_1$ not depending on $n$. Suppose we fix $\epsilon \in (0,1)$ and construct $\Antr$ and $\Ante$ from $A_n$ by applying Proposition~\ref{prop:univariate_poisson_thinning}, and then apply community detection to $\Antr$ to yield the estimated community
  membership matrices $\Zntr \in \{0,1\}^{n \times K}$. Define
  \begin{align} \hat{\theta}_n & \left( \Ante, \Antr \right) \nonumber :=
    (1-\epsilon)^{-1} u_n^\top \vct\left( \left( \ZntrT \Zntr
      \right)^{-1} \ZntrT \Ante \Zntr \left(\ZntrT \Zntr\right)^{-1}
    \right), \label{eq:thetahat_definition_poission}
  \end{align} where $u_n \in \mathbb{R}^{K^2}$ satisfies $\Vert u_n \Vert_2 =
  1$, and is allowed to depend on $\Antr$ if desired. Further define $\Ic_{n, k \ell} := \left\{ (i,j) :
    \Zniktr = 1, \Znjltr = 1 \right\}$, $\hat{B}_{n, k \ell}:= \frac{1}{|\Ic_{n,k \ell}|}
  \sum_{(i,j) \in \Ic_{n, k \ell}} \Anteij$, $\hat{\Delta}_n \in \mathbb{R}^{K \times K}$ with entries
  $ \hat{\Delta}_{n, k \ell} := \dfrac{ \hat{B}_{n, k \ell}}{|\Icnkl|}$, and $\hat{\sigma}^2_n:=
  (1-\epsilon)^{-2} u_n^\top \diag(\vct(\hat{\Delta}_n)) u_n$. Then, for
  $\theta_n(\Antr)$ defined in \eqref{eq:target_of_inference}, we have
    \begin{align*}
        \lim_{n \to \infty} P \Bigg( \theta_n(\Antr) \in \Big[ \hat{\theta}_n(\Ante, \Antr) \pm \phi_{1-\alpha/2} \cdot \hat{\sigma}_n \Big] ~\Bigg|~ \Antr \Bigg) = 1-\alpha,
    \end{align*}
  provided that the sequence of realizations $\left\{\Zntr =
    \hat{z}_n \right\}_{n=1}^{\infty}$ is such that $|\Icnkl|^{-1} = O(n^{-2})$ for all $k, \ell \in \{1, 2, \dots, K\}$.
\end{proposition}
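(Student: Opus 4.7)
The plan is to condition on $\Antr$ throughout, establish a conditional central limit theorem for $(\hat\theta_n - \theta_n(\Antr))/s_n$ via the Lyapunov condition (where $s_n^2$ denotes the true conditional variance), show $\hat\sigma_n^2/s_n^2 \to 1$ in conditional probability, and combine the two via Slutsky's lemma. By Proposition~\ref{prop:univariate_poisson_thinning}, $\Ante$ is independent of $\Antr$, so conditional on $\Antr$ both $\Zntr$ and $u_n$ are deterministic, and $\Anteij \ind \Pois((1-\epsilon)M_{n,ij})$. A direct calculation using \eqref{eq:target_of_inference} and the definition of $\hat\theta_n$ in \eqref{eq:thetahat_definition_poission} yields $E[\hat\theta_n \mid \Antr] = \theta_n(\Antr)$; moreover, since $\{\Icnkl\}_{k,\ell}$ partition the pairs of nodes, the conditional covariances between distinct $\hat B_{n,k\ell}$ vanish, giving
\[
s_n^2 := \Var(\hat\theta_n \mid \Antr) = (1-\epsilon)^{-1} u_n^\top \diag(\vct(\Delta_n)) u_n,
\]
where $\Delta_{n,k\ell} := B_{k\ell}/|\Icnkl|$ and $B_{k\ell}$ is as in \eqref{eq:B_entry_definition}.

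Writing $\hat\theta_n - \theta_n(\Antr)$ as a sum over $(i,j)$ of independent centered Poisson terms, where each $(i,j)$ lies in a unique $\Icnkl$ and contributes coefficient $c_{n,ij} = u_{n,k\ell}/[(1-\epsilon)|\Icnkl|]$, I apply the Lyapunov CLT. The hypothesis $|\Icnkl|^{-1} = O(n^{-2})$ together with $\|u_n\|_2 = 1$ and the upper bound $M_{n,ij} \le N_1$ (which gives a uniform bound on the third central Poisson moment) yield $\sum_{i,j} |c_{n,ij}|^3\, E|\Anteij - E\Anteij|^3 = O(n^{-4})$. The lower bound $M_{n,ij} \ge N_0 > 0$ combined with $|\Icnkl| \le n^2$ and $\|u_n\|_2 = 1$ gives $s_n^2 \ge N_0/[(1-\epsilon)n^2]$, hence $s_n^3 \gtrsim n^{-3}$, making the Lyapunov ratio $O(n^{-1}) \to 0$. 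Thus $(\hat\theta_n - \theta_n(\Antr))/s_n$ converges in distribution to $\mathcal{N}(0,1)$ conditionally on $\Antr$.

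For variance estimation, each $\hat B_{n,k\ell}$ is conditionally an average of independent Poissons with mean $(1-\epsilon)B_{k\ell}$ and variance $(1-\epsilon)B_{k\ell}/|\Icnkl| = O(n^{-2})$, so Chebyshev's inequality gives $\hat B_{n,k\ell}/B_{k\ell} \to (1-\epsilon)$ in conditional probability; since there are only $K^2$ terms, a union bound transfers this pointwise convergence to the weighted ratio $\hat\sigma_n^2/s_n^2 \to 1$. Slutsky's lemma then yields $(\hat\theta_n - \theta_n(\Antr))/\hat\sigma_n \to \mathcal{N}(0,1)$ in distribution conditional on $\Antr$, and since $\mathcal{N}(0,1)$ assigns zero mass to $\{\pm \phi_{1-\alpha/2}\}$, the claimed conditional coverage follows. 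The principal technical obstacle is tracking the correct scales in the Lyapunov condition: both $|c_{n,ij}|$ and $s_n$ vanish with $n$, so the assumption $|\Icnkl|^{-1} = O(n^{-2})$ and the positive lower bound $M_{n,ij} \ge N_0$ must be combined carefully to show that $\sum |c_{n,ij}|^3 E|\Anteij - E\Anteij|^3 = o(s_n^3)$; the remaining steps are either direct moment computations or routine applications of Chebyshev's inequality with $K^2$ fixed.
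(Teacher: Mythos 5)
Your proof is correct, but it takes a genuinely different route from the paper's. The paper frames $\hat{B}_{n,k\ell}$ as the MLE of a misspecified (working) block-homogeneous Poisson model, identifies the pseudo-true parameter $\psi^*_{n,k\ell}$, and runs a score-equation/mean-value-theorem argument to get a per-block CLT (their Lemma on averages over subsets); it then combines the $K^2$ marginal CLTs into a joint one via characteristic functions, handles the $n$-dependent contrast $u_n$ with a Bolzano--Weierstrass subsequence argument, and proves $\hat{\sigma}_n/\sigma_n \to 1$ with yet another subsequence argument. You instead exploit the fact that, conditional on $\Antr$, the estimator $\hat{\theta}_n$ is an \emph{exactly linear}, exactly unbiased function of the mutually independent entries $\Anteij$, so a single Lyapunov CLT applied to the full weighted sum $\sum_{(i,j)} c_{n,ij}(\Anteij - \E[\Anteij])$ does all the work at once; your scale accounting (numerator $O(n^{-4})$ from $|\Icnkl|^{-1}=O(n^{-2})$ and the bounded third Poisson moment, denominator $s_n^3 \gtrsim n^{-3}$ from $M_{n,ij}\ge N_0$) is right, and your max-ratio bound for $\hat{\sigma}_n^2/s_n^2$ dispenses with the paper's subsequence machinery because $\bigl|\sum_{k\ell} a_{k\ell}x_{k\ell}/\sum_{k\ell} a_{k\ell}y_{k\ell} - 1\bigr| \le \max_{k\ell}|x_{k\ell}/y_{k\ell}-1|$ holds uniformly in the nonnegative weights $a_{k\ell}=u_{n,k\ell}^2$. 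Your argument is shorter and more elementary for this proposition; what the paper's heavier M-estimation framework buys is reuse, since essentially the same scaffolding carries over to the Bernoulli case (Proposition~\ref{prop:squiggle_estimation}), where the estimator is a nonlinear transformation of block averages and the one-shot linear-sum Lyapunov argument is no longer available.
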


\begin{remark}
  In Proposition \ref{prop:poisson_estimation}, we assume that the entries of the sequence of
  matrices \(\left( M_{n} \right)_{n = 1}^{\infty}\) are uniformly bounded away
  from zero. This implies that the sequence of networks is dense in the sense that
  the expected degree grows unboundedly
  \citep{bickel2009NonparametricViewNetwork}. While networks
  encountered in reality are typically sparse
  \citep{barabasi2016NetworkScience}, we assume this lower bound to justify the use of a normal approximation to the Poisson (see the
  proof in Supplement \ref{app:poisson_proposition_proof}).
  We will make an
  analogous assumption for networks with Bernoulli entries in Proposition
  \ref{prop:squiggle_estimation} for the same reason.
\end{remark}

\subsection{Networks with Bernoulli edges}
\label{sec:bernoulli_edges}

We now turn to Bernoulli edges, whose treatment is considerably more complicated
than the Gaussian and Poisson edges considered in Section~\ref{sec:gaussian-poisson}. In that section, recall that we conduct inference on $\theta(\Atr)$, defined in \eqref{eq:target_of_inference} as $u^\top
\vct\left( B(\Atr) \right)$, where the $(k, \ell)$th element of
$B(\Atr)$ is defined in \eqref{eq:B_entry_definition} to be $B_{k \ell}(\Atr) := \frac{1}{|\Ickl|} \sum_{(i,j) \in \Ickl }
M_{ij}$, where $M_{ij} := \E[A_{ij}]$.

Trouble arises because Bernoulli fission (Proposition \ref{prop:univariate_bernoulli_fission}) results in
\emph{dependent} networks \(\Atr\) and \(\Ate\) (recall that \(\Ate := A\)). Thus, $T_{ij} = \E[\Aijte \mid \Aijtr]$ defined in Proposition~\ref{prop:univariate_bernoulli_fission}(\ref{prop:fission_define_T}) is not equal to $M_{ij} = \E[A_{ij}]$. So, the arguments used to derive Propositions~\ref{prop:normal_estimation}~and~\ref{prop:poisson_estimation} would \textit{not} lead to inference on functions of $B_{k \ell}(\Atr)$, but rather on functions of
\begin{equation}
    V_{k\ell}(\Atr) := \frac{1}{|\Ickl|} \sum_{(i,j) \in \Ickl } E[\Aijte \mid \Aijtr] = \frac{1}{|\Ickl|} \sum_{(i,j) \in \Ickl } T_{ij}.
    \label{eq:Vkl}
\end{equation}
While $B_{k \ell}(\Atr)$ depends on $\Atr$ \textit{only} through the estimated communities $\Ztr = \hat{Z}(\Atr)$, $V_{k \ell}(\Atr)$ depends on $\Atr$ also through $T_{ij}$, whose definition is somewhat cryptic. Most importantly, the quantities $V_{k \ell}(\Atr)$ and $B_{k \ell}(\Atr)$ may be quite different, especially when $\gamma$ is small, as is shown in the blue curves in Figure~\ref{fig:daniela}.

\begin{figure}[!htbp]
    \centering
\includegraphics[width=0.9\linewidth]{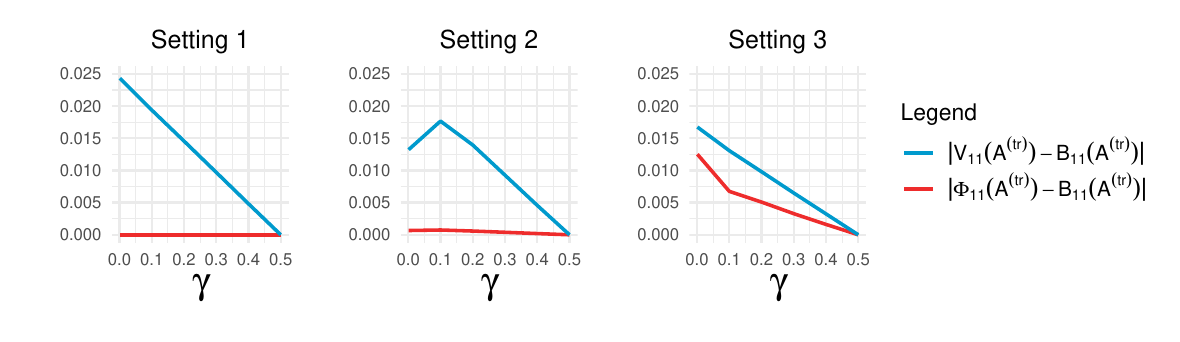}
    \caption{Simulations comparing $|V_{11}(\Atr) - B_{11}(\Atr)|$  (blue curves) and $|\Phi_{11}(\Atr) - B_{11}(\Atr)|$ (red curves) where $B_{k \ell}(\Atr)$, $V_{k \ell}(\Atr)$, and $\Phi_{k \ell}(\Atr)$ are defined in \eqref{eq:B_entry_definition}, \eqref{eq:Vkl},~and~\eqref{eq:Phi_kl_def} respectively, plotted over a range of $\gamma$. The networks have $n=100$ nodes, and results are averaged across 5,000 repetitions. \textit{Setting 1:} $M_{ij}=0.5$ for all $i$ and $j$. \textit{Setting 2:} The entries of $M$ belong to two equally-sized communities, where the intra-community entries of $M$ equal $0.6$ and the inter-community entries equal $0.4$. \textit{Setting 3:} Each entry of $M$ is drawn from a $\Unif(0,1)$ distribution.} 
    \label{fig:daniela}
\end{figure}

Unfortunately, inference on $V_{k \ell}(\Atr)$ does not directly enable inference on $B_{k \ell}(\Atr)$. To see why, following \citet{neufeld2025DiscussionDataFission}, for \(M_{ij} \in (0, 1)\),
 \begin{align}
   T_{ij}   
     &= \begin{cases}
        f(M_{ij},  \frac{1-\gamma}{\gamma} ), & \text{if } \Aijtr = 1, \\
       f(M_{ij},  \frac{\gamma}{1-\gamma} ), & \text{if } \Aijtr = 0,
    \end{cases}
    \label{eq:annas_insight_1}
\end{align}
where the function $f: (0,1) \times \mathbb{R}_+ \rightarrow (0,1)$ is defined as
\begin{equation}
    f(a,v) :=  \expit \left( \logit(a) + \log \left( v \right) \right),
    \label{eq:f_def_bernoulli}
\end{equation}
and has the property that $f(f(a,v), 1/v)=a$. Although \eqref{eq:annas_insight_1} reveals an invertible mapping between $T_{ij}$ and $M_{ij}$ when $\Aijtr$ is known, there is \textit{not} an invertible mapping between $V_{k \ell}(\Atr)$ from \eqref{eq:Vkl} and $B_{k \ell}(\Atr)$ from \eqref{eq:B_entry_definition}. Thus, inference on $V_{k \ell}(\Atr)$ does not enable inference on $B_{k \ell}(\Atr)$.

Therefore, we propose an alternative selected parameter for which inference \textit{is} possible, and we show that it is very close to $B_{k \ell}(\Atr)$. Toward this aim, define
\begin{equation}
    V_{k\ell}^{(0)}(\Atr)  := \frac{1}{|\Iczkl|} \sum_{(i,j) \in \Iczkl } T_{ij},
    ~~\text{and}~~V_{k\ell}^{(1)}(\Atr) := \frac{1}{|\Icokl|} \sum_{(i,j) \in \Icokl }
T_{ij}, \label{eq:V}
\end{equation}
where $\Ic_{k \ell}^{(0)} := \left\{(i,j) \in \Ickl : \Aijtr =
0 \right\}$ and $\Ic_{k \ell}^{(1)} := \left\{(i,j) \in \Ickl : \Aijtr = 1 \right\}$. Our next
result establishes that
   \begin{equation}\Phi_{k \ell}(\Atr) := \frac{|\Iczkl|}{|\Ickl|}
f\left(V_{k\ell}^{(0)}(\Atr), \frac{1-\gamma}{\gamma}\right) +
\frac{|\Icokl|}{|\Ickl|} f\left(V_{k\ell}^{(1)}(\Atr),
\frac{\gamma}{1-\gamma}\right)\label{eq:Phi_kl_def}\end{equation} is
close to $B_{k \ell}(\Atr)$; see also the red curves in
Figure~\ref{fig:daniela}. 

\begin{proposition}\label{prop:taylor}
    Consider $\Atr$ fixed and recall the definitions of $B_{k \ell}(\Atr)$ and $\Phi_{k \ell}(\Atr)$ in \eqref{eq:B_entry_definition} and \eqref{eq:Phi_kl_def} respectively, and define $B^{(s)}_{k \ell} := \frac{1}{|\Icskl|} \sum_{(i,j) \in \Icskl} M_{ij}$ for $s \in \{0,1\}$. 
    \begin{enumerate}[(a)]

        \item \label{prop:taylor_a} For some $t_0, t_1 \in (0,1)$, it holds that
        \begin{align*}
            \Phi_{k \ell}(\Atr) = B_{k \ell}(\Atr) + \sum_{s \in \{0,1\}} \frac{|\Icskl|}{|\Ickl|} \sum_{\substack{(i,j) \in \Icskl \\ (i',j') \in \Icskl}} (M_{ij} - B^{(s)}_{k \ell})(M_{i'j'}-B^{(s)}_{k \ell}) h^{(s)(k \ell)}_{iji'j'} (t_s),
        \end{align*}
        where $h^{(s)(k \ell)}_{iji'j'}$ is defined in \eqref{eq:h_function_prop6a_def} in Supplement~\ref{app:taylor_a_proof}.

        \item \label{prop:taylor_b} We have $\Phi_{k \ell}(\Atr) = B_{k \ell}(\Atr) + \left( 1 - \frac{\gamma}{1-\gamma} \right) \left( \frac{1}{|\Ickl|} (H^{(0)}_{k \ell} - H^{(1)}_{k \ell} )\right) + R_{k \ell}$, where for $s \in \{0,1\}$ we define $H^{(s)}_{k \ell} := \sum_{(i,j) \in \Icskl} (M_{ij} - \Bskl)^2$, and where $R_{k \ell} := \left( 1- \frac{\gamma}{1-\gamma} \right)^2 q_{k \ell}(\lambda_0, \lambda_1)$
        is a remainder term involving $\lambda_0, \lambda_1 \in \left[ \frac{\gamma}{1-\gamma}, 1 \right]$, where $q_{k \ell}$ is a continuous function defined in \eqref{eq:q_s_def_prop6b_proof} in Supplement~\ref{app:taylor_b_proof}.
    \end{enumerate}
\end{proposition}

\begin{remark}
    \label{remark:interpreting_taylor_a}
    Proposition \ref{prop:taylor}(\ref{prop:taylor_a}) implies that $\Phi_{k \ell}(\Atr)$ is close to $B_{k \ell}(\Atr)$ when the values in $\left\{M_{ij} ~:~ (i,j) \in \Icskl \right\}$ are close to their mean $\Bskl = \frac{1}{|\Icskl|} \sum_{(i,j) \in \Icskl} M_{ij}$ for $s \in \{0,1\}$. Indeed, $\Phi_{k \ell}(\Atr) = B_{k \ell}(\Atr)$ if $M_{ij} = \Bskl$ for all $(i,j) \in \Icskl$ for both $s = 0$ and $s=1$.
\end{remark}

\begin{remark}
    When $\gamma = 0.5$, we have $T_{ij} = f(M_{ij}, 1) = M_{ij}$, and so $\Phi_{k \ell}(\Atr) = B_{k \ell}(\Atr)$. Because $R_{k \ell}$ from Proposition \ref{prop:taylor}(\ref{prop:taylor_b}) is continuous in $\gamma$, this implies that $\Phi_{k \ell}(\Atr) \rightarrow B_{k \ell}(\Atr)$ as $\gamma \to 0.5$. However, setting $\gamma=0.5$ in practice is not recommended, as then $\Aijtr \ind \text{Bernoulli}(0.5)$, and so community estimation is conducted on pure noise.
 
     Proposition \ref{prop:taylor}(\ref{prop:taylor_b}) also suggests that $\lvert \Phi_{k \ell}(\Atr) - B_{k \ell}(\Atr)\rvert$ is smaller whenever \(\Hkl{0}\) and \(\Hkl{1}\) are of similar magnitude, where \(\Hkl{s}\) from Proposition~\ref{prop:taylor}(\ref{prop:taylor_b}) is interpreted as a measure of heterogeneity within $\left\{ M_{ij} ~:~ (i,j) \in \Icskl \right\}$. 
\end{remark}


As $\gamma \to 0.5$, less information is allocated to $\Atr$ and more information is allocated to $\Ate \mid \Atr$. Conversely, as $\gamma \to 0$, more information is allocated to $\Atr$. Proposition~\ref{prop:gamma_to_zero} establishes the behavior of the target of inference $\Phi_{k \ell}(\Atr)$ as $\gamma \to 0$: it converges to a weighted average of functions of the arithmetic and harmonic means of subsets of the $\frac{M_{ij}}{1-M_{ij}}$'s in the $(k, \ell)$th estimated community pairs. This quantity is related to $B_{k \ell}(A)$, which is the arithmetic mean of the elements of $M$ in the $(k, \ell)$th estimated community pair, when communities are estimated on the original network $A$.

\begin{proposition}\label{prop:gamma_to_zero} 
    Suppose that $M_{ij} \in (0,1)$ for all $i$ and $j$, and that for fixed $n$, the community detection algorithm $\hat{Z}(\cdot)$ is such that $\lim_{\gamma \to 0} P \left( \Ickl \ne \Ickltil \right) = 0$, where
    $\Ickltil := \left\{ (i,j): \hat{Z}(A)_{ik} =1, \hat{Z}(A)_{j \ell} = 1 \right\}$. Define \\
    $\Icskltil := \left\{ (i,j) \in \Ickltil: A_{ij} = s \right\}$ for $s \in \{0,1\}$,
    \begin{equation*}
        \tilde{\Phi}_{k \ell}(A) := \frac{|\Iczkltil|}{|\Ickltil|} \expit \left( \log \left( \Lambda^{(0)}_{k \ell} \right) \right) + \frac{|\Icokltil|}{|\Ickltil|} \expit \left( \log \left( \Lambda^{(1)}_{k \ell} \right) \right),
    \end{equation*}
    $\Lambda^{(0)}_{k \ell} := \frac{1}{|\Iczkltil|} \sum_{(i,j) \in \Iczkltil} \frac{M_{ij}}{1-M{ij}}$ (the arithmetic mean of the odds $\left\{ \frac{M_{ij}}{1-M_{ij}} ~:~ (i,j) \in \Iczkltil \right\}$), and $\Lambda^{(1)}_{k \ell} := \left( \frac{1}{|\Icokltil|} \sum_{(i,j) \in \Icokltil} \frac{1-M_{ij}}{M_{ij}} \right)^{-1}$ (the harmonic mean of the odds $\left\{ \frac{M_{ij}}{1-M_{ij}} ~:~ (i,j) \in \Icokltil \right\}$). Then, $\lim_{\gamma \to 0} P \left( \left| \Phi_{k \ell}(\Atr) - \tilde{\Phi}_{k \ell}(A) \right| > \epsilon \right) = 0$ for all $\epsilon > 0$.
\end{proposition}

As $\Phi_{k \ell}(\Atr)$ is approximately $B_{k \ell}(\Atr)$, we define our selected parameter to be
\begin{equation}
    \xi(\Atr) := u^\top \vct(\Phi(\Atr)),
    \label{eq:xi}
\end{equation}
where the $(k, \ell)$th entry of $\Phi(\Atr) \in \mathbb{R}^{K \times K}$ is defined in \eqref{eq:Phi_kl_def}, and where $u \in \mathbb{R}^{K^2}$ satisfies $\Vert u \Vert_2 = 1$ and may depend on $\Atr$. Our next result establishes that we can construct asymptotically valid confidence intervals for $\xi(\Atr)$.  

\begin{proposition}
    \label{prop:squiggle_estimation}
    Consider a sequence of $n \times n$ random adjacency matrices $(A_n)_{n=1}^\infty$, consisting of entries $A_{n,ij} \ind \Bern(M_{n,ij})$, where $0 < N_0 \le
    M_{n,ij} \le N_1 < 1$ holds for constants $N_0$ and $N_1$ not depending on $n$. Suppose
    that we fix $\gamma \in (0, 0.5)$ and construct $\Antr$ and $\Ante$ from $A$ as in Proposition \ref{prop:univariate_bernoulli_fission}, and then apply community detection to $\Antr$ to
    yield the estimated community membership matrix $\Zntr \in \{0,1\}^{n \times K}$.
    
    Define $\hat{\xi}_n(\Ante, \Antr) := u_n^\top \vct \left( \hat{\Phi}_n \left(\Ante,  \Antr \right) \right)$, where $u_n \in \mathbb{R}^{K^2}$ satisfies $\Vert u_n \Vert_2 = 1$, and may depend on $\Antr$ if desired, and where $\hat{\Phi}_n(\Ante, \Antr) \in \mathbb{R}^{K \times K}$ is defined entry-wise as
    \begin{equation*}
        \hat{\Phi}_{n, k \ell}(\Ante, \Antr) := \dfrac{|\Icnzkl|}{|\Icnkl|} \hat{V}^{(0)}_{n, k \ell}(\Ante, \Antr) + \dfrac{|\Icnokl|}{|\Icnkl|} \hat{V}^{(1)}_{n, k \ell}(\Ante, \Antr),
    \end{equation*}
    where $\Icnkl := \left\{ (i,j) : \Zniktr =1, \Znjltr = 1 \right\}$, $\Icnskl := \left\{ (i,j) \in \Icnkl : \Antrij = s \right\}$ for $s \in \{0, 1\}$, $\hat{V}^{(s)}_{n, k \ell}(\Ante, \Antr) := \frac{\hat{B}^{(s)}_{n, k \ell}}{\hat{B}^{(s)}_{n, k \ell} + (1 - \hat{B}^{(s)}_{n, k \ell}) e^{c^{(s)}}}$ for $c^{(0)} := \log(\gamma/(1-\gamma))$ and $c^{(1)} := \log((1-\gamma)/\gamma)$, and $\hat{B}^{(s)}_{n, k \ell} := \frac{1}{|\Icnskl|} \sum_{(i,j) \in \Icnskl} \Anteij$. Also define $\hat{\sigma}_n^2 := u_n^\top \diag(\vct(\hat{\Delta}_n)) u_n$ and
    \begin{equation*}
        \hat{\Delta}_{n, k \ell} := \sum_{s \in \{0,1\}} \frac{|\Icnskl|^2}{|\Icnkl|^2} \hat{\Delta}^{(s)}_{n, k \ell},~\text{where}~~\hat{\Delta}^{(s)}_{n, k \ell} := \frac{\hat{B}^{(s)}_{n,k \ell} (1-\hat{B}^{(s)}_{n,k \ell}) e^{2 c^{(s)}}}{|\Icnskl|((1-\hat{B}^{(s)}_{n,k \ell}) e^{c^{(s)}} + \hat{B}^{(s)}_{n,k \ell})^4}. 
    \end{equation*}
    Then, for $\xi_n(\Antr)$ defined in \eqref{eq:xi}, we have that
    \begin{align*}
        \liminf_{n \to \infty} P \Bigg( \xi(\Antr) \in \Big[ \hat{\xi}(\Ante, \Antr) \pm \phi_{1-\alpha/2} \cdot \hat{\sigma}_n \Big] ~\Bigg|~ \Antr \Bigg) \ge 1-\alpha,
    \end{align*}
    provided that the sequence of realizations $\left\{\Antr = \antr \right\}_{n=1}^\infty$ and $\left\{ \Zntr = \hat{z}_n \right\}_{n=1}^\infty$ are such that $|\Icnzkl|^{-1} = O(n^{-2})$ and $|\Icnokl|^{-1} = O(n^{-2})$ for all $k, \ell \in \{1, 2, \dots, K\}$.
\end{proposition}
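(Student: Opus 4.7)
My plan is to condition on $\Antr$ (and hence on $\Zntr$, which is a deterministic function of $\Antr$) and exploit the fact that by Proposition~\ref{prop:univariate_bernoulli_fission}, the entries $\Anteij$ are then conditionally independent $\Bern(T_{n,ij})$, with $T_{n,ij}$ bounded uniformly away from $0$ and $1$ thanks to the uniform bounds $0 < N_0 \le M_{n,ij} \le N_1 < 1$ and $\gamma \in (0, 0.5)$. Since each edge $(i,j)$ belongs to exactly one $\Icnskl$, the cell averages $\hat{B}^{(s)}_{n, k \ell} = |\Icnskl|^{-1} \sum_{(i,j) \in \Icnskl} \Anteij$ are conditionally independent across $(s, k, \ell)$, each being a mean of bounded independent Bernoullis with conditional mean $V^{(s)}_{n, k \ell} := |\Icnskl|^{-1} \sum_{(i,j) \in \Icnskl} T_{n,ij}$. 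Combined with the cell-size condition $|\Icnzkl|^{-1}, |\Icnokl|^{-1} = O(n^{-2})$, Lyapunov's CLT gives joint asymptotic normality for the collection $\{\hat{B}^{(s)}_{n, k \ell}\}$ around $\{V^{(s)}_{n, k \ell}\}$.

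Second, I write $\hat{V}^{(s)}_{n, k \ell} = f(\hat{B}^{(s)}_{n, k \ell}, e^{-c^{(s)}})$ using the function $f$ defined in \eqref{eq:f_def_bernoulli}, and note that $\hat{\Phi}_{n, k \ell}$ is a deterministic (given $\Antr$) linear combination of $\hat{V}^{(0)}_{n, k \ell}$ and $\hat{V}^{(1)}_{n, k \ell}$. A direct computation gives $\partial_a f(a, v) = f(a,v)(1-f(a,v))/(a(1-a))$, and so the multivariate delta method, applied to the linear functional $u_n^\top \vct(\hat{\Phi}_n)$, yields that $\hat{\xi}_n - \xi_n(\Antr)$ is asymptotically normal with mean zero and conditional variance $\sigma_n^2$ of the form $\sum_{k \ell} u_{n, k \ell}^2 \sum_s (|\Icnskl|/|\Icnkl|)^2 \cdot [\partial_a f]^2 \big|_{V^{(s)}_{n, k \ell}} \cdot |\Icnskl|^{-2} \sum_{(i,j) \in \Icnskl} T_{n,ij}(1 - T_{n,ij})$. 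The diagonal structure of $\hat{\Delta}_n$ is justified because distinct cells $(k, \ell)$ and distinct values of $s$ involve disjoint sets of $\Anteij$ and hence contribute conditionally independent fluctuations.

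The main obstacle, and the reason the conclusion is $\liminf \ge 1-\alpha$ rather than a limit equality, lies in relating the plug-in estimator $\hat{\sigma}_n^2$ to the true conditional variance $\sigma_n^2$. The plug-in $\hat{B}^{(s)}_{n, k \ell}(1-\hat{B}^{(s)}_{n, k \ell})/|\Icnskl|$ consistently estimates $V^{(s)}_{n, k \ell}(1-V^{(s)}_{n, k \ell})/|\Icnskl|$, whereas the true conditional variance of $\hat{B}^{(s)}_{n, k \ell}$ is the smaller quantity $|\Icnskl|^{-2} \sum T_{n, ij}(1-T_{n, ij})$. Concavity of $x \mapsto x(1-x)$ yields the Jensen-type inequality $|\Icnskl|^{-1} \sum T_{n, ij}(1-T_{n, ij}) \le V^{(s)}_{n, k \ell}(1-V^{(s)}_{n, k \ell})$, and after propagating through the $(\partial_a f)^2$ factor this shows that $\hat{\Delta}^{(s)}_{n, k \ell}$ (and hence $\hat{\sigma}_n^2$) dominates the true variance asymptotically, with the derivatives remaining uniformly bounded thanks to the uniform interior bounds on $T_{n,ij}$. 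Combining the CLT for $(\hat{\xi}_n - \xi_n)/\sigma_n$ with the inequality $\hat{\sigma}_n \ge \sigma_n$ (holding in probability in the limit) via Slutsky's theorem, I get $\liminf_n P(|\hat{\xi}_n - \xi_n| \le \phi_{1-\alpha/2} \hat{\sigma}_n \mid \Antr) \ge \lim_n P(|\hat{\xi}_n - \xi_n| \le \phi_{1-\alpha/2} \sigma_n \mid \Antr) = 1-\alpha$, completing the proof.
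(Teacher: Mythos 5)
Your proposal is correct and follows essentially the same route as the paper's proof: condition on $\Antr$, apply a CLT to the conditionally independent cell averages $\hat{B}^{(s)}_{n,k\ell}$ centered at $B^{(s)}_{n,k\ell} = |\Icnskl|^{-1}\sum_{(i,j)\in\Icnskl} T_{n,ij}$, push through the expit--logit map via a delta-method (the paper uses an explicit mean-value expansion of $h^{(s)}_{k\ell}$), and obtain the one-sided $\liminf$ from exactly the Jensen-type inequality you identify, $|\Icnskl|^{-1}\sum T_{n,ij}(1-T_{n,ij}) \le B^{(s)}_{n,k\ell}(1-B^{(s)}_{n,k\ell})$, combined with consistency of the plug-in for the conservative variance. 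The only substantive difference is that the paper spends considerable effort on subsequence (Bolzano--Weierstrass) arguments to justify the delta method and Slutsky steps when $u_n$, the cell proportions $|\Icnskl|/|\Icnkl|$, and the pseudo-true values need not converge; your sketch takes that uniformity for granted, but the idea is sound.
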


\begin{corollary}
    \label{cor:exact_coverage_bernoulli_target}
    Under the conditions of Proposition~\ref{prop:squiggle_estimation}, if the sequence of realizations $\left\{\Antr = \antr\right\}_{n=1}^{\infty}$ and $\left\{\Zntr = \hat{Z}_n\right\}_{n=1}^{\infty}$ are such that there exists an $N$ such that for all $n \ge N$, the set $\left\{ M_{n,ij} ~:~ (i,j) \in \Ickl \right\}$ is constant for all $(k, \ell)$ where the corresponding entry of $u_n \in \mathbb{R}^{K^2}$ is nonzero, then $\lim_{n \to \infty} P \Bigg( \theta(\Antr) \in \Big[ \hat{\xi}(\Ante, \Antr) \pm \phi_{1-\alpha/2} \cdot \hat{\sigma}_n \Big] ~\Bigg|~ \Antr \Bigg) = 1-\alpha$, where $\theta_n(\Antr)$ is the original selected parameter defined in \eqref{eq:target_of_inference}.
\end{corollary}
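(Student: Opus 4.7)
The plan is to chain two observations: under the corollary's homogeneity condition, (i) the surrogate parameter $\xi_n(\Antr)$ and the original target $\theta_n(\Antr)$ coincide exactly, and (ii) the $\liminf \ge 1 - \alpha$ in Proposition~\ref{prop:squiggle_estimation} tightens to $\lim = 1 - \alpha$. Both observations rely on the fact that, whenever $u_{n,k\ell} \ne 0$, the assumed constancy of $\{M_{n,ij} : (i,j) \in \Icnkl\}$ carries over to the sub-partition, so $\{M_{n,ij} : (i,j) \in \Icnskl\}$ is constant (and equal to $B^{(s)}_{n,k\ell}$) for each $s \in \{0,1\}$.

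For step (i), I would invoke Proposition~\ref{prop:taylor}(\ref{prop:taylor_a}) entry-wise. On every $(k,\ell)$ for which $u_{n,k\ell} \ne 0$ and every $s \in \{0,1\}$, each factor $(M_{n,ij} - B^{(s)}_{n,k\ell})(M_{n,i'j'} - B^{(s)}_{n,k\ell})$ in the remainder identity vanishes, so $\Phi_{n,k\ell}(\Antr) = B_{n,k\ell}(\Antr)$. Entries at which $u_n$ vanishes contribute nothing to either $\xi_n = u_n^\top \vct(\Phi_n(\Antr))$ or $\theta_n = u_n^\top \vct(B_n(\Antr))$, so $\xi_n(\Antr) = \theta_n(\Antr)$ for all $n \ge N$. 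Consequently, for such $n$, the event that the confidence interval covers $\xi_n(\Antr)$ is exactly the event that it covers $\theta_n(\Antr)$.

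For step (ii), I would re-examine the proof of Proposition~\ref{prop:squiggle_estimation}, which proceeds via a conditional CLT for each $\hat{B}^{(s)}_{n,k\ell}$ followed by the delta method applied to $g(b) = b/[b + (1-b)e^{c^{(s)}}]$ (i.e., to $f(\cdot,e^{-c^{(s)}})$). The exact conditional variance of $\hat{B}^{(s)}_{n,k\ell}$ given $\Antr$ is $|\Icnskl|^{-2} \sum_{(i,j) \in \Icnskl} T_{n,ij}(1 - T_{n,ij})$, whereas the plug-in estimator uses the surrogate $\hat{B}^{(s)}_{n,k\ell}(1 - \hat{B}^{(s)}_{n,k\ell})/|\Icnskl|$; the two differ by the Jensen gap
\[
V^{(s)}_{n,k\ell}(1 - V^{(s)}_{n,k\ell}) \;-\; \frac{1}{|\Icnskl|}\!\!\sum_{(i,j) \in \Icnskl}\!\! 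T_{n,ij}(1-T_{n,ij}) \;=\; \frac{1}{|\Icnskl|}\!\!\sum_{(i,j) \in \Icnskl}\!\! (T_{n,ij} - V^{(s)}_{n,k\ell})^2 \;\ge\; 0,
\]
which is the lone source of slack in the bound. Under our hypothesis, each $T_{n,ij}$ is constant on $\Icnskl$ for those $(k,\ell)$ entering $u_n$, so this variance vanishes, $\hat{\sigma}_n^2$ is asymptotically equivalent to the exact conditional variance of $\hat{\xi}_n$, and Slutsky combined with the delta-method CLT yields $\lim_{n \to \infty} P(\xi_n(\Antr) \in [\hat{\xi}_n(\Ante,\Antr) \pm \phi_{1-\alpha/2}\hat{\sigma}_n] \mid \Antr) = 1 - \alpha$. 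Composing with step (i) delivers the corollary.

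The main obstacle is verifying in step (ii) that no other inequality in the proof of Proposition~\ref{prop:squiggle_estimation} (for instance in controlling a delta-method remainder, or in an auxiliary bound on a variance ratio) contributes to the slack; if so, those inequalities would also need to collapse under the homogeneity assumption, or one would instead need a direct CLT argument that exploits constant $T_{n,ij}$ within $\Icnskl$ from the outset to produce an exact (rather than conservative) variance approximation.
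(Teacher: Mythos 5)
Your proposal is correct and follows essentially the same route as the paper's proof: Proposition~\ref{prop:taylor}(\ref{prop:taylor_a}) gives $\xi_n(\Antr)=\theta_n(\Antr)$ on the relevant entries, and the constancy of the $T_{n,ij}$ on each $\Icnskl$ collapses the Jensen bound $\tau^{(s)}_{n,k\ell}\le\tilde{\tau}^{(s)}_{n,k\ell}$ to an equality, turning the $\liminf$ into a limit. The concern you raise at the end is resolved in the paper exactly as you suspect: that Jensen step (propagating through \eqref{eq:sigma_sq_def_squiggle_proof} and \eqref{ineq:almost_done_squiggle_xi_thing}) is the only source of slack in the proof of Proposition~\ref{prop:squiggle_estimation}, so no further argument is needed.
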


See Supplement~\ref{appendix:numerical_issues_squiggle_estimation} for a discussion of numerical issues and suggested fixes when applying Proposition~\ref{prop:squiggle_estimation}.

\section{Simulation study} \label{sec:simulation}

\subsection{Data generation and simulation design}
\label{subsec:simulation_methods}

For networks with Gaussian, Poisson, and Bernoulli edges, we simulate $A$ from an SBM of size $n \times n$ with  $\Ktr$ equally-sized unknown communities, and with a  $\Ktr \times \Ktr$ mean matrix
$C$ that takes value \(\rho_{1}\) along its diagonal, and \(\rho_{2}\)
everywhere else. So, $\rho_1$ is the intra-community connectivity, $\rho_2$ is the inter-community connectivity, and $|\rho_1 - \rho_2|$ is a measure of the separation between communities \citep{abbe2018CommunityDetectionStochastic}.

For networks with Bernoulli edges (Section~\ref{subsec:simulation_results_bernoulli}), we consider an additional setting where each $M_{ij} := \E[A_{ij}]$ is drawn independently from a $\Unif(0,1)$ distribution. In light of Proposition~\ref{prop:taylor}(\ref{prop:taylor_a}) and Remark~\ref{remark:interpreting_taylor_a}, this is an unfavorable scenario for our proposed selected parameter $\xi(\Atr)$ in \eqref{eq:xi}, in the sense that it will be far from $\theta(\Atr)$ in \eqref{eq:target_of_inference}.

Considering the proposal of this paper summarized by Algorithm~\ref{alg:general_algorithm}, we (i) split $A$ into $\Atr$ and $\Ate$ following Proposition~\ref{prop:univariate_gaussian_thinning} or \ref{prop:univariate_poisson_thinning} with parameter $\epsilon \in (0,1)$ (Gaussian or Poisson edges), or
Proposition \ref{prop:univariate_bernoulli_fission} (Bernoulli edges) with parameter $\gamma \in (0,0.5)$; (ii) use $\Atr$ to estimate communities $\Ztr \in \{0, 1\}^{n \times
  K}$ (where $K$ is varied may not equal $\Ktr$), and
define the selected parameter as $\theta(\Atr)$ in \eqref{eq:target_of_inference} for Gaussian or
Poisson edges and $\xi(\Atr)$ in \eqref{eq:xi} for Bernoulli edges, where $u = (1, 0, 0, \dots, 0)^{\top}$. This choice of $u$ simplifies to conducting inference for $B_{11}(\Atr)$ for Gaussian or Poisson edges, and $\Phi_{11}(\Atr)$ for Bernoulli edges, so we refer to the selected parameters as $B_{11}(\Atr)$ and $\Phi_{11}(\Atr)$. Finally, we (iii) apply one of Propositions \ref{prop:normal_estimation},
\ref{prop:poisson_estimation}, or \ref{prop:squiggle_estimation} to construct confidence intervals for the selected parameter.

We compare the proposed methods to a ``naive'' one, in which we (i) estimate communities  $\hat{Z} \in \mathbb{R}^{n\times K}$ with $A$; and then (ii) also use $A$ to build confidence intervals for  
\begin{equation}
    \theta(A) := u^\top \vct \left( \left( \hat{Z}^\top \hat{Z} \right)^{-1} \hat{Z}^\top \E[A] \hat{Z} \left( \hat{Z}^\top \hat{Z} \right)^{-1} \right),
    \label{eq:naive_target}
\end{equation}
without accounting for the use of the network in both community estimation and the construction of confidence intervals. Once again, we set $u = (1, 0, 0, \dots, 0)^{\top}$ and simplify the notation to $B_{11}(A)$. Details of the naive confidence intervals are in Supplement \ref{appendix:naive-cis}.

In all simulations, communities are estimated with spectral clustering using the proposal of  \citet{amini2013PseudolikelihoodMethodsCommunity} as implemented in the \verb|nett| R package \citep{aminiNettNetworkAnalysis}. Additional simulation details can be found in Supplement~\ref{appendix:sim-details}.

\subsection{Results for networks with Gaussian and Poisson edges}
\label{subsec:simulation_results_gaussian_poisson}

We simulate 5,000 networks with $n=200$, $\rho_1=30$, $\rho_2=27$, $\Ktr=5$, and we vary the value of $K$. For networks with Gaussian edges, we set the known variance to be $\tau^2 = 25$.
For each simulated dataset, we construct confidence intervals for $B_{11}(\Atr)$ or $B_{11}(A)$ using the
proposed and naive methods, respectively, as described in
Section~\ref{subsec:simulation_methods}. The left-hand panels of Figures~\ref{fig:conf_width_rand_gaussian}~and~\ref{fig:conf_width_rand} display the empirical versus nominal coverages of the confidence intervals for Gaussian and Poisson edges respectively. Even when $K$ is not equal to $\Ktr$, the proposed approach achieves the nominal coverage, whereas the naive method does not. In Supplement~\ref{appendix:extra-simulations}, we show similar results as $n$ varies.

Next, we simulate 5,000 networks with $n=200$, $K=5$, $\Ktr=5$, and $\rho_1=30$. The center and right-hand panels of Figures~\ref{fig:conf_width_rand_gaussian}~and~\ref{fig:conf_width_rand} display the average confidence interval width of the proposed method for $B_{11}(\Atr)$ and the average adjusted Rand index \citep{hubert1985ComparingPartitions} between the estimated and true community memberships, as $\rho_2$ and $\epsilon$ are varied. As $\epsilon$ increases, more information is allocated to estimating communities, and less is allocated to inference, leading to an improved adjusted Rand index but wider confidence intervals.
Furthermore, the adjusted Rand index grows with $\left\lvert \rho_{1} - \rho_{2} \right\rvert$, as community detection is easier when $\left\lvert \rho_1 - \rho_2 \right\rvert$ is large. The naive approach is not displayed in the center and right-hand panels of Figures~\ref{fig:conf_width_rand_gaussian}~and~\ref{fig:conf_width_rand}, as the left-hand panels of the figures indicate that it does not achieve the nominal coverage.

Supplement~\ref{appendix:extra-simulations} contains additional simulations where prior to simulating each network, we draw $M_{ij} \ind \Unif(0, 20)$, showing that the proposed approach achieves valid coverage for the selected parameter even when there are no true communities.

\begin{figure}[!htbp]
    \centering
    \fbox{
    \includegraphics[height=3.2cm]{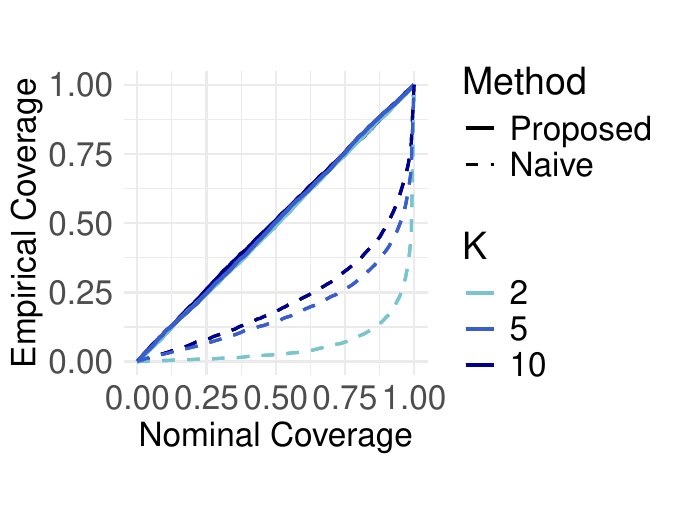}
    }
    \fbox{
    \includegraphics[height=3.2cm]{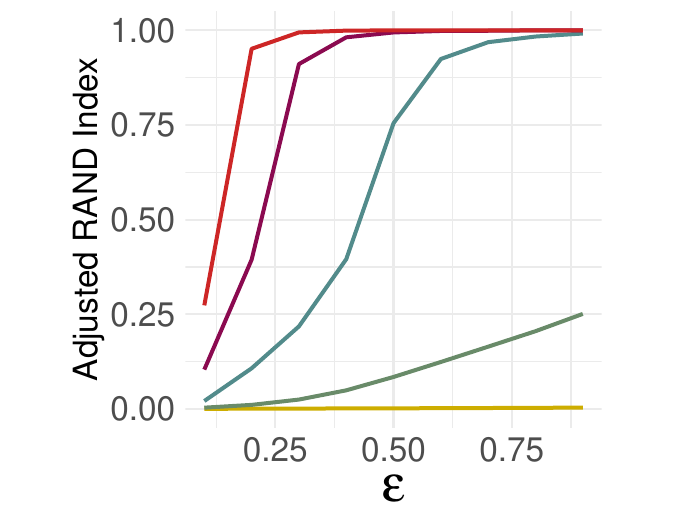}
    \includegraphics[height=3.2cm]{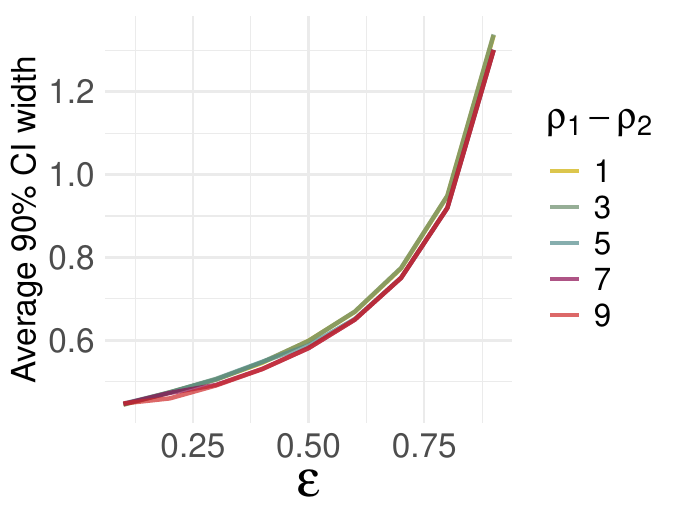}
    }
    \caption{Results for Gaussian edges, averaged over 5,000 simulated networks. \emph{Left:} Empirical versus nominal coverage of the confidence intervals for $B_{11}(\Atr)$ (proposed approach as described in Proposition~\ref{prop:poisson_estimation}) or $B_{11}(A)$ (naive approach as described in Supplement~\ref{appendix:naive-cis}), with $n=200$, $\Ktr=5$, $\rho_1=30$, $\rho_2=27$, $\tau^2 = 25$, and $\epsilon=0.5$ for the proposed approach. \emph{Center and Right:} Average adjusted Rand index between true and estimated communities, and average 90\% confidence interval width, as a function of $\epsilon$, for the proposed approach on networks with $n=200$, $K = 5$, $\Ktr = 5$, $\rho_1 = 30$, and $\tau^2 = 25$. \label{fig:conf_width_rand_gaussian}}
\end{figure}
\begin{figure}[!htbp]
    \centering
    \fbox{
    \includegraphics[height=3.2cm]{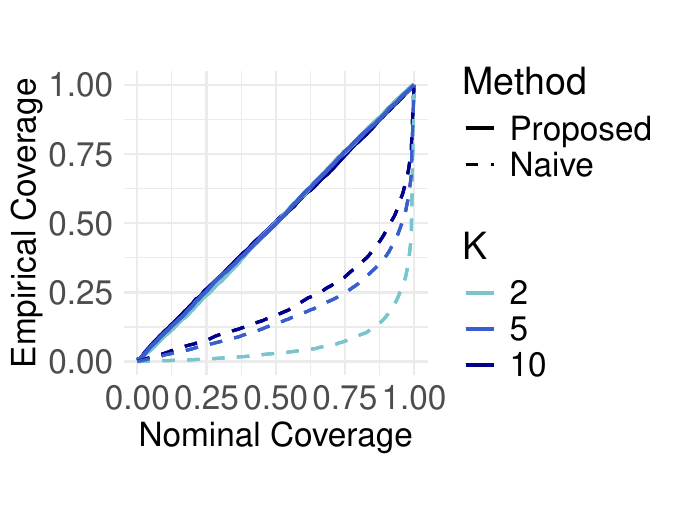}
    }
    \fbox{
    \includegraphics[height=3.2cm]{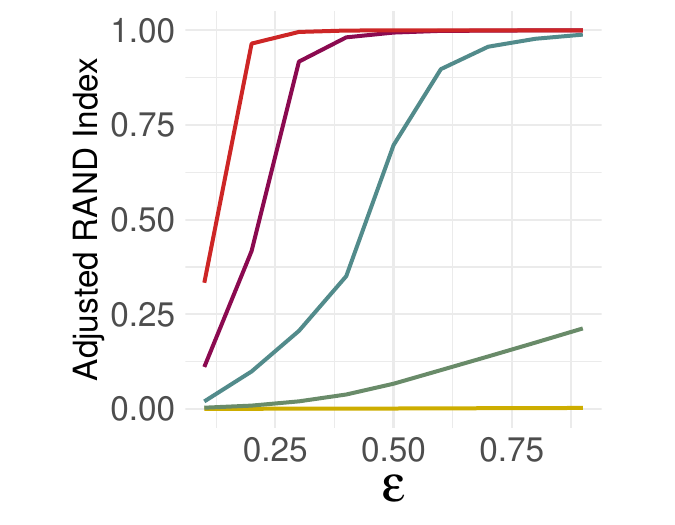}
    \includegraphics[height=3.2cm]{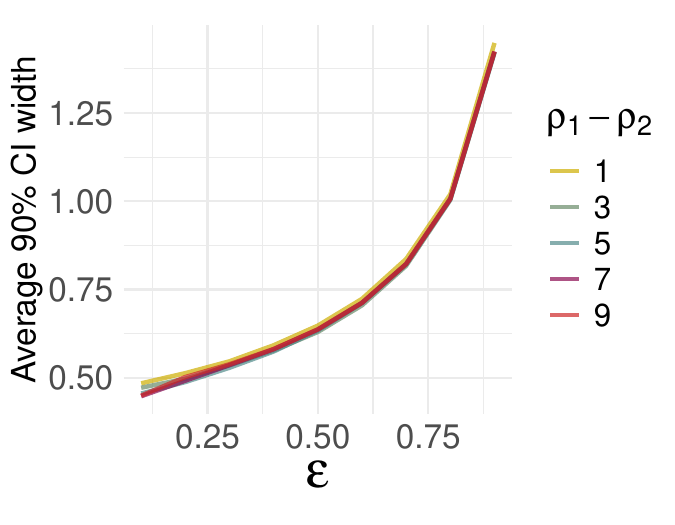}
    }
    \caption{Results for Poisson edges. All other details are the same as Figure~\ref{fig:conf_width_rand_gaussian}.}
    \label{fig:conf_width_rand}
\end{figure}

\subsection{Results for networks with Bernoulli edges}
\label{subsec:simulation_results_bernoulli}

We simulate 5,000 networks with Bernoulli edges, with $\rho_1 = 0.75$, $\rho_2 = 0.50$, $\Ktr = 5$, and we vary the value of $K$. For each simulated network, we consider both the proposed method and the naive method as described in Section~\ref{subsec:simulation_methods}. When employing the proposed method, we use Proposition~\ref{prop:squiggle_estimation} to construct confidence intervals targeting $\Phi_{11}(\Atr)$, but report the coverage for $B_{11}(\Atr)$, where we remind the reader that the latter is the ultimate parameter of interest (see Section~\ref{sec:bernoulli_edges}). The left-hand panel of Figure~\ref{fig:conf_width_rand_bernoulli} displays the empirical versus nominal coverages of the proposed and naive methods for networks with Bernoulli edges with $n = 200$, $\Ktr = 5$, and $(\rho_1, \rho_2) = (0.75, 0.5)$, and $\gamma = 0.25$. The proposed approach achieves the nominal coverage for $B_{11}(\Atr)$, even though the coverage guarantee is given for the related quantity $\Phi_{11}(\Atr)$. The naive method does not achieve the nominal coverage.

The center and right-hand panels of Figure~\ref{fig:conf_width_rand_bernoulli} show the results for a simulation setting where $n=200$, $K=5$, $\Ktr=5$, and $\rho_1=0.75$, as we vary the values of $\rho_2$ and $\gamma$. The center and right-hand panels display the average adjusted Rand index between the estimated and true community memberships, and the average confidence interval width, respectively. As $\gamma$ increases, less information is allocated to $\Atr$, and more information is available for inference, leading to a decrease in adjusted Rand index but narrower confidence intervals. Furthermore, the adjusted Rand index is larger when $\left\lvert\rho_1 - \rho_2\right\rvert$ is high, due to the increased separation between communities. The naive approach is not displayed in the center and right-hand panels of Figure~\ref{fig:conf_width_rand_bernoulli}, as the left-hand panel of Figure~\ref{fig:conf_width_rand_bernoulli} indicates that it does not achieve the nominal coverage.
\begin{figure}[!htbp]
    \centering
    \fbox{
    \includegraphics[height=3.2cm]{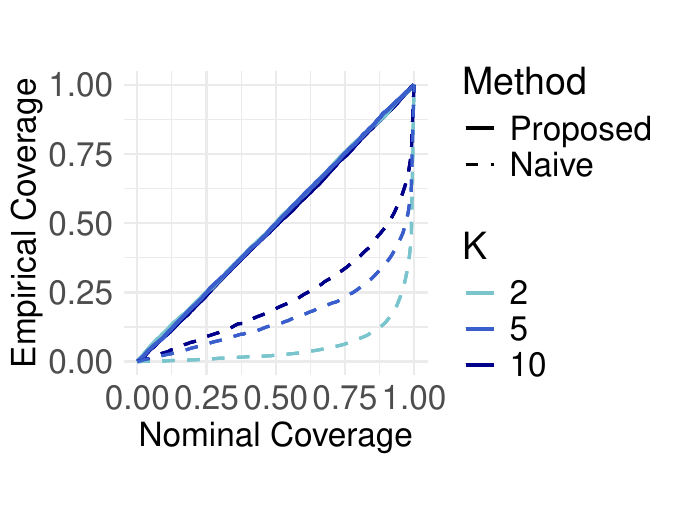}
    }
    \fbox{
    \includegraphics[height=3.2cm]{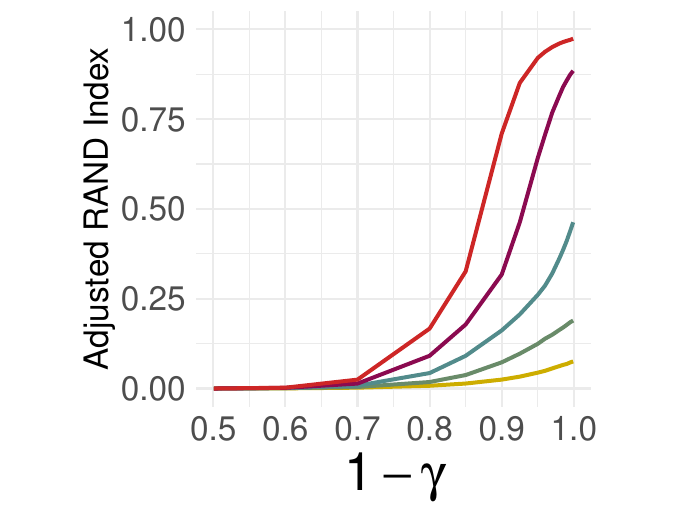}
    \includegraphics[height=3.2cm]{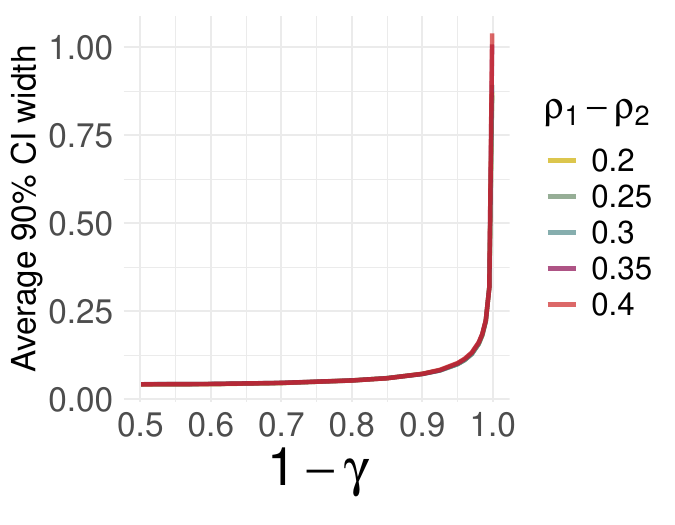}
    }
    \caption{Results for Bernoulli edges, averaged over 5,000 simulated networks. \textit{Left:} Empirical versus nominal coverage of the confidence intervals for $B_{11}(\Atr)$ (proposed approach targeting $\Phi_{11}(\Atr)$ as described in Proposition~\ref{prop:squiggle_estimation}), or $B_{11}(A)$ (naive approach as described in Supplement~\ref{appendix:naive-cis}), with $n = 200$, $\Ktr = 5$, $\rho_1 = 0.75$, $\rho_2 = 0.5$, and $\gamma = 0.25$ for the proposed approach. \textit{Center and Right:} Average adjusted Rand index between true and estimated communities, and average 90\% confidence interval width, as a function of $1-\gamma$, for the proposed approach on networks with $n = 200$, $K = 5$, $\Ktr = 5$, and $\rho_1 = 0.75$.}
    \label{fig:conf_width_rand_bernoulli}
\end{figure}

We also simulate data as in Setting~3 from Figure~\ref{fig:daniela} with $n=200$. Before simulating each network, we first draw $M_{ij} \ind \text{Uniform}(0,1)$. In light of Proposition~\ref{prop:taylor}(\ref{prop:taylor_a}), in this setting $\Phi_{11}(\Atr)$ will not approximate $B_{11}(\Atr)$ well. Figure~\ref{fig:conf_width_rand_bernoulli_maximum_heterogeneity} displays results averaged over 5,000 simulated networks. In the left panel of Figure~\ref{fig:conf_width_rand_bernoulli_maximum_heterogeneity}, we see that even in this unfavorable scenario, the empirical coverage of $B_{11}(\Atr)$ is still quite close to the nominal coverage.

\begin{figure}[!htbp]
    \centering
    \includegraphics[height=4.1cm]{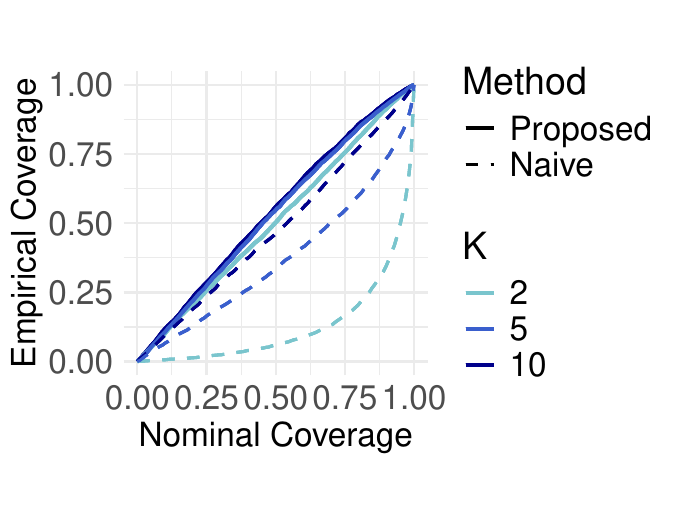}
    \includegraphics[height=4.1cm]{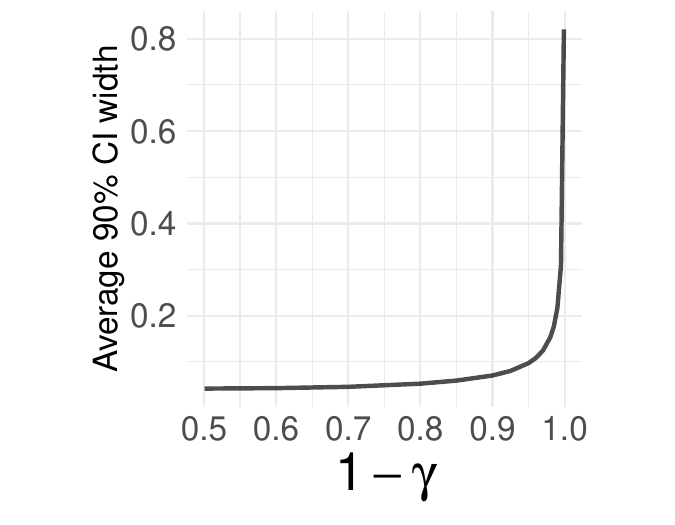}
    \caption{Results for Bernoulli edges, averaged over 5,000 simulated networks, where for each simulated network we draw $M_{ij} \ind \Unif(0,1)$. \textit{Left:} Empirical versus nominal coverage of the confidence intervals for $B_{11}(\Atr)$ (proposed approach targeting $\Phi_{11}(\Atr)$ as described in Proposition~\ref{prop:squiggle_estimation}), or $B_{11}(A)$ (naive approach as described in Supplement~\ref{appendix:naive-cis}), with $n = 200$, $\gamma = 0.25$, and where $K$ is varied. \textit{Right:} Average 90\% confidence interval width, as a function of $\gamma$, for the proposed approach on networks with $n=200$ and $K=5$.}
    \label{fig:conf_width_rand_bernoulli_maximum_heterogeneity}
\end{figure}

\section{Application to dolphin relationship network}
\label{sec:dolphins_application}

We apply the methods of this paper to a network of relationships among 62 bottlenose dolphins observed in Doubtful Sound in New Zealand \citep{lusseau2003bottlenose}. Relationships were observed among a closed population of dolphins living in a geographically isolated fjord at the southern extreme of the species' range. The data consists of an undirected Bernoulli network without self-loops, encoded as an adjacency matrix $A \in \{0,1\}^{62 \times 62}$ where $A_{ij} = A_{ji} = 1$ whenever the $i$th and $j$th dolphins were observed to consistently associate with each other over the study period. The adjacency matrix and a schematic of the corresponding network are displayed in Figures~\ref{fig:dolphins_cartoon_adjacency_matrix} and \ref{fig:dolphins_network_cartoon} respectively, in Supplement~\ref{appendix:dolphin_visuals}. The data is accessible in \texttt{R} through the \verb|manynet| package \citep{manynet2025}.

We apply the methods of this paper to this network to investigate whether the connectivity within the estimated communities exceeds the connectivity between them.

Recall from Proposition \ref{prop:univariate_bernoulli_fission} that in Bernoulli fission, the nonnegative parameter $\gamma \in (0, 0.5)$ trades off the information available for community estimation versus inference. For the observed network $A$, we define $\Atr_\gamma$ to be the train network arising from Bernoulli fission for a given $\gamma$. So, when $\gamma$ is small, more of the information in $A$ is allocated to $\Atr_\gamma$, and when $\gamma$ is closer to $0.5$, more information is allocated to $\Ate \mid \Atr_\gamma$, where $\Ate := A$.

For a range of values of $\gamma$, we apply spectral clustering \citep{amini2013PseudolikelihoodMethodsCommunity} to $\Atr_\gamma$ to estimate two communities, $\hat{Z}(\Atr_\gamma) \in \{0,1\}^{62 \times 2}$. The left-hand panel of Figure~\ref{fig:dolphins_fig} shows the adjusted Rand index averaged over 500 iterations of Bernoulli fission for each value of \(\gamma\), for the agreement between $\hat{Z}(\Atr_\gamma)$ and $\hat{Z}(A)$, the set of communities estimated from the original observed network. The agreement between $\hat{Z}(\Atr_\gamma)$ and $\hat{Z}(A)$ decreases when $\gamma$ increases, as less information is allocated to the train set $\Atr_\gamma$.

We define the selected parameter $\theta(\Atr_\gamma)$ to be the difference between the mean connectivity within the estimated communities and the mean connectivity between the estimated communities, which we can interpret as a measure of the absolute separation between the estimated communities. That is, recalling the definition of $B_{k \ell}(\Atr_\gamma)$ in \eqref{eq:B_entry_definition}, we set $\theta(\Atr_{\gamma}) := \left( B_{11}(\Atr_{\gamma}) + B_{22}(\Atr_{\gamma}) - 2 B_{12}(\Atr_{\gamma}) \right)/\sqrt{6}$.

So, if $\theta(\Atr_{\gamma}) = 0$, then the mean connectivities within and between the estimated communities are equal. As discussed in Section~\ref{sec:bernoulli_edges}, we cannot conduct inference for $\theta(\Atr_\gamma)$ directly, so recalling the definition of $\Phi_{k \ell}(\Atr_\gamma)$ in \eqref{eq:Phi_kl_def}, we instead target \\ $\xi(\Atr_{\gamma}) := \left( \Phi_{11}(\Atr_{\gamma}) + \Phi_{22}(\Atr_{\gamma}) - 2\Phi_{12}(\Atr_{\gamma}) \right)/\sqrt{6}$ using Proposition \ref{prop:undirected_squiggle_estimation} from Supplement~\ref{app:undirected} (a variant of Proposition \ref{prop:squiggle_estimation} for undirected Bernoulli networks without self-loops).

The right-hand panel of Figure \ref{fig:dolphins_fig} displays the midpoint, as well as the lower and upper bounds, of the $90\%$ confidence intervals for $\xi(\Atr_\gamma)$, averaged over 500 iterations of Bernoulli fission for each value of $\gamma$. When $\gamma$ is far from $0.5$ and slightly away from $0$, the confidence intervals do not contain zero and are positive. That is, there is evidence that the mean connectivity within exceeds the mean connectivity between the estimated communities. When $\gamma$ is near $0.5$, little information is allocated to $\Atr_\gamma$, and so community estimation is noisy, which leads to decreased separation between the estimated communities. On the other hand, when $\gamma$ is close to $0$, little information is allocated to $\Ate \mid \Atr_{\gamma}$, and so although the midpoints of the confidence intervals are far from $0$, the confidence intervals widen dramatically and contain $0$.

As in sample splitting, there is a trade-off between the information allocated to selection (here: community estimation) and inference. Obtaining confidence intervals that do not contain $0$ requires both good community recovery \textit{and} having sufficient remaining information for inference.

\begin{figure}[!htbp]
    \centering
    \fbox{
    \includegraphics[height=3.2cm]{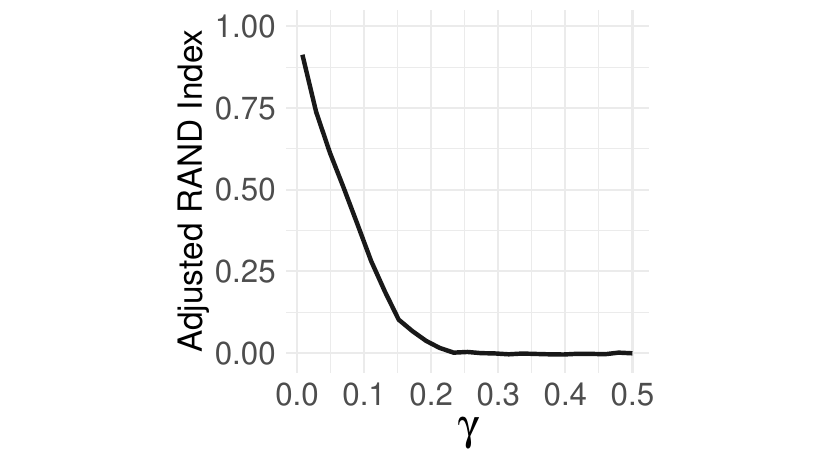}}
    \fbox{
    \includegraphics[height=3.2cm]{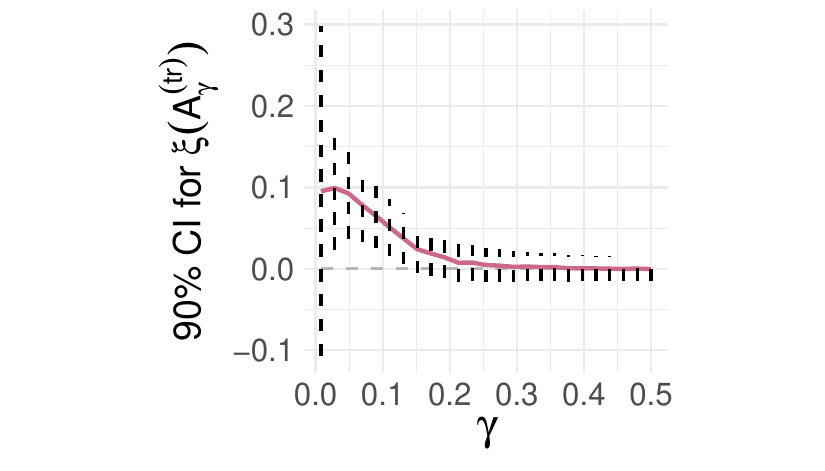}
    }
    \caption{Analysis of the dolphin relationship network as a function of $\gamma$, averaged over $500$ iterations of Bernoulli fission from Proposition~\ref{prop:univariate_bernoulli_fission} for each value of $\gamma$.  \emph{Left:} The adjusted Rand index of $\hat{Z}(\Atr_\gamma)$ compared to $\hat{Z}(A)$. As $\gamma$ increases, less information is allocated for $\Atr_\gamma$, so community estimation suffers. \emph{Right:} The midpoint (red line) and bounds (dashed lines) of a 90\% confidence interval for $\xi(\Atr_{\gamma})$ as a function of $\gamma$. \label{fig:dolphins_fig}}
\end{figure}

\section{Discussion} \label{sec:discussion} 

A primary challenge in contemporary data analysis pipelines is that of validating or conducting formal inference on data-driven parameters. In this work, we address that problem in the context of network models using ideas recently proposed in the selective inference literature: namely, data thinning and data fission. This yields valid inference on data-driven parameters in the presence of a single realization of a network. 

In contrast with much of the existing network literature, our proposed approach does not require the assumption of ``true'' communities. However, there is a catch: although  our proposed confidence intervals are guaranteed to attain the nominal selective coverage, those intervals are only as meaningful as the selected parameter, which is in turn determined by the estimated communities. Thus, the proposed approach is appealing only if the estimated communities are of interest.

Though the selected parameter that we consider is motivated by the SBM, we do not assume that the observed edges follow an SBM. Moreover, while we used the SBM as a working model, our approach is broadly applicable to conducting inference for data-driven network parameters. However,  like the data thinning and fission proposals on which it is based, we do require  that the edges are (i) independent, and (ii) members of a known distributional family. We leave a relaxation of these requirements to future work. 

An \texttt{R} package implementing the proposal in this paper, a tutorial illustrating its use, and scripts to reproduce all numerical results are available at \\
\href{https://ethanancell.github.io/networkinference/}{
  \if1\anon{https://ethanancell.github.io/networkinference/} \fi
\if0\anon{https://XXXXX.github.io/networkinference/}\fi}.





\clearpage
\appendix

\renewcommand{\theequation}{S\arabic{equation}}
\renewcommand{\theHequation}{\theequation}
\setcounter{equation}{0}

\renewcommand{\thefigure}{S\arabic{figure}}
\renewcommand{\theHfigure}{\thefigure}
\setcounter{figure}{0}

\renewcommand{\thetable}{S\arabic{table}}
\renewcommand{\theHtable}{\thetable}
\setcounter{table}{0}

\renewcommand{\thesection}{S\arabic{section}}
\renewcommand{\theHsection}{\thesection}
\setcounter{section}{0}

\renewcommand{\thesubsection}{S\arabic{section}.\arabic{subsection}}
\renewcommand{\theHsubsection}{\thesubsection}
\setcounter{subsection}{0}

\renewcommand{\thetheorem}{S\arabic{theorem}}
\renewcommand{\theHtheorem}{\thetheorem}
\setcounter{theorem}{0}

\renewcommand{\thelemma}{S\arabic{lemma}}

\renewcommand{\theproposition}{S\arabic{proposition}}

\section{Extensions to undirected networks and networks without self-loops}
\label{app:undirected}

We extend our results to networks that are undirected (with the convention that $A$ is an upper-triangular matrix), and networks without self loops (with the convention that $A$ contains zeroes along its diagonal), primarily by making careful changes to notation.

To modify the results of Section~\ref{subsec:thinning_fission}, we apply Propositions \ref{prop:univariate_gaussian_thinning}, \ref{prop:univariate_poisson_thinning}, and \ref{prop:univariate_bernoulli_fission}, but only for $(i,j) \in \mathcal{J}$ rather than all $(i,j) \in [n]^2$, where we define
\begin{equation}
    \mathcal{J} := \begin{cases}
        \left\{ (i,j) : i < j \right\}, & \text{for undirected networks with no self-loops}, \\
        \left\{ (i,j) : i \le j \right\}, & \text{for undirected networks with self-loops}, \\
        \left\{ (i,j) : i \ne j \right\}, & \text{for directed networks with no self-loops}.
    \end{cases}
    \label{eq:J_def}
\end{equation}

To modify the results of Section~\ref{sec:defining_target_of_inference}, we redefine $B = B(\Atr) \in \mathbb{R}^{K \times K}$ in \eqref{eq:B_matrix_theta_def}. For directed networks without self-loops, we define all entries $B_{k \ell}$ for $(k, \ell) \in [K]^2$, where for undirected networks (with or without self-loops) we define $B_{k \ell}$ only for $k \le \ell$ so that $B$ is a upper-triangular matrix. Then, we define $B_{k \ell} := \frac{1}{|\Ic'_{k \ell}|} \sum_{(i,j) \in \Ic'_{k \ell}} E[A_{ij}]$, where
\begin{equation}
    \Ickl' := \begin{cases}
        \{(i,j) \in \Ickl : i < j\}, & \text{for undirected networks with no self-loops}, \\
        \{(i,j) \in \Ickl : i \le j\}, & \text{for undirected networks with self-loops}, \\
        \{(i,j) \in \Ickl : i \ne j\}, & \text{for directed networks with no self-loops}. \\
    \end{cases}
    \label{eq:relevant_edge_def}
\end{equation}
Here, we recall from Section~\ref{sec:defining_target_of_inference} that $\Ic_{k \ell} := \left\{(i,j) ~:~ \Ziktr = 1, \Zjltr = 1 \right\}$ is the set of all edges originating in the $k$th estimated community, and terminating in the $\ell$th estimated community. Finally, our selected parameter is a linear combination of the entries of $B$:
\begin{equation}
    \theta(\Atr) := \begin{cases} \sum_{k = 1}^K \sum_{ \ell = 1}^K U_{k \ell} B_{k \ell}, & \text{for directed networks}, \\
    \sum_{k=1}^K \sum_{\ell \ge k}^K U_{k \ell} B_{k \ell}, & \text{for undirected networks},
    \end{cases}
    \label{eq:theta_undirected}
\end{equation}
where $U \in \mathbb{R}^{K \times K}$ satisfies $\sum_{k=1}^K \sum_{\ell = 1}^K U_{k \ell}^2 = 1$ and may depend on $\Atr$ if desired. When the network is undirected, $U$ (like $B$) must be an upper-triangular matrix.

Finally, we modify the results of Section~\ref{sec:inference_on_data_driven_parameter} by restating Propositions~\ref{prop:normal_estimation}, \ref{prop:poisson_estimation}, and \ref{prop:squiggle_estimation} as Propositions \ref{prop:undirected_normal_estimation}, \ref{prop:undirected_poisson_estimation}, and \ref{prop:undirected_squiggle_estimation}, respectively. The asymptotic arguments used to prove Propositions~\ref{prop:undirected_normal_estimation}, \ref{prop:undirected_poisson_estimation}, and \ref{prop:undirected_squiggle_estimation} are nearly identical to their counterparts in the main text. In what follows, let $\phi_{1-\alpha/2}$ denote the $(1-\alpha/2)$-quantile of the $\mathcal{N}(0,1)$ distribution.

\begin{proposition}
    \label{prop:undirected_normal_estimation} 
    Suppose that the random adjacency matrix $A$ has entries $A_{ij} \ind
    \mathcal{N}(M_{ij}, \tau^2)$ for $(i,j) \in \mathcal{J}$ with $\mathcal{J}$ defined in \eqref{eq:J_def}, where $\tau^2$ is a common known variance and the mean $M_{ij}$ is unknown. Suppose that we fix $\epsilon \in (0,1)$ and construct $\Ate$ and $\Atr$ from $A$ by applying Proposition~\ref{prop:univariate_gaussian_thinning} for $(i,j) \in \mathcal{J}$, and we then apply community detection to $\Atr$
    to yield the estimated community membership matrix $\Ztr \in \{0,1\}^{n
      \times K}$.
      
    Define $\hat{\theta}\left( \Ate, \Atr \right) := (1-\epsilon)^{-1} \sum_{k = 1}^K \sum_{\ell = 1}^K U_{k \ell} \hat{B}_{k \ell}$, where $U \in \mathbb{R}^{K \times K}$ satisfies \\$\sum_{k = 1}^K \sum_{\ell = 1}^K U_{k \ell}^2 = 1$, may depend on $\Atr$ if desired, and is upper-triangular if the network is undirected. Furthermore, define $\hat{B} \in \mathbb{R}^{K \times K}$ entry-wise as $\hat{B}_{k \ell} := \frac{1}{|\Ic'_{k \ell}|} \sum_{(i,j) \in \Ic'_{k \ell}} \Aijte$, and is upper-triangular if the network is undirected, and where $\Ickl'$ is defined in \eqref{eq:relevant_edge_def}.
    
    Then, $P \Big( \theta(\Atr) \in \Big[ \hat{\theta}(\Ate, \Atr) \pm \phi_{1-\alpha/2} \cdot \sigma \Big] ~\Big|~ \Atr \Big) = 1 - \alpha$, where $\theta \left( \Atr \right)$ was defined in \eqref{eq:theta_undirected}, $\sigma^2 := (1-\epsilon)^{-1} \tau^2 \sum_{k=1}^K \sum_{\ell=1}^K U_{k \ell}^2 \Delta_{k \ell}$, and $\Delta_{k \ell} := \dfrac{1}{|\Ic'_{k \ell}|}$.
\end{proposition}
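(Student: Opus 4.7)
The plan is to mirror the argument for Proposition~\ref{prop:normal_estimation}, now restricted to the index set $\mathcal{J}$. First I would invoke Proposition~\ref{prop:univariate_gaussian_thinning} at each $(i,j) \in \mathcal{J}$ to conclude that, conditional on $\Atr$, the collection $\{\Aijte : (i,j) \in \mathcal{J}\}$ consists of mutually independent Gaussians with $\Aijte \mid \Atr \sim \mathcal{N}((1-\epsilon) M_{ij},\, (1-\epsilon)\tau^2)$. Since $\Ztr$, the sets $\Ic'_{k \ell}$, and the weights $U_{k \ell}$ are all measurable functions of $\Atr$, they become deterministic upon conditioning on $\Atr$; hence $\hat{\theta}(\Ate, \Atr)$ is a linear combination of independent Gaussians and is itself Gaussian conditionally on $\Atr$.

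The next step is a standard mean-variance computation for $\hat{\theta}(\Ate,\Atr) \mid \Atr$. Linearity of expectation gives $\E[\hat{B}_{k \ell} \mid \Atr] = (1-\epsilon) B_{k \ell}(\Atr)$, so the $(1-\epsilon)^{-1}$ scaling in $\hat{\theta}$ yields $\E[\hat{\theta}(\Ate,\Atr) \mid \Atr] = \theta(\Atr)$ as in \eqref{eq:theta_undirected} (restricting the outer sum to $k \le \ell$ in the undirected case). For the variance, the key observation is that for distinct $(k, \ell) \ne (k', \ell')$ (on the upper triangle when undirected), the sets $\Ic'_{k \ell}$ and $\Ic'_{k' \ell'}$ are disjoint because each node has a unique estimated community assignment. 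Therefore the $\hat{B}_{k \ell}$ are sums over disjoint collections of independent Gaussians, so that
\begin{equation*}
    \Var\!\left(\hat{\theta}(\Ate, \Atr) \mid \Atr\right) = (1-\epsilon)^{-2} \sum_{k, \ell} U_{k \ell}^2 \cdot \frac{(1-\epsilon)\tau^2}{|\Ic'_{k \ell}|} = (1-\epsilon)^{-1} \tau^2 \sum_{k, \ell} U_{k \ell}^2 \Delta_{k \ell} = \sigma^2.
\end{equation*}
The coverage claim then follows by applying the standard Gaussian pivot to $\hat{\theta}(\Ate,\Atr) \mid \Atr \sim \mathcal{N}(\theta(\Atr), \sigma^2)$.

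I do not anticipate any substantive obstacle: the result is essentially a bookkeeping extension of the finite-sample Gaussian proposition to the restricted index set $\mathcal{J}$. The only care point is ensuring, in the undirected setting, that $\Ic'_{k \ell}$ indexes each unordered pair exactly once via the conventions in \eqref{eq:relevant_edge_def}, so that no $\Aijte$ is double-counted and the independence underlying the variance calculation is preserved. Restricting $U$ to be upper-triangular in the undirected case then keeps the selected parameter $\theta(\Atr)$ and the estimator $\hat{\theta}(\Ate, \Atr)$ aligned with this index convention.
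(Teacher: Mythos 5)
Your proof is correct, and the computation (conditional Gaussianity given $\Atr$, unbiasedness after the $(1-\epsilon)^{-1}$ rescaling, and the variance $\sigma^2$ via disjointness of the sets $\Ic'_{k\ell}$) matches the statement exactly. The route differs slightly in packaging from the paper's: the paper proves the directed-with-self-loops analogue (Proposition~\ref{prop:normal_estimation}) by writing $\Ate$ as a matrix-normal random matrix, conjugating by $(\ZtrT\Ztr)^{-1}\ZtrT$ and $\Ztr(\ZtrT\Ztr)^{-1}$, and vectorizing to obtain a $\mathcal{N}_{K^2}$ law with Kronecker-product covariance; for the undirected/no-self-loop variants it simply asserts the argument is ``nearly identical.'' Your elementwise argument --- treating each $\hat{B}_{k\ell}$ as an average of independent Gaussians over the disjoint index sets $\Ic'_{k\ell}$ --- is arguably the more natural instantiation here, since the matrix-normal conjugation identity does not apply verbatim once the edge set is restricted to $\mathcal{J}$ (upper triangle, no diagonal), whereas your computation directly produces the diagonal variance form $\sum_{k,\ell} U_{k\ell}^2 \Delta_{k\ell}$ in which the proposition is stated. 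The one care point you flag --- that each unordered dyad is indexed exactly once by the conventions in \eqref{eq:relevant_edge_def}, so no $\Aijte$ is double-counted --- is indeed the only place the undirected case requires attention, and you handle it correctly.
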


For Poisson edges we arrive at the following asymptotic result.

 \begin{proposition} \label{prop:undirected_poisson_estimation} Consider a sequence of $n \times n$ random adjacency matrices $(A_n)_{n=1}^\infty$ with entries $A_{n,ij} \ind \textnormal{Poisson}(M_{n,ij})$ for $(i,j) \in \mathcal{J}_n$ with $\mathcal{J}_n$ defined in \eqref{eq:J_def}, where $0 < N_0 \le M_{n,ij} \le N_1 < \infty$ for constants $N_0$ and $N_1$ not depending on $n$. Fix $\epsilon \in (0,1)$, and construct $\Antr$ and $\Ante$ from $A_n$ by applying Proposition~\ref{prop:univariate_poisson_thinning} for $(i,j) \in \mathcal{J}$, and then apply community detection to $\Antr$ to yield the estimated community membership matrices $\Zntr \in \{0,1\}^{n \times K}$.
 
 Define $\hat{\theta}_n(\Ante, \Antr) := (1-\epsilon)^{-1} \sum_{k = 1}^K \sum_{\ell = 1}^K U_{n, k \ell} \hat{B}_{n, k \ell}$, where $U_n \in \mathbb{R}^{K \times K}$ satisfies $\sum_{k = 1}^K \sum_{\ell=1}^K U_{n, k \ell}^2 = 1$, may depend on $\Antr$ if desired, and is upper-triangular if the network is undirected. Furthermore, $\hat{B} \in \mathbb{R}^{K \times K}$ is defined entry-wise as $\hat{B}_{n, k \ell} := \frac{1}{|\Ic'_{n, k \ell}|} \sum_{(i,j) \in \Ic'_{n, k \ell}} \Anteij$, is upper-triangular if the network is undirected, and where $\Ickl'$ is defined in \eqref{eq:relevant_edge_def}. Additionally, define $\hat{\Delta}_n \in \mathbb{R}^{K \times K}$ with entries $\hat{\Delta}_{n, k \ell} := \dfrac{\hat{B}_{n, k \ell}}{|\Icnkl'|}$, and $\hat{\sigma}^2_n:= (1-\epsilon)^{-2} \sum_{k=1}^K \sum_{\ell=1}^K U_{n, k \ell}^2 \hat{\Delta}_{n, k \ell}$.
 
 Then, we have $\lim_{n \to \infty} P \Big( \theta_n(\Antr) \in \Big[ \hat{\theta}_n(\Ante, \Antr) \pm \phi_{1-\alpha/2} \cdot \hat{\sigma}_n \Big] ~\Big|~ \Antr \Big) = 1-\alpha$ for $\theta_n(\Antr)$ defined in \eqref{eq:theta_undirected}, provided that the sequence of realizations $\left\{\Zntr = \hat{z}_n \right\}_{n=1}^{\infty}$ is such that $(|\Icnkl'|)^{-1} = O(n^{-2})$ for all $k, \ell \in \{1, 2, \dots, K\}$.
\end{proposition}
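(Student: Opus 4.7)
The proof proceeds along the same lines as Proposition~\ref{prop:poisson_estimation}, with the only substantive change being that summation and indexing run over $\mathcal{J}_n$ and $\Icnkl'$ from \eqref{eq:J_def} and \eqref{eq:relevant_edge_def} rather than over $[n]^2$ and $\Icnkl$. I would condition on $\Antr$ throughout. By Proposition~\ref{prop:univariate_poisson_thinning}, $\Ante$ is independent of $\Antr$, and the entries $\{\Anteij : (i,j) \in \mathcal{J}_n\}$ are independent with $\Anteij \sim \Pois((1-\epsilon) M_{n,ij})$. Since $\Zntr$, and therefore $U_n$ and all $\Icnkl'$, are $\Antr$-measurable, they act as fixed weights under this conditioning.

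First I would compute $E[\hat{\theta}_n \mid \Antr] = \theta_n(\Antr)$ by linearity, using $E[\hat{B}_{n,k\ell} \mid \Antr] = (1-\epsilon) B_{n,k\ell}$ with $B_{n,k\ell} := |\Icnkl'|^{-1} \sum_{(i,j) \in \Icnkl'} M_{n,ij}$, and then compute the conditional variance
$$\sigma_n^2 := \Var(\hat{\theta}_n \mid \Antr) = (1-\epsilon)^{-1} \sum_{k=1}^K \sum_{\ell=1}^K U_{n,k\ell}^2 \frac{B_{n,k\ell}}{|\Icnkl'|},$$
which reflects the fact that the $\hat{B}_{n,k\ell}$ are conditionally independent across $(k,\ell)$, since their indexing sets $\Icnkl'$ partition (a subset of) $\mathcal{J}_n$.

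Next I would apply a Lyapunov central limit theorem to the triangular array of independent centered Poisson summands contributing to each $\hat{B}_{n,k\ell}$, then combine across the finitely many pairs $(k, \ell)$ via the Cram\'er--Wold device (or directly, since $K$ is fixed). The uniform bounds $N_0 \le M_{n,ij} \le N_1$ control the per-summand second and absolute-third moments, while $|\Icnkl'|^{-1} = O(n^{-2})$ ensures $\sigma_n^2 = \Omega(n^{-2})$ and that no individual summand dominates, yielding a Lyapunov ratio of order $n^{-1}$. Consistency $\hat{\sigma}_n^2 / \sigma_n^2 \overset{P}{\to} 1$ follows from Chebyshev-based consistency of $\hat{B}_{n,k\ell}/(1-\epsilon)$ for $B_{n,k\ell}$ under the same growth conditions, and Slutsky's theorem delivers the desired conditional coverage limit.

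The main obstacle is verifying the Lyapunov condition in the presence of heterogeneous weights $U_{n,k\ell}/|\Icnkl'|$ together with a possibly vanishing $\sigma_n^2$; the density-type lower bound $M_{n,ij} \ge N_0 > 0$ and the growth condition $|\Icnkl'|^{-1} = O(n^{-2})$ are tailored precisely to prevent degeneracy, and their use is essentially identical to the corresponding argument in the proof of Proposition~\ref{prop:poisson_estimation}, with $\Icnkl'$ in place of $\Icnkl$ and $\mathcal{J}_n$ in place of $[n]^2$.
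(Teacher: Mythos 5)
Your proposal is correct, and it matches the paper's intent: the paper does not give a standalone proof of this proposition but states that the argument is nearly identical to that of Proposition~\ref{prop:poisson_estimation}, with $\mathcal{J}_n$ and $\Ic'_{n,k\ell}$ replacing $[n]^2$ and $\Ic_{n,k\ell}$ --- exactly the substitution you describe. The one respect in which your route differs is stylistic but worth noting. The paper's proof of Proposition~\ref{prop:poisson_estimation} is phrased as a misspecified-model M-estimation argument (working model $F_{n,k\ell}$, score equation, mean value theorem, and the ratio $\hat{J}/J$), and it handles the $n$-dependent weight vector via a dedicated subsequence lemma (Lemma~\ref{lemma:gaussian_rotation}, plus Lemma~\ref{lemma:marginal_independence_joint_convergence} for joint convergence across community pairs and Lemma~\ref{lemma:asymptotic_variance_un_equal} for the variance ratio). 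Because the Poisson working-model MLE is just the sample mean, all of that collapses to your direct computation: $\hat{\theta}_n$ is a single weighted sum of independent Poisson variables, so you can verify Lyapunov once for the whole array with weights $(1-\epsilon)^{-1}U_{n,k\ell}/|\Ic'_{n,k\ell}|$, which sidesteps the subsequence machinery entirely. Your bound is right: with $\sum_{k,\ell}U_{n,k\ell}^2=1$ and $N_0 \le M_{n,ij}$ one gets $\sigma_n^2 \ge c\,n^{-2}$, while the third-moment sum is $O(n^{-4})\cdot\sigma_n$-free terms, giving a Lyapunov ratio $O(n^{-1})$ exactly as you claim; and the variance-consistency step goes through because $B_{n,k\ell}\ge N_0>0$ makes $\hat{B}_{n,k\ell}/((1-\epsilon)B_{n,k\ell})\overset{\text{p}}{\to}1$ for each of the finitely many pairs, so the ratio $\hat{\sigma}_n^2/\sigma_n^2$ is a convex combination of quantities tending to one. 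The only caution is the one you already flagged: since $U_n$ may vary with $n$ and need not converge, the Lyapunov condition must be checked for the full weighted sum rather than pair by pair; your argument does this, so there is no gap.
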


Finally, we turn to the case of Bernoulli edges. Abbreviating the discussion in Section~\ref{sec:bernoulli_edges}, we redefine the selected parameter as
\begin{equation}
    \xi(\Atr) := \sum_{k=1}^K \sum_{\ell=1}^K U_{k \ell} \Phi_{k \ell}(\Atr),
    \label{eq:squiggle_undirected}
\end{equation}
where $U$ is as before, and $\Phi(\Atr) \in \mathbb{R}^{K \times K}$ is defined entry-wise as
\begin{equation*}
    \Phi_{k \ell}(\Atr) := \dfrac{|\Ickl'^{(0)}|}{|\Ickl'|} f \left( V_{k \ell}^{(0)}(\Atr), \dfrac{1-\gamma}{\gamma} \right) + \dfrac{|\Ickl'^{(1)}|}{|\Ickl'|} f \left( V_{k \ell}^{(1)}(\Atr), \dfrac{\gamma}{1-\gamma} \right),
\end{equation*}
where $f(a,v) := \textnormal{expit}(\textnormal{logit}(a) + \log(v))$, $\Ic'^{(s)}_{k \ell} := \left\{(i,j) \in \Ic'_{k \ell} : \Aijtr = s \right\}$ for $\Ickl'$ defined in \eqref{eq:relevant_edge_def}, $V^{(s)}_{k \ell} := \frac{1}{|\Ic'^{(s)}_{k \ell}|} \sum_{(i,j)} T_{ij}$ for $s = 0,1$, and where $T_{ij} := \E[A_{ij} \mid \Aijtr]$ is defined in \eqref{eq:annas_insight_1}. When the network is undirected, both $U$ and $\Phi$ must be upper-triangular matrices. Proposition~\ref{prop:undirected_squiggle_estimation} ensures that we can (asymptotically) estimate $\xi(\Atr)$ when the network is undirected or disallows self-loops.

\begin{proposition}
\label{prop:undirected_squiggle_estimation} Consider a sequence of $n \times n$ random adjacency matrices $(A_n)_{n=1}^\infty$, consisting of entries $A_{n,ij} \ind \operatorname{Bernoulli}(M_{n,ij})$ for $(i,j) \in \mathcal{J}_n$ with $\mathcal{J}_n$ defined in \eqref{eq:J_def}, where $0 < N_0 \le
    M_{n,ij} \le N_1 < 1$ holds for constants $N_0$ and $N_1$ not depending on $n$. Suppose
    that we fix $\gamma \in (0, 0.5)$ and construct $\Ante$ and $\Antr$ from $A$ by applying Proposition~\ref{prop:univariate_bernoulli_fission} for $(i,j) \in \mathcal{J}$, and then apply community detection to $\Antr$ to
    yield the estimated community membership matrix $\Zntr \in \{0,1\}^{n
      \times K}$.
      
    Define the estimator $\hat{\xi}_n(\Ante, \Antr) := \sum_{k=1}^K \sum_{\ell=1}^K U_{n, k \ell} \hat{\Phi}_{n, k \ell}(\Ante, \Antr)$, where $U_n \in \mathbb{R}^{K \times K}$ satisfies $\sum_{k =1}^K \sum_{\ell=1}^K U_{n, k \ell}^2 = 1$, is allowed to depend on $\Antr$ if desired, and is upper-triangular if the network is undirected.
    
    Furthermore, $\hat{\Phi}_n(\Ante, \Antr) \in \mathbb{R}^{K \times K}$ is defined entry-wise as \\ $\hat{\Phi}_{n, k \ell}(\Ante, \Antr) := \dfrac{|\Ic'^{(0)}_{n,k \ell}|}{|\Ic'_{n, k \ell}|} \hat{V}^{(0)}_{n, k \ell}(\Ante, \Antr) + \dfrac{|\Ic'^{(1)}_{n,k \ell}|}{|\Ic'_{n, k \ell}|} \hat{V}^{(1)}_{n, k \ell}(\Ante, \Antr)$, where $\hat{\Phi}_n$ is upper-triangular if the network is undirected, and
    where $\Icnkl'$ is defined in \eqref{eq:relevant_edge_def}, \\
    $\Icnkl'^{(s)} := \left\{ (i,j) \in \Icnkl' : \Antrij = s \right\}$ for $s \in \{0, 1\}$, $\hat{V}^{(s)}_{n, k \ell}(\Ante, \Antr) := \frac{\hat{B}^{(s)}_{n, k \ell}}{\hat{B}^{(s)}_{n, k \ell} + (1 - \hat{B}^{(s)}_{n, k \ell}) e^{c^{(s)}}}$ for $c^{(0)} := \log(\gamma/(1-\gamma))$ and $c^{(1)} := \log((1-\gamma)/\gamma)$, and $\hat{B}^{(s)}_{n, k \ell} := \frac{1}{|\Ic'^{(s)}_{n, k \ell}|} \sum_{(i,j) \in \Ic'^{(s)}_{n, k \ell}} \Anteij$.
    Also define $\hat{\sigma}_n^2 := \sum_{k=1}^K \sum_{\ell=1}^K U_{n, k \ell}^2 \hat{\Delta}_{n, k \ell}$, and $\hat{\Delta}_{n, k \ell} := \sum_{s \in \{0,1\}} \dfrac{|\Ic'^{(s)}_{n,k \ell}|^2}{|\Ic'_{n,k \ell}|^2} \hat{\Delta}^{(s)}_{n, k \ell}$, where \\ $\hat{\Delta}^{(s)}_{n, k \ell} := \frac{\hat{B}^{(s)}_{n,k \ell} (1-\hat{B}^{(s)}_{n,k \ell}) e^{2 c^{(s)}}}{|\Ic'^{(s)}_{n,k \ell}|((1-\hat{B}^{(s)}_{n,k \ell}) e^{c^{(s)}} + \hat{B}^{(s)}_{n,k \ell})^4}$. Then, for $\xi_n(\Antr)$ defined in \eqref{eq:squiggle_undirected}, we have
    \begin{align*}
        \liminf_{n \to \infty} P \Bigg( \xi(\Antr) \in \Big[ \hat{\xi}(\Ante, \Antr) \pm \phi_{1-\alpha/2} \cdot \hat{\sigma}_n \Big] ~\Bigg|~ \Antr \Bigg) \ge 1-\alpha,
    \end{align*}
    provided that the sequence of realizations $\left\{\Antr = \antr \right\}_{n=1}^\infty$ and $\left\{\Zntr = \hat{z}_n \right\}_{n=1}^\infty$ are such that $|\Ic'^{(0)}_{n,kl}|^{-1} = O(n^{-2})$ and $|\Ic'^{(1)}_{n,kl}|^{-1} = O(n^{-2})$ for all $k, \ell \in \{1, 2, \dots, K\}$.
\end{proposition}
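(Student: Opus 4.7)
The plan is to parallel the proof of Proposition~\ref{prop:squiggle_estimation}, with notational changes to restrict the edge index set from $[n]^2$ to $\mathcal{J}_n$. First, I would condition on $\Antr$. By Proposition~\ref{prop:univariate_bernoulli_fission} applied entrywise to $(i,j) \in \mathcal{J}_n$, the test edges $\{\Anteij : (i,j) \in \mathcal{J}_n\}$ are mutually independent Bernoullis with $\E[\Anteij \mid \Antr] = T_{n,ij}$, where $T_{n,ij}$ takes one of two functional forms depending on $\Antrij$ as in \eqref{eq:annas_insight_1}. The critical observation is that fission is applied on an edge-by-edge basis, so restricting to $\mathcal{J}_n$ does not affect the conditional independence structure of the retained edges.

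Second, for each cell $\Ic'^{(s)}_{n,k\ell}$, the statistic $\hat{B}^{(s)}_{n,k\ell}$ is a sample mean of $|\Ic'^{(s)}_{n,k\ell}|$ independent Bernoullis whose means $T_{n,ij}$ are uniformly bounded away from $0$ and $1$ (inherited from the uniform bounds on $M_{n,ij}$). Combined with the growth condition $|\Ic'^{(s)}_{n,k\ell}|^{-1} = O(n^{-2})$, the Lindeberg--Feller CLT yields asymptotic normality of $\hat{B}^{(s)}_{n,k\ell}$ around $V^{(s)}_{k\ell}(\Antr)$ with variance of order $|\Ic'^{(s)}_{n,k\ell}|^{-1}$. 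A direct computation shows $\hat{V}^{(s)}_{n,k\ell} = f(\hat{B}^{(s)}_{n,k\ell}, e^{-c^{(s)}})$, estimating $f(V^{(s)}_{k\ell}(\Antr), e^{-c^{(s)}})$. Since $\partial_a f(a, e^{-c^{(s)}}) = e^{-c^{(s)}}/(1 - a + a e^{-c^{(s)}})^2$ is bounded on compact subsets of $(0,1)$, the delta method gives asymptotic normality of $\hat{V}^{(s)}_{n,k\ell}$; a brief algebraic manipulation confirms that the resulting asymptotic variance is exactly $\hat{\Delta}^{(s)}_{n,k\ell}$ after plug-in of $\hat{B}^{(s)}_{n,k\ell}$ for $V^{(s)}_{k\ell}$.

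Third, I would aggregate across cells. Because the collection $\{\Ic'^{(s)}_{n,k\ell}\}_{k,\ell,s}$ partitions the relevant portion of $\mathcal{J}_n$, the estimators $\hat{V}^{(s)}_{n,k\ell}$ are mutually independent conditional on $\Antr$. Thus $\hat{\Phi}_{n,k\ell}$ is asymptotically normal around $\Phi_{n,k\ell}(\Antr)$ as given in \eqref{eq:Phi_kl_def}, and the linear combination $\hat{\xi}_n$ is asymptotically normal around $\xi_n(\Antr)$ with variance equal to the probability limit of $\hat{\sigma}_n^2$. Consistency of $\hat{\sigma}_n^2$ then follows from the within-cell law of large numbers under the uniform bounds, and Slutsky's theorem yields the claimed interval.

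The main obstacle, and the reason the conclusion is $\liminf \ge 1-\alpha$ rather than $\lim = 1-\alpha$, is that the plug-in variance $\hat{B}^{(s)}_{n,k\ell}(1 - \hat{B}^{(s)}_{n,k\ell})/|\Ic'^{(s)}_{n,k\ell}|$ implicit in $\hat{\Delta}^{(s)}_{n,k\ell}$ overestimates the true variance $|\Ic'^{(s)}_{n,k\ell}|^{-2} \sum_{(i,j)} T_{n,ij}(1 - T_{n,ij})$ whenever the $T_{n,ij}$ are heterogeneous within the cell, by concavity of $t \mapsto t(1-t)$. This is identical to the source of the conservative coverage in Proposition~\ref{prop:squiggle_estimation} itself. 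Since restricting to $\mathcal{J}_n$ changes only the summation ranges and cell sizes but not the independence or CLT arguments, the extension to undirected networks and networks without self-loops amounts to a careful relabeling of the directed-with-self-loops argument.
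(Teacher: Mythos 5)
Your proposal is correct and takes essentially the same approach as the paper, which proves this proposition by simply observing that the argument for Proposition~\ref{prop:squiggle_estimation} carries over once every index set is restricted from $[n]^2$ to $\mathcal{J}_n$ (so that $\Ic_{n,k\ell}^{(s)}$ becomes $\Ic'^{(s)}_{n,k\ell}$), the conditional independence of the fissioned edges being unaffected by this restriction. In particular, your identification of the concavity of $t \mapsto t(1-t)$ (Jensen's inequality applied to the heterogeneous $T_{n,ij}$ within a cell) as the sole reason the conclusion is a $\liminf$ bound rather than an exact limit is precisely the mechanism in the paper's proof of Proposition~\ref{prop:squiggle_estimation}.
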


\section{Additional simulation details and results}
\label{appendix:sim-details}

\subsection{Additional simulations}
\label{appendix:extra-simulations}

In the same setting as the left-hand panels of Figures~\ref{fig:conf_width_rand_gaussian} and \ref{fig:conf_width_rand}, we vary the value of $n \in\{100, 200, 500\}$ and investigate the empirical coverage of the 90\% confidence intervals from the proposed and naive methods, respectively. Table \ref{tab:coverage_naive_thinning_poisson_gaussian} displays the coverage of their respective selected parameters, with $(\rho_1, \rho_2) = (30, 27)$, and where $\epsilon = 0.5$ in the proposed approach and $\tau^2 = 25$ for networks with Gaussian edges. The proposed approach empirically achieves the 90\% nominal coverage rate for $B_{11}(\Atr)$. In contrast, the naive approach (see Supplement~\ref{appendix:naive-cis}) severely under-covers $B_{11}(A)$.

\begin{table}[!htbp]
    \centering
    \begin{tabular}{|c|c|c|c|c||c|c|c|}
    \hline
    \multicolumn{2}{|c|}{} & \multicolumn{3}{c||}{\textbf{Proposed Approach}} & \multicolumn{3}{c|}{\textbf{Naive Approach}} \\
        \cline{3-8}
        \multicolumn{2}{|c|}{} & $K=2$ & $K=5$ & $K=10$ & $K=2$ & $K=5$ & $K=10$ \\
        \hline
        \multirow{3}{*}{\textbf{Gaussian}} & $n=100$ & 89.80 & 90.02 & 89.90 & 41.42 & 53.44 & 71.60 \\
        & $n=200$ & 90.22 & 90.22 & 90.78 & 16.06 & 38.68 & 48.82 \\
        & $n=500$ & 90.18 & 90.68 & 91.02 & 36.28 & 89.40 & 69.76 \\
                \hline
        \multirow{3}{*}{\textbf{Poisson}} & $n=100$ & 89.64 & 90.04 & 89.96 & 42.75 & 53.04 & 72.20 \\
        & $n=200$ & 90.70 & 89.38 & 89.78 & 15.38 & 38.94 & 47.48 \\
        & $n=500$ & 89.84 & 90.08 & 89.16 & 34.00 & 89.06 & 65.92 \\
        \hline
    \end{tabular}
    \caption{Empirical coverage (as a percentage) of 90\% confidence intervals  for Gaussian and Poisson networks arising from the proposed approach for $B_{11}(\Atr)$ with $\epsilon=0.50$, and the naive method for $B_{11}(A)$, in a setting with $\Ktr=5$ and Gaussian or Poisson edges.
    }
    \label{tab:coverage_naive_thinning_poisson_gaussian}
\end{table}

Next, we return to the simulation setting of the left-hand panel of Figure~\ref{fig:conf_width_rand_bernoulli} and vary the value of $n \in \{100, 200, 500\}$, and we show the coverage of the proposed and naive methods in Table \ref{tab:coverage_naive_fission_bernoulli}. For the proposed method, we report the coverage of both $\Phi_{11}(\Atr)$ and $B_{11}(\Atr)$ by the confidence intervals targeting $\Phi_{11}(\Atr)$. Although the proposed approach should over-cover $\Phi_{11}(\Atr)$ (see Proposition~\ref{prop:squiggle_estimation}), the empirical coverage is very close to the nominal coverage. Notably, the 90\% confidence intervals targeting $\Phi_{11}(\Atr)$ also contain $B_{11}(\Atr)$ with probability near 90\%, providing empirical evidence that $\Phi_{11}(\Atr)$ and $B_{11}(\Atr)$ are nearly the same, as suggested by Proposition~\ref{prop:taylor}. By contrast, the naive approach (see Supplement~\ref{appendix:naive-cis}) does not achieve the nominal 90\% coverage rate for $B_{11}(A)$. 

\begin{table}[!htbp]
    \centering
    \begin{tabular}{|c||c|c|c|c||c|c|c|c|}
    \hline
    \multicolumn{2}{|c|}{} & \multicolumn{3}{c||}{\textbf{Proposed Approach}} & & \multicolumn{3}{c|}{\textbf{Naive Approach}} \\
    \cline{3-5} \cline{7-9}
        \multicolumn{2}{|c|}{} & $K=2$ & $K=5$ & $K=10$ & & $K=2$ & $K=5$ & $K=10$ \\
        \hline
        $n = 100$ & \multirow{3}{*}{$B_{11}(\Atr)$} & 91.12 & 90.08 & 89.94 & \multirow{3}{*}{$B_{11}(A)$} & 29.20 & 54.06 & 71.94 \\
        $n = 200$ & & 89.66 & 90.46 & 89.60 & & 15.26 & 35.90 & 48.32 \\
        $n=500$ & & 90.04 & 89.96 & 90.74 & & 18.96 & 85.24 & 62.82 \\
        \hline
        $n=100$ & \multirow{3}{*}{$\Phi_{11}(\Atr)$} & 91.06 & 90.12 & 89.96 & \multirow{3}{*}{-} & - & - & - \\
        $n=200$ & & 89.94 & 90.52 & 89.66 & & - & - & - \\
        $n=500$ & & 90.04 & 90.28 & 90.64 & & - & - & - \\
        \hline
    \end{tabular}
    \caption{Percent empirical coverage of 90\% confidence intervals from the proposed and naive approaches for Bernoulli networks, averaged over 5,000 simulated datasets, with $\gamma = 0.25$, $\Ktr = 5$, and $(\rho_1, \rho_2) = (0.75, 0.50)$. In the left-hand set of columns, we computed  90\% confidence intervals targeting $\Phi_{11}(\Atr)$, and report the proportion of these intervals that contain $\Phi_{11}(\Atr)$ as well as the proportion that contain $B_{11}(\Atr)$. The intervals targeting $\Phi_{11}(\Atr)$ achieve approximately 90\% coverage for each of  $\Phi_{11}(\Atr)$ and $B_{11}(\Atr)$, in keeping with Proposition \ref{prop:taylor}(b)'s assurance that the two quantities are approximately equal. In the right-hand set of columns, we computed naive 90\% confidence intervals for $B_{11}(A)$ that use the same data both to estimate communities and to test them; details for the naive confidence intervals are provided in Supplement~\ref{appendix:naive-cis}.
    }
    \label{tab:coverage_naive_fission_bernoulli}
\end{table}

Finally, for networks with Gaussian and Poisson edges, we simulate 5,000 networks with $n = 200$ nodes, where we first draw $M_{ij} \ind \Unif(0, 20)$. In this simulation setting, there are no true communities. The left-hand panels of Figures~\ref{fig:conf_width_rand_gaussian_maximum_heterogeneity} and \ref{fig:conf_width_rand_poisson_maximum_heterogeneity} show the empirical and nominal coverages of the proposed and naive methods as we vary the value of $K \in \{2, 5, 10\}$ for Gaussian and Poisson edges, respectively. The left-hand panels show that even when no communities exist in the underlying data, the proposed approach still achieves valid coverage for the selected parameters. The right-hand panels of Figures~\ref{fig:conf_width_rand_gaussian_maximum_heterogeneity} and \ref{fig:conf_width_rand_poisson_maximum_heterogeneity} show the average 90\% confidence interval width as we vary the value of $\epsilon$, showing that as $\epsilon$ increases, less information is allocated to $\Ate$, and so confidence intervals widen.

\begin{figure}[!htbp]
    \centering
    \includegraphics[height=4.2cm]{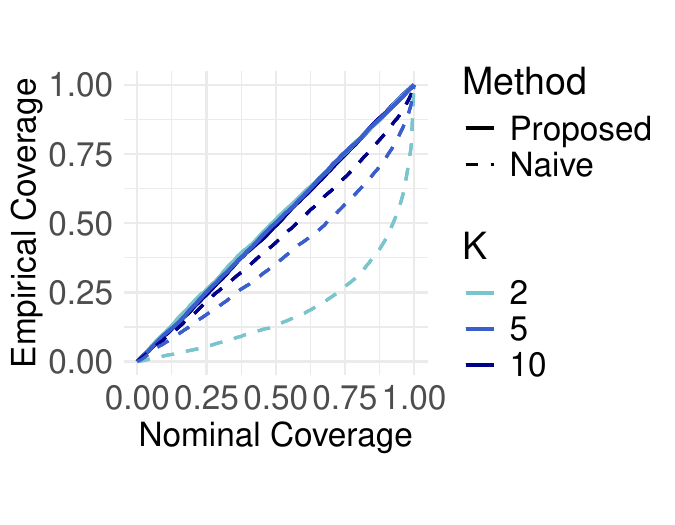}
    \includegraphics[height=4.0cm]{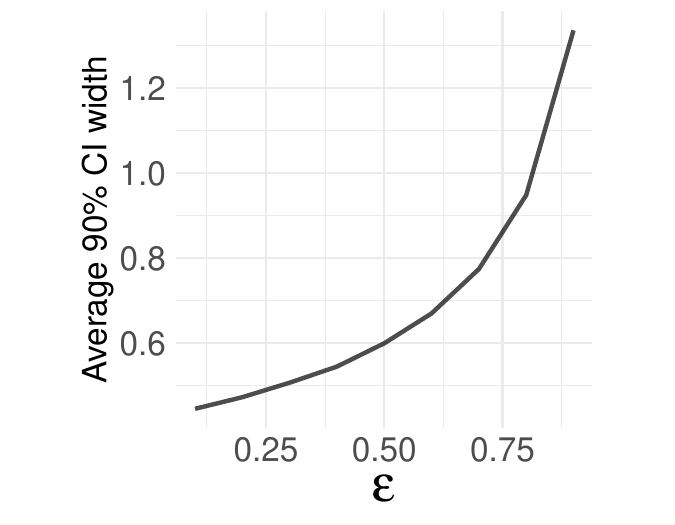}
    \caption{Results for networks with Gaussian edges, averaged over 5,000 simulated networks, where for each simulated network we set $n=200$ and draw $M_{ij} \ind \Unif(0,20)$, and set $\tau^2 = 25$. \textit{Left:} Empirical versus nominal coverage of the confidence intervals for $B_{11}(\Atr)$ (proposed approach), or $B_{11}(A)$ (naive approach as described in Supplement~\ref{appendix:naive-cis}), with $\epsilon = 0.50$, and where $K$ is varied. \textit{Right:} Average 90\% confidence interval width, as a function of $\epsilon$, for the proposed approach on networks with $K=5$.}
    \label{fig:conf_width_rand_gaussian_maximum_heterogeneity}
\end{figure}

\begin{figure}[!htbp]
    \centering
    \includegraphics[height=4.2cm]{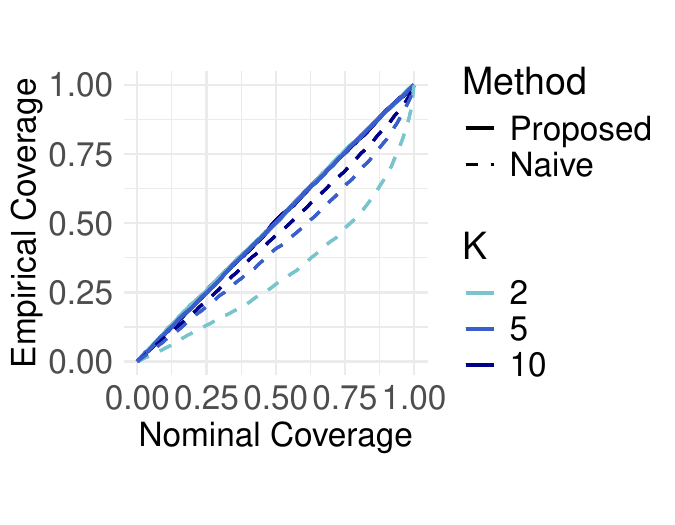}
    \includegraphics[height=4.0cm]{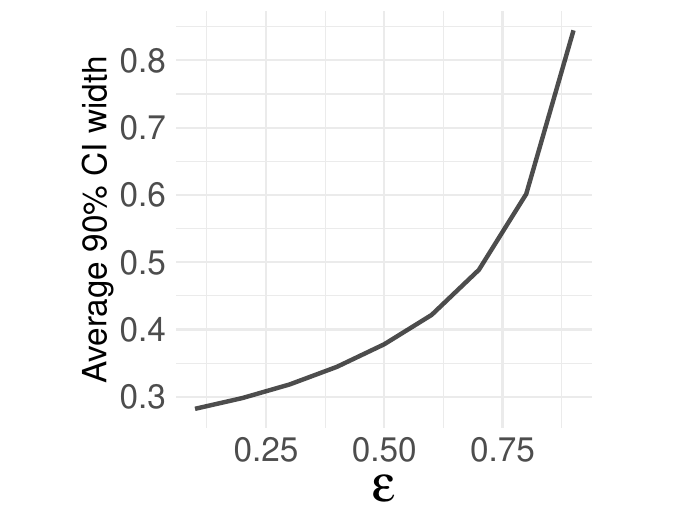}
    \caption{Results for Poisson edges. All other details are the same as Figure~\ref{fig:conf_width_rand_gaussian_maximum_heterogeneity}.}
    \label{fig:conf_width_rand_poisson_maximum_heterogeneity}
\end{figure}

\subsection{Simulation parameters}
\label{appendix:simulation-parameters}

This section details the simulation parameters that were used in the generation of Figure~\ref{fig:daniela}, and the tables and figures in Section~\ref{sec:simulation} and Supplement~\ref{appendix:extra-simulations}.

\paragraph{Figure~\ref{fig:daniela}} In all simulation settings, we fix $n = 100$, and evaluate behavior at $\gamma \in \{0.001, 0.1, 0.2, 0.3, 0.4, 0.5\}$. In the first simulation setting, we set $M_{ij} = 0.5$ for all $(i,j) \in [n]^2$. In the second simulation setting, nodes $1, \dots, 50$ belong to one community, and nodes $51, \dots 100$ belong to a second community. Then, $M_{ij} = 0.6$ if nodes $i$ and $j$ are in the same community, and $M_{ij} = 0.4$ if nodes $i$ and $j$ are in different communities. In the third simulation setting, for each simulation repetition we draw $M_{ij} \ind \Unif(0,1)$. 

\paragraph{Figures \ref{fig:conf_width_rand_gaussian} and \ref{fig:conf_width_rand} - left panel} For Gaussian and Poisson networks, we fix $n=200$, $\rho_1 = 30$, $\rho_2 = 27$, $\Ktr = 5$, $\epsilon = 0.5$, and $\tau^2 = 25$ (Gaussian edges only). Then, for $K \in \{2, 5, 10\}$, we average the results across 5,000 repetitions.

\paragraph{Figures \ref{fig:conf_width_rand_gaussian} and \ref{fig:conf_width_rand} - center and right panels} For Gaussian and Poisson networks, we fix $n = 200$, $\rho_1 = 30$, $\Ktr = 5$, $K = 5$, and $\tau^2 = 25$ (Gaussian edges only). Then, for $\epsilon \in \{0.1, ~0.2, ~0.3, ~0.4, ~0.5, ~0.6, ~0.7, ~0.8, ~0.9\}$ and $\rho_2 \in \{21, ~23, ~25, ~27, ~29\}$, we average the results across 5,000 repetitions.

\paragraph{Figure \ref{fig:conf_width_rand_bernoulli} - left panel} For Bernoulli networks, we fix $n = 200$, $\rho_1 = 0.75$, $\rho_2 = 0.5$, $\Ktr = 5$, and $\gamma = 0.25$. Then, for $K \in \{2, 5, 10\}$ we average the results across 5,000 repetitions.

\paragraph{Figure \ref{fig:conf_width_rand_bernoulli} - center and right panels} For Bernoulli networks, we fix $n = 200$, $\rho_1 = 0.75$, $\Ktr = 5$, and $K = 5$.

Then, for $\gamma \in \{0.001, ~0.005, ~0.01, ~0.015, ~0.02, \\ ~0.03, ~0.04, ~0.05,$$~0.075, ~0.10, ~0.15, ~0.20, ~0.30, ~0.40, ~0.499\}$ and $\rho_2 \in \{0.35, ~0.40, ~0.45, ~0.50, ~0.55\}$, we average the results across 5,000 repetitions.

\paragraph{Figure~\ref{fig:conf_width_rand_bernoulli_maximum_heterogeneity} - left panel} For Bernoulli networks, instead of the simulation setup in Section~\ref{subsec:simulation_methods}, we first draw $M_{ij} \ind \Unif(0,1)$, and fix $n = 200$ and $\gamma = 0.5$. Then, for $K \in \{2, 5, 10\}$, we average the results across 5,000 repetitions.

\paragraph{Figure~\ref{fig:conf_width_rand_bernoulli_maximum_heterogeneity} - right panel} For Bernouli networks, instead of the simulation setup described in Section~\ref{subsec:simulation_methods}, we first draw $M_{ij} \ind \Unif(0,1)$, and fix $n = 200$ and $K = 5$. Then, for $\gamma \in \{0.001, 0.005, 0.01, 0.015, 0.02$, $0.03, 0.04, 0.05, 0.075, 0.1, 0.15, 0.2, 0.3, 0.4, 0.499\}$, we average the results across 5,000 repetitions.

\paragraph{Table \ref{tab:coverage_naive_thinning_poisson_gaussian}} For Gaussian and Poisson networks, we fix $\rho_1 = 30$, $\rho_2 = 27$, $\Ktr = 5$, $\epsilon = 0.5$, and $\tau^2 = 25$ (Gaussian edges only). Then, for $n \in \{100, 200, 500\}$ and $K \in \{2, 5, 10\}$, we average the results across 5,000 repetitions.

\paragraph{Table \ref{tab:coverage_naive_fission_bernoulli}} For Bernoulli networks, we fix $\rho_1 = 0.75$, $\rho_2 = 0.5$, $\Ktr = 5$, and $\gamma = 0.25$. Then, for $n \in \{100, 200, 500\}$ and $K \in \{2, 5, 10\}$, we average the results across 5,000 repetitions.

\paragraph{Figures~\ref{fig:conf_width_rand_gaussian_maximum_heterogeneity} and \ref{fig:conf_width_rand_poisson_maximum_heterogeneity} - left panel} For Gaussian and Poisson networks, instead of the simulation setup in Section~\ref{subsec:simulation_methods}, we first draw $M_{ij} \ind \Unif(0,20)$, and fix $n = 200$, $\epsilon = 0.5$, and $\tau^2 = 25$ (Gaussian edges only). Then, for $K \in \{2, 5, 10\}$, we average the results across 5,000 repetitions.

\paragraph{Figures~\ref{fig:conf_width_rand_gaussian_maximum_heterogeneity} and \ref{fig:conf_width_rand_poisson_maximum_heterogeneity} - right panel} For Gaussian and Poisson networks, instead of the simulation setup in Section~\ref{subsec:simulation_methods}, we first draw $M_{ij} \ind \Unif(0,20)$, and fix $n = 200$, $K = 5$, $\epsilon = 0.5$, and $\tau^2 = 25$ (Gaussian edges only). Then, for $\epsilon \in \{0.10, 0.20, 0.30, 0.40, 0.50, 0.60, 0.70, 0.80, 0.90\}$, we average the results across 5,000 repetitions.

\subsection{Numerical considerations in Proposition~\ref{prop:squiggle_estimation}}
\label{appendix:numerical_issues_squiggle_estimation}

In practice, when $\gamma$ is close to $0$ (and especially if $n$ is relatively small), direct application of Proposition~\ref{prop:squiggle_estimation} may lead to numerical issues that can be readily addressed.

First, the variance of $\hat{V}^{(s)}_{n, k \ell}$ cannot exceed $0.25$ (the maximum variance for any distribution supported in $[0, 1]$), and so we set the variance estimate $\hat{\Delta}^{(s)}_{n, k \ell}$ to $0.25$ when its initial calculation exceeds $0.25$.

Second, when $\gamma$ is close to $0$, $\hat{B}_{n, k \ell}^{(s)}$ may be exactly $0$ or $1$, leading to $\hat{\Delta}_{n, k \ell}^{(s)} = 0$. When this happens, we can either replace $\hat{\Delta}^{(s)}_{n, k \ell}$ with the conservative value of $0.25$, or else re-compute $\hat{\Delta}^{(s)}_{n, k \ell}$ with the occurrences of $\hat{B}_{n, k \ell}^{(s)}$ appearing in the numerator of the expression for $\hat{\Delta}^{(s)}_{n, k \ell}$ replaced by a small constant (e.g., $\eta=10^{-8}$ or $\eta^{(s)}_{n, k \ell} = \frac{1}{2|\Ic'^{(s)}_{n, k \ell}|}$). We use the latter approach in our software implementation, but either of these approaches maintains the valid (conservative) coverage guarantee given in Proposition~\ref{prop:squiggle_estimation}.

\subsection{Construction of naive confidence intervals for $\theta(A)$}
\label{appendix:naive-cis}

In Figures~\ref{fig:conf_width_rand_gaussian}, \ref{fig:conf_width_rand}, \ref{fig:conf_width_rand_bernoulli}, \ref{fig:conf_width_rand_bernoulli_maximum_heterogeneity}, \ref{fig:conf_width_rand_gaussian_maximum_heterogeneity}, \ref{fig:conf_width_rand_poisson_maximum_heterogeneity}, and Tables~\ref{tab:coverage_naive_thinning_poisson_gaussian} and \ref{tab:coverage_naive_fission_bernoulli}, we compare our proposed approach to naive confidence intervals that arise when the same network $A$ is used to both select as well as conduct inference on the selected parameters, without accounting for the double use of data. Here, we provide the details of the naive method.

Recall from Section~\ref{sec:gaussian-poisson} that for networks with Gaussian or Poisson edges, we conduct inference for $\theta(\Atr)$ defined in \eqref{eq:target_of_inference}. We define the naive selected parameter
\begin{equation}
    \theta(A) := u^\top \textnormal{vec} \left( \left( \hat{Z}^\top \hat{Z} \right)^\top \hat{Z}^\top \E[A] \hat{Z} \left( \hat{Z}^\top \hat{Z} \right)^\top \right),
    \label{eq:naive_target_appendix}
\end{equation}
where $\theta(A)$ depends on $A$ through estimated communities $\hat{Z} = \hat{Z}(A)$, and also through $u = u(A)$ where $\Vert u \Vert_2 = 1$.

For Bernoulli networks, recall from Section~\ref{sec:bernoulli_edges} that we conduct inference for $\xi(\Atr)$ in \eqref{eq:xi}. However, there is no analogue of $\xi(\Atr)$ for naive inference, because its construction depends on $\Atr$. So, for Bernoulli edges, our naive selected parameter is $\theta(A)$ as in \eqref{eq:naive_target_appendix}.

Thus, for networks with Gaussian, Poisson, or Bernoulli edges, we construct naive confidence intervals for $\theta(A)$ as $\hat{\theta}(A) \pm \phi_{1-\alpha/2} \cdot \hat{\sigma}$, where $\phi_{1-\alpha/2}$ is the $1-\frac{\alpha}{2}$ quantile of the $\mathcal{N}(0,1)$ distribution, and where $\hat{\theta}(A)$ is defined as $\hat{\theta}(A) := u^\top \textnormal{vec} \left( \left( \hat{Z}^\top \hat{Z} \right)^\top \hat{Z}^\top A \hat{Z} \left( \hat{Z}^\top \hat{Z} \right)^\top \right)$. For networks with Gaussian edges, we construct $\hat{\sigma}^2$ as 
$\hat{\sigma}^2 := \tau^2 u^\top \left( \left( \hat{Z}^\top \hat{Z} \right)^{-1} \otimes \left( \hat{Z}^\top \hat{Z} \right)^{-1} \right) u$, where $\otimes$ is the Kronecker product.

For both Poisson and Bernoulli edges, define $\Ic_{k \ell} := \{(i,j) : \hat{Z}_{i k} = 1, \hat{Z}_{j \ell} = 1\}$, $\hat{B}_{k \ell} := \frac{1}{|\Ic_{k \ell}|} \sum_{(i,j) \in \Ic_{k \ell}} A_{ij}$, and define $\hat{\sigma}^2 := u^\top \textnormal{diag} \left( \textnormal{vec} \left( \hat{\Delta} \right) \right) u$, where $\hat{\Delta} \in \mathbb{R}^{K \times K}$. For Poisson edges, we define $\hat{\Delta}_{k \ell} := \dfrac{\hat{B}_{k \ell}}{|\Ickl|}$, and for Bernoulli edges we define $\hat{\Delta}_{k \ell} := \frac{\hat{B}_{k \ell} (1 - \hat{B}_{k \ell})}{|\Ic_{k \ell}|}$.

\section{Visualizations of dolphin relationship network}
\label{appendix:dolphin_visuals}

Figures~\ref{fig:dolphins_cartoon_adjacency_matrix} and \ref{fig:dolphins_network_cartoon} provide visualizations of the adjacency matrix and undirected network for the dolphin relationship data of \citet{lusseau2003bottlenose}, discussed in Section~\ref{sec:dolphins_application}.

\begin{figure}[!htpb]
    \centering

    \begin{subfigure}[t]{0.48\textwidth}
        \centering
        \includegraphics[height=3.0in]{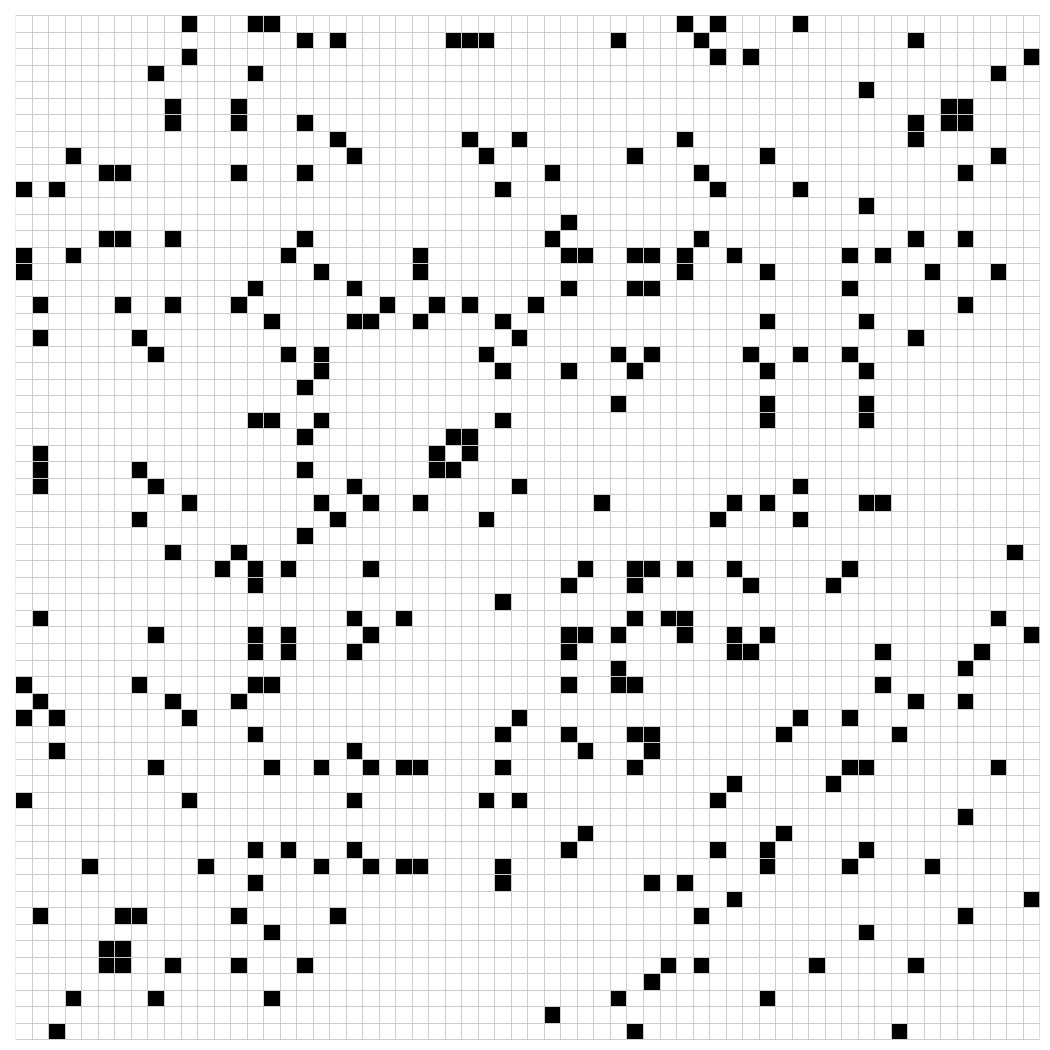}
        \subcaption{The adjacency matrix representing the relationships among dolphins from the study of \citet{lusseau2003bottlenose}, where the dolphins are arbitrarily ordered from 1 to 62. A black cell in the $i$th row and $j$th column indicates that the $i$th and $j$th dolphins interacted with each other consistently over the study period.}
        \label{fig:dolphins_cartoon_adjacency_matrix}
    \end{subfigure}%
    \hfill
    \begin{subfigure}[t]{0.48\textwidth}
        \centering
        \includegraphics[height=3.0in]{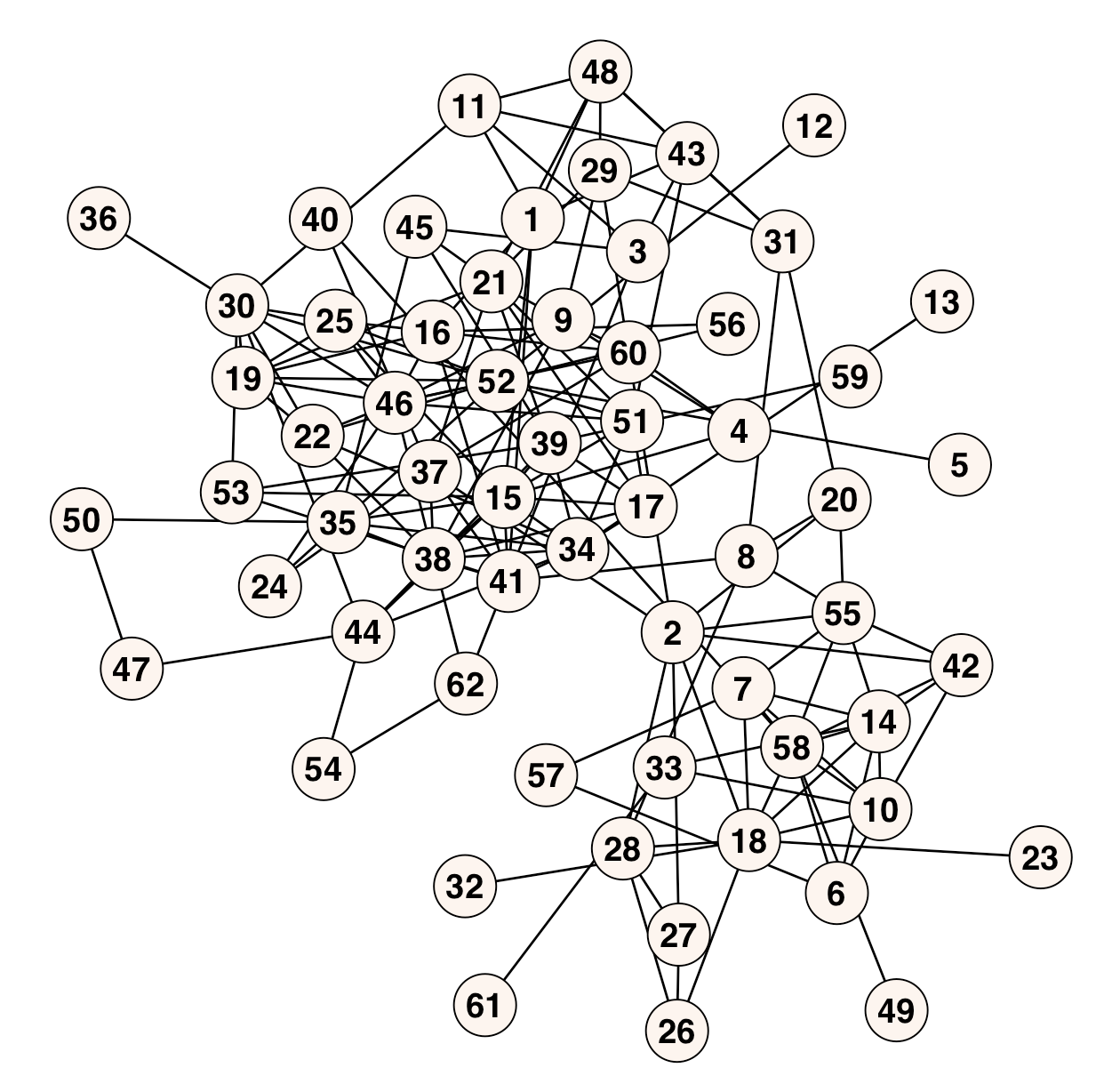}
        \subcaption{A visual representing the relationships between dolphins (represented as nodes in the graph) from the study of \citet{lusseau2003bottlenose}, where the dolphins are arbitrarily ordered from 1 to 62. A solid line between nodes indicates that those dolphins consistently interacted with each other over the study period.}
        \label{fig:dolphins_network_cartoon}
    \end{subfigure}

    \caption{Relationship representations for dolphins in \citet{lusseau2003bottlenose}.}
    \label{fig:dolphins_overall}
\end{figure}



\section{Proofs of theoretical results}
\label{appendix:all-proofs}

\subsection{Machinery for network asymptotics}
\label{subsec:network_asymptotics}

Our proofs of Propositions \ref{prop:normal_estimation},
\ref{prop:poisson_estimation}, and \ref{prop:squiggle_estimation} rely on many
common results; for convenience we establish them in this section. Lemmas
\ref{lemma:mixing_wlln} and \ref{lemma:mixing_clt} extend the weak law of large
numbers and the Lindeberg-Feller central limit theorem, respectively, to the setting
of averages over fixed subsets, where the sizes of the subsets grow at a sufficiently fast
rate. Lemma \ref{lemma:marginal_independence_joint_convergence} is a technical tool used in Propositions \ref{prop:poisson_estimation} and \ref{prop:squiggle_estimation} that allows one to combine multiple convergence results each applying to a single community pair $(k, \ell) \in \{1, 2, \dots, K\}^2$ into a single joint convergence result. Lemma \ref{lemma:alternative_cmt} is a specialization of the continuous mapping theorem used in the proof of Proposition~\ref{prop:squiggle_estimation}. Lemmas \ref{lemma:gaussian_rotation} and
\ref{lemma:asymptotic_variance_un_equal} are technical tools used in the proofs of Propositions~\ref{prop:poisson_estimation} and \ref{prop:squiggle_estimation} that allow us to incorporate the linear combination vector $u$ and establish the validity of our estimate of the estimator's variance.

\begin{lemma}[A weak law of large numbers over subsets]
    \label{lemma:mixing_wlln}
    Consider a pyramidal array $(Y_{n,ij})_{1 \le i,j \le n < \infty}$ of
    random variables such that those within a slice $Y_n = (Y_{n,ij})_{1 \le i,j
      \le n}$ are all mutually independent, but not necessary identically
    distributed, with a uniformly bounded variance $\Var(Y_{n,ij}) \le
    L_0 < \infty$ where $L_0$ does not depend on $n$, $i$, or $j$. Then, consider a $\{0, 1\}$-valued
    non-random pyramidal array $(w_{n,ij})_{1 \le i,j \le n < \infty}$, with a
    corresponding sequence of induced index sets $\Ic_n := \{(i,j) : w_{n,ij} =
    1\}$, and assume that $\lim_{n \to \infty} |\Ic_n|^{-1} = 0$. Then, $\dfrac{1}{|\Ic_n|} \sum_{(i,j) \in \Ic_n} \left( Y_{n,ij} -  \E[Y_{n,ij} ] \right) \overset{\text{p}}{\rightarrow} 0$.
\end{lemma}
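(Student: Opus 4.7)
The plan is to apply Chebyshev's inequality directly to the centered sum
\begin{equation*}
S_n := \frac{1}{|\Ic_n|} \sum_{(i,j) \in \Ic_n} \left( Y_{n,ij} - \E[Y_{n,ij}] \right),
\end{equation*}
which by linearity has mean zero. Convergence in probability to zero will then follow immediately once we show $\Var(S_n) \to 0$.

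The key observation is that because $(w_{n,ij})$ is non-random, the induced index set $\Ic_n$ is a deterministic (non-random) sequence of subsets, so the independence of the summands within slice $n$ is preserved on the restriction to $\Ic_n$. Combining this with the uniform bound $\Var(Y_{n,ij}) \le L_0$ yields $\Var(S_n) \le L_0/|\Ic_n|$, which tends to zero by the hypothesis $|\Ic_n|^{-1} \to 0$. Chebyshev's inequality then gives $P(|S_n| \ge \varepsilon) \le L_0/(\varepsilon^2 |\Ic_n|) \to 0$ for any $\varepsilon > 0$, which is precisely the claimed convergence.

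There is no substantive obstacle to overcome; the argument is essentially a routine WLLN for triangular arrays of independent (but not identically distributed) random variables with uniformly bounded variances. The only subtlety worth flagging is the role of the deterministic nature of $(w_{n,ij})$: it is what allows the variance of $S_n$ to reduce cleanly to the sum of per-term variances without cross-terms. If $\Ic_n$ were instead random and chosen adaptively from the $Y_{n,ij}$, an extra argument (e.g., conditioning) would be needed, but as stated the lemma requires no such refinement.
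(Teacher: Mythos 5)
Your proposal is correct and follows essentially the same route as the paper: both apply Chebyshev's inequality to the centered average and use the mutual independence within each slice plus the uniform variance bound to obtain $\Var(S_n) \le L_0/|\Ic_n| \to 0$. No substantive difference.
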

\begin{proof}
    Note that $\dfrac{1}{|\Ic_n|} \sum_{(i,j) \in \Ic_n} (Y_{n,ij} - \E[Y_{n,ij}]) = \dfrac{1}{|\Ic_n|} \sum_{i=1}^{n} \sum_{j=1}^n (w_{n,ij} Y_{n,ij} - w_{n,ij} \E[Y_{n,ij}])$. Then, for any $\delta > 0$, by Chebyshev's inequality we have
    \begin{align*}
        P \Bigg( \Bigg| &\dfrac{1}{|\Ic_n|} \sum_{i=1}^{n} \sum_{j=1}^n (w_{n,ij} Y_{n,ij} - w_{n,ij} \E[Y_{n,ij}]) \Bigg| > \delta \Bigg) \le \delta^{-2} \Var \left( \dfrac{1}{|\Ic_n|} \sum_{i=1}^{n} \sum_{j=1}^n w_{n,ij} Y_{n,ij} \right) \\
        &= (|\Ic_n| \delta)^{-2} \sum_{i=1}^{n} \sum_{j=1}^n w_{n,ij} \Var(Y_{n,ij}) \le |\Ic_n|^{-1} (\delta^{-2} L_0).
    \end{align*}
    As $\lim_{n \to \infty} |\Ic_n|^{-1} = 0$, we have $\lim_{n \to \infty} P \Big( \Big| \dfrac{1}{|\Ic_n|} \sum_{(i,j) \in \Ic_n} (Y_{n,ij} - \E[Y_{n,ij}]) \Big| > \delta \Big) = 0$.
\end{proof}

\begin{lemma}[A Central Limit Theorem for taking averages over subsets]
    \label{lemma:mixing_clt}
    Consider a pyramidal array $(Y_{n,ij})_{1 \le i,j \le n < \infty}$ of random variables such that those within a slice $(Y_{n,ij})_{1 \le i,j \le n}$ are all mutually independent, but not necessarily identically distributed,
    where $0 < L_0 \le \Var(Y_{n,ij}) \le L_1$ and $\E[|Y_{n,ij} - \E[Y_{n,ij}]|^3] \le L_2$ for finite constants $L_0$, $L_1$, and $L_2$ not depending on $n$, $i$, or $j$. Then, consider a $\{0, 1\}$-valued non-random pyramidal array
    $(w_{n,ij})_{1 \le i,j \le n < \infty}$, with a corresponding sequence of induced index sets $\Ic_n := \{(i,j) : w_{n,ij} = 1\}$, and  assume that $|\Ic_n|^{-1} = O(n^{-2})$. Then,
    \begin{equation*}
        \dfrac{1}{\sqrt{\sum_{(i,j) \in \Ic_n} \Var(Y_{n,ij})}} \sum_{(i,j) \in \Ic_n} (Y_{n,ij} - \E[Y_{n,ij}]) \overset{\text{d}}{\rightarrow} \mathcal{N}(0,1).
    \end{equation*}
\end{lemma}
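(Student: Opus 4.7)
The plan is to reduce to the classical Lyapunov central limit theorem by enumerating the indices in $\Ic_n$ and verifying the Lyapunov condition directly from the assumed uniform bounds on the second and third moments.

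First, I would fix an enumeration of $\Ic_n = \{(i_1,j_1), \ldots, (i_{|\Ic_n|}, j_{|\Ic_n|})\}$ and define centered, independent random variables $X_{n,k} := Y_{n,i_k j_k} - \E[Y_{n,i_k j_k}]$ for $k = 1, \ldots, |\Ic_n|$. By the assumptions, these satisfy $\E[X_{n,k}] = 0$, $L_0 \le \Var(X_{n,k}) \le L_1$, and $\E[|X_{n,k}|^3] \le L_2$, uniformly in $n$ and $k$. Writing $s_n^2 := \sum_{(i,j) \in \Ic_n} \Var(Y_{n,ij}) = \sum_{k=1}^{|\Ic_n|} \Var(X_{n,k})$, the target expression is exactly $s_n^{-1} \sum_{k=1}^{|\Ic_n|} X_{n,k}$.

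Next, I would verify the Lyapunov condition with $\delta = 1$. The numerator is bounded above by $L_2 |\Ic_n|$, and since $\Var(X_{n,k}) \ge L_0$, the denominator satisfies $s_n^3 \ge L_0^{3/2} |\Ic_n|^{3/2}$. Hence
\begin{equation*}
\frac{1}{s_n^3} \sum_{k=1}^{|\Ic_n|} \E[|X_{n,k}|^3] \;\le\; \frac{L_2}{L_0^{3/2}} \, |\Ic_n|^{-1/2}.
\end{equation*}
The hypothesis $|\Ic_n|^{-1} = O(n^{-2})$ implies $|\Ic_n| \to \infty$, so the right-hand side tends to $0$, and the Lyapunov condition is satisfied.

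Finally, I would invoke the Lyapunov central limit theorem (a standard strengthening of Lindeberg-Feller using a $(2+\delta)$-moment bound in place of the Lindeberg condition) to conclude $s_n^{-1} \sum_{k=1}^{|\Ic_n|} X_{n,k} \overset{d}{\to} \mathcal{N}(0,1)$, which is the desired statement. There is no real obstacle here: the uniform lower bound $L_0$ on the variances prevents the normalizer from degenerating, the uniform upper bound $L_2$ on the third absolute moments controls the numerator, and the assumption that $|\Ic_n| \to \infty$ (implied by $|\Ic_n|^{-1} = O(n^{-2})$, though only divergence is actually used) closes the argument.
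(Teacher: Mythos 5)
Your proposal is correct and follows essentially the same route as the paper's proof: both verify the Lyapunov condition with $\delta = 1$ using the lower bound $s_n^3 \ge L_0^{3/2}\lvert\Ic_n\rvert^{3/2}$ and the upper bound $L_2 \lvert \Ic_n \rvert$ on the summed third absolute moments, then invoke the Lindeberg--Feller--Lyapunov CLT. The only cosmetic difference is that you enumerate $\Ic_n$ directly while the paper works with the indicator-weighted variables $w_{n,ij} Y_{n,ij}$ summed over all $(i,j)$; your observation that only $\lvert\Ic_n\rvert \to \infty$ is actually needed is also accurate.
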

\begin{proof}
    We apply the Lindeberg-Feller-Lyapunov Central Limit Theorem. First, define  $Z_{n,ij} := w_{n,ij} Y_{n,ij}$, implying that $\E[Z_{n,ij}] = w_{n,ij} \E[Y_{n,ij}]$ and $\Var(Z_{n,ij}) = w_{n,ij} \Var(Y_{n,ij})$. Defining $\sigma_n^2 := \Var(\sum_{i=1}^n \sum_{j=1}^n Z_{n,ij}) = \sum_{i=1}^n \sum_{j=1}^n w_{n,ij} \Var(Y_{n,ij})$, note that $\sigma_n^2 < \infty$ for all $n$ by the fact that $\Var(Y_{n,ij}) \le L_1$.

    Next, note that $\sigma_n^3 = (\sigma_n^2)^{3/2} \ge (|\Ic_n| L_0)^{3/2} = |\Ic_n|^{3/2} L_0^{3/2}$, and also that $\sum_{i=1}^n \sum_{j=1}^n \E[|Z_{n,ij} - \E[Z_{n,ij}]|^3] \le |\Ic_n| L_2$. Hence, because
    \begin{align*}
        \sigma_n^{-3} \sum_{i=1}^{n} \sum_{j=1}^n &\E [|Z_{n,ij} - \E[Z_{n,ij}]|^3] \le |\Ic_n|^{-3/2} |\Ic_n| L_0^{-3/2} L_2 = |\Ic_n|^{-1/2} (L_0^{-3/2} L_2) = O(n^{-1}),
    \end{align*}
    the Lyapunov condition holds. So, by the Lindeberg-Feller-Lyapunov central limit theorem, $\dfrac{1}{\sigma_n} \sum_{i=1}^{n} \sum_{j=1}^n (Z_{n,ij} - \E[Z_{n,ij}]) \overset{\text{d}}{\rightarrow} \mathcal{N}(0,1)$. Equivalently, \\$\frac{1}{\sqrt{\sum_{(i,j) \in \Ic_n} \Var(Y_{n,ij})}} \sum_{(i,j) \in \Ic_n} (Y_{n,ij} - \E[Y_{n,ij}]) \overset{\text{d}}{\rightarrow} \mathcal{N}(0,1)$.
\end{proof}

\begin{lemma}    \label{lemma:marginal_independence_joint_convergence}
    Suppose that $X^{(r)}_n \overset{\text{d}}{\rightarrow} X^{(r)}$ for $r = 1, 2, \dots, s$, and that $X_n^{(1)}, X_n^{(2)}, \dots, X_n^{(s)}$ are mutually independent for all $n$. Then, there exist mutually independent random variables $\tilde{X}^{(1)}, \tilde{X}^{(2)}, \dots, \tilde{X}^{(s)}$ such that
    $(X^{(1)}_n, X_n^{(2)}, \dots, X^{(s)}_n) \overset{\text{d}}{\rightarrow}(\tilde{X}^{(1)}, \tilde{X}^{(2)}, \dots,  \tilde{X}^{(s)})$.
\end{lemma}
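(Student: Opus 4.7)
The plan is to proceed via characteristic functions, using Lévy's continuity theorem together with the fact that marginal independence causes the joint characteristic function to factorize. First I would construct the limiting variables: on a suitable probability space (via product measures), define $\tilde{X}^{(1)}, \dots, \tilde{X}^{(s)}$ to be mutually independent with $\tilde{X}^{(r)} \stackrel{d}{=} X^{(r)}$ for each $r$. Such a construction exists by elementary product-measure arguments, regardless of whether the $X^{(r)}$ are scalar- or vector-valued.

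Next, I would write down the joint characteristic function of $(X^{(1)}_n, \dots, X^{(s)}_n)$. For any real vectors $t_1, \dots, t_s$ of appropriate dimension, mutual independence of the $X^{(r)}_n$ within each $n$ yields
\begin{equation*}
\phi_n(t_1, \dots, t_s) := \E\!\left[\exp\!\left(i \sum_{r=1}^s t_r^\top X^{(r)}_n\right)\right] = \prod_{r=1}^s \E\!\left[\exp(i t_r^\top X^{(r)}_n)\right] = \prod_{r=1}^s \phi^{(r)}_n(t_r).
\end{equation*}
By the hypothesized marginal convergence in distribution and the Lévy continuity theorem applied to each coordinate, $\phi^{(r)}_n(t_r) \to \phi^{(r)}(t_r)$ as $n \to \infty$ for each $r$, where $\phi^{(r)}$ denotes the characteristic function of $X^{(r)}$ (equivalently, of $\tilde{X}^{(r)}$).

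Taking products of $s$ convergent sequences of bounded complex numbers gives $\phi_n(t_1, \dots, t_s) \to \prod_{r=1}^s \phi^{(r)}(t_r)$ pointwise. By construction of the $\tilde{X}^{(r)}$ as independent with the correct marginals, the limit equals the joint characteristic function of $(\tilde{X}^{(1)}, \dots, \tilde{X}^{(s)})$, which is continuous at $0$. A second application of Lévy's continuity theorem (now in the joint dimension) then delivers $(X^{(1)}_n, \dots, X^{(s)}_n) \overset{d}{\rightarrow} (\tilde{X}^{(1)}, \dots, \tilde{X}^{(s)})$.

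There is no real obstacle here; the argument is standard and the only care needed is to invoke Lévy's continuity theorem in the correct dimensions and to note that the factorization of the joint characteristic function uses precisely the per-$n$ independence assumption. The conclusion is insensitive to whether the $X^{(r)}$ live in $\mathbb{R}$ or $\mathbb{R}^{d_r}$, which is useful since the applications of this lemma in Propositions \ref{prop:poisson_estimation} and \ref{prop:squiggle_estimation} involve possibly vector-valued summaries indexed by community pairs $(k,\ell)$.
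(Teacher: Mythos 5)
Your argument is correct and follows essentially the same route as the paper's proof: factorize the joint characteristic function using the per-$n$ mutual independence, pass to the limit factor by factor, construct independent copies $\tilde{X}^{(r)}$ with the correct marginals, and conclude by the converse direction of the continuity theorem (the paper cites Theorem 2.13 of van der Vaart where you cite L\'evy). No gaps.
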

\begin{proof}
    Because $X^{(r)}_n \overset{\text{d}}{\rightarrow} X^{(r)}$, by Theorem 2.13 in \citet{van2000asymptotic}, for all $t_r \in \mathbb{R}$ we have
    \begin{align}
        \lim_{n \to \infty} &\E[\exp(i t_r X^{(r)}_n)] = \E[\exp(i t_r X^{(r)})]. \label{eq:marg_conv_thing_1}
    \end{align}
    Define $Y_n = (X^{(1)}, X^{(2)}, \dots, X^{(s)})$, and let $\tilde{t} = (\tilde{t}_1, \tilde{t}_2, \dots, \tilde{t}_s) \in \mathbb{R}^s$. Then,
    \begin{align*}
        \E[\exp(i \tilde{t}^\top Y_n)] &= \E \left[\exp \left( i \sum_{r=1}^s \tilde{t}_r X_n^{(r)} \right) \right] = \E \left[ \prod_{r=1}^s \exp \left( i \tilde{t}_r X_n^{(r)} \right) \right] = \prod_{r=1}^s \E[\exp(i \tilde{t}_r X^{(r)}_n)],
    \end{align*}
    where the last equality follows by the mutual independence of $X^{(1)}_n, X^{(2)}_n, \dots, X^{(s)}_n$. Taking limits and applying \eqref{eq:marg_conv_thing_1},
    \begin{align}
        \lim_{n \to \infty} \E[\exp(i \tilde{t}^\top Y_n)] = \prod_{r=1}^s \E[\exp(i \tilde{t}_r X^{(r)})], \label{eq:marg_conv_limit_guy}.
    \end{align}
    Next, construct new random variables $\tilde{X}^{(1)}, \tilde{X}^{(2)}, \dots, \tilde{X}^{(s)}$ so that $\tilde{X}^{(r)}$ is equal in distribution to $X^{(r)}$ for all $r = 1, 2, \dots, s$, and $\tilde{X}^{(1)}, \tilde{X}^{(2)}, \dots, \tilde{X}^{(s)}$ are mutually independent.

    By Lemma 2.15 in \citet{van2000asymptotic}, $\E[\exp(i t_r X^{(r)})] = \E[\exp(i t_r \tilde{X}^{(r)})]$ for all $t_r \in \mathbb{R}$. Using this with \eqref{eq:marg_conv_limit_guy}, we have $\lim_{n \to \infty} \E[\exp(i \tilde{t}^\top Y_n)] = \prod_{r=1}^s \E[\exp(i \tilde{t}_r \tilde{X}^{(r)})] = \E[\exp(i \tilde{t}^\top \tilde{Y})]$, where $\tilde{Y} := (\tilde{X}^{(1)}, \tilde{X}^{(2)}, \dots, \tilde{X}^{(s)})$, and where the last equality follows by the mutual independence of $\tilde{X}^{(1)}, \tilde{X}^{(2)}, \dots, \tilde{X}^{(s)}$.
    
    By Theorem 2.13 in \citet{van2000asymptotic}, we have $(X^{(1)}_n, X^{(2)}_n, \dots, X_n^{(s)}) \overset{\text{d}}{\rightarrow} (\tilde{X}^{(1)}, \tilde{X}^{(2)}, \dots, \tilde{X}^{(s)})$, and note that $\tilde{X}^{(1)}, \tilde{X}^{(2)}, \dots, \tilde{X}^{(s)}$ are mutually independent by construction.
\end{proof}

\begin{lemma}
    \label{lemma:alternative_cmt}
    For two (possibly non-convergent) sequences of random variables $(X_n)_{n=1}^{\infty}$ and $(Y_n)_{n=1}^{\infty}$, suppose that $X_n - Y_n = o_p(1)$, and that there exists a compact set $U = [u_0, u_1] \subset \mathbb{R}$ not depending on $n$ such that $0 < u_0$, and $\lim_{n \to \infty} P(X_n \in U, Y_n \in U) = 1$.

    Then, for a function $g$ uniformly continuous on $U$, we have $g(X_n) - g(Y_n) = o_p(1)$.
\end{lemma}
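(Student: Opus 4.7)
The plan is a direct $\varepsilon$--$\delta$ argument that uses uniform continuity of $g$ on $U$ together with the two hypotheses: (i) $X_n - Y_n = o_p(1)$, and (ii) $P(X_n \in U, Y_n \in U) \to 1$. The set $U$ being compact and bounded away from zero plays no essential role in the logic of the proof itself; these assumptions simply guarantee that $g$ can be taken to be uniformly continuous on $U$, which is what later applications will require when $g$ involves reciprocals or logarithms.

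First I would fix an arbitrary $\varepsilon > 0$. By uniform continuity of $g$ on $U$, there exists $\delta > 0$ (depending only on $\varepsilon$, not on $n$) such that for all $x, y \in U$ with $|x - y| < \delta$, we have $|g(x) - g(y)| < \varepsilon$. Let $E_n := \{X_n \in U\} \cap \{Y_n \in U\}$; by hypothesis $P(E_n) \to 1$, equivalently $P(E_n^c) \to 0$.

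Next I would decompose the target probability. On the event $E_n \cap \{|X_n - Y_n| < \delta\}$, the choice of $\delta$ above forces $|g(X_n) - g(Y_n)| < \varepsilon$. Therefore
\begin{equation*}
P(|g(X_n) - g(Y_n)| \ge \varepsilon) \le P(E_n^c) + P(|X_n - Y_n| \ge \delta).
\end{equation*}
The first term tends to zero by assumption (ii), and the second tends to zero because $X_n - Y_n = o_p(1)$. Since $\varepsilon$ was arbitrary, this establishes $g(X_n) - g(Y_n) = o_p(1)$.

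There is no real obstacle: the proof is essentially the standard argument that the continuous mapping theorem survives when convergence is replaced by asymptotic equivalence, provided the inputs lie in a set on which the map is uniformly continuous with high probability. The only mild subtlety to flag is that, because neither $X_n$ nor $Y_n$ is assumed to converge in distribution, one cannot invoke the continuous mapping theorem directly; the uniform (rather than merely pointwise) continuity of $g$ on $U$ is what lets the argument proceed with a single $\delta$ that works simultaneously for both sequences.
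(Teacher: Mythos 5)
Your proof is correct and follows essentially the same route as the paper's: both use uniform continuity to obtain a single modulus $\delta$, then bound $P(|g(X_n)-g(Y_n)|\ge\varepsilon)$ by the sum of $P(\{X_n\in U, Y_n\in U\}^c)$ and $P(|X_n-Y_n|\ge\delta)$, each of which vanishes by hypothesis. No gaps to report.
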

\begin{proof}

    Fix any $\delta > 0$ and $\lambda > 0$. Because $g$ is uniformly continuous on $U$, there exists an $\eta > 0$ such that for any $z_1, z_2 \in U$ with $|z_1 - z_2| < \eta$, we have $|g(z_1) - g(z_2)| < \delta$. As a consequence, $P(|g(X_n) - g(Y_n)| \ge \delta) \le P(|X_n - Y_n| \ge \eta)$. Next, because $X_n - Y_n = o_p(1)$, there exists $K_{1}$ such that for $n > K_{1}$ we have $P(|X_n - Y_n| \ge \eta) \le \frac{\lambda}{2}$. Then, because $\lim_{n \to \infty} P(X_n \in U, Y_n \in U) = 1$, there exists $K_2$ such that for $n > K_2$ we have $P(X_n \in U, Y_n \in U) \ge 1 - \frac{\lambda}{2}$.

    So, for $n > \textnormal{max}(K_1, K_2)$, we have
    \begin{align*}
        P(|g(X_n) &- g(Y_n)| \ge \delta) = P(|g(X_n) - g(Y_n)| \ge \delta, ~ X_n \in U, Y_n \in U) \\
        & \quad + P(|g(X_n) - g(Y_n)| \ge \delta,~ ((X_n \notin U) \cup (Y_n \notin U)))\\
        &\le P(|X_n - Y_n| \ge \eta) + P((X_n \notin U) \cup (Y_n \notin U) ) \\
        &\le \frac{\lambda}{2} + (1 - P(X_n \in U, Y_n \in U)) \le \lambda.
    \end{align*}
    Because the choice of $\delta$ and $\lambda$ was arbitrary, we conclude that $g(X_n) - g(Y_n) = o_p(1)$.
\end{proof}

\begin{lemma}
    \label{lemma:gaussian_rotation}
    Suppose that $\Sigma_n^{-1/2} \left( \hat{\phi}_n - \phi_n \right) \overset{\text{d}}{\rightarrow} \mathcal{N}_r(0, I_r)$, where $\Sigma_n \in \mathbb{R}^{r \times r}$ is a non-random positive definite diagonal matrix, $\hat{\phi}_n$ is a random $r$-dimensional vector, and $\phi_n$ is a non-random $r$-dimensional vector. Then, for any sequence of vectors $(u_n)_{n=1}^{\infty}$ where $u_n \in \mathbb{R}^{r}$ satisfies $\Vert u_n \Vert_2 = 1$, it follows that
    $\left( u_n^\top \Sigma_n u_n \right)^{-1/2} u_n^\top \left( \hat{\phi}_n - \phi_n \right) \overset{\text{d}}{\longrightarrow} \mathcal{N}(0,1)$.
\end{lemma}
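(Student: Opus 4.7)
The plan is to reduce the claim to a one-dimensional statement for $v_n^\top Z_n$, where $Z_n$ is the standardized vector from the hypothesis and $v_n$ is a deterministic unit vector, and then handle the dependence of $v_n$ on $n$ via a subsequence argument on the compact unit sphere. Set $Z_n := \Sigma_n^{-1/2}(\hat{\phi}_n - \phi_n)$, so that by hypothesis $Z_n \overset{\text{d}}{\longrightarrow} \mathcal{N}_r(0, I_r)$. Since $\Sigma_n$ is positive definite and $\Vert u_n \Vert_2 = 1$, the scalar $u_n^\top \Sigma_n u_n$ is strictly positive, and so $v_n := \Sigma_n^{1/2} u_n / \sqrt{u_n^\top \Sigma_n u_n}$ is a well-defined deterministic vector in $\mathbb{R}^r$ with $\Vert v_n \Vert_2 = 1$. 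A direct calculation gives $(u_n^\top \Sigma_n u_n)^{-1/2} u_n^\top (\hat{\phi}_n - \phi_n) = v_n^\top Z_n$, so it suffices to prove $v_n^\top Z_n \overset{\text{d}}{\longrightarrow} \mathcal{N}(0, 1)$.

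The main obstacle is precisely the $n$-dependence of $v_n$: the continuous mapping theorem does not apply directly to the map $Z_n \mapsto v_n^\top Z_n$. I would resolve this via a standard subsequence argument. Let $Z \sim \mathcal{N}_r(0, I_r)$. Any subsequence of $(v_n)$ lies on the compact unit sphere $S^{r-1}$, so it has a further subsequence (which I do not relabel) with $v_{n_k} \to v^*$ for some $v^* \in S^{r-1}$. Along this subsequence, decompose
\begin{equation*}
v_{n_k}^\top Z_{n_k} = (v^*)^\top Z_{n_k} + (v_{n_k} - v^*)^\top Z_{n_k}.
\end{equation*}
The continuous mapping theorem applied to the fixed linear functional $z \mapsto (v^*)^\top z$ yields $(v^*)^\top Z_{n_k} \overset{\text{d}}{\longrightarrow} (v^*)^\top Z \sim \mathcal{N}(0, \Vert v^* \Vert_2^2) = \mathcal{N}(0, 1)$. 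For the remainder, the Cauchy--Schwarz inequality gives $|(v_{n_k} - v^*)^\top Z_{n_k}| \le \Vert v_{n_k} - v^* \Vert_2 \cdot \Vert Z_{n_k} \Vert_2$; since convergence in distribution implies tightness we have $\Vert Z_{n_k} \Vert_2 = O_p(1)$, and combined with $\Vert v_{n_k} - v^* \Vert_2 \to 0$ this gives $(v_{n_k} - v^*)^\top Z_{n_k} = o_p(1)$. Slutsky's theorem then yields $v_{n_k}^\top Z_{n_k} \overset{\text{d}}{\longrightarrow} \mathcal{N}(0, 1)$ along the sub-subsequence.

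Since the limit $\mathcal{N}(0, 1)$ does not depend on the choice of subsequence or of the limit point $v^*$, the standard subsequence principle for weak convergence implies $v_n^\top Z_n \overset{\text{d}}{\longrightarrow} \mathcal{N}(0, 1)$, which is equivalent to the claim.
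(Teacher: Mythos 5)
Your proof is correct and follows essentially the same route as the paper's: reduce to $v_n^\top Z_n$ with $v_n = \Sigma_n^{1/2}u_n / \lVert \Sigma_n^{1/2} u_n\rVert_2$ on the unit sphere, extract a convergent sub-subsequence by Bolzano--Weierstrass, identify the $\mathcal{N}(0,1)$ limit, and conclude by the subsequence principle for weak convergence. Your explicit decomposition $(v^*)^\top Z_{n_k} + (v_{n_k}-v^*)^\top Z_{n_k}$ with Cauchy--Schwarz and tightness just spells out the step the paper compresses into a single invocation of Slutsky's theorem.
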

\begin{proof}
    Because $u_n$ has unit norm for all $n$, it cannot be the zero vector, and thus there is no possibility of division by zero in the subsequent derivations. It is useful to define $v_n := \Sigma_n^{1/2} u_n$, in which case $u_n = \Sigma_n^{-1/2} v_n$. Note that $(u_n^\top \Sigma_n u_n)^{-1/2} = \Vert v_n \Vert_2^{-1}$, and so by substitution we have $(u_n^\top \Sigma_n u_n)^{-1/2} u_n^\top \left( \hat{\phi}_n - \phi_n \right) = \frac{v_{n}^\top}{\left\lVert v_{n} \right\rVert_2} \Sigma_{n}^{-1/2} \left( \hat{\phi}_n - \phi_n \right)$. Let $w_{n} := \frac{v_{n}}{\left\lVert v_{n} \right\rVert_2}$, and let $Y_{n} := \Sigma_{n}^{-1/2} \left( \hat{\phi}_n - \phi_n \right)$. Note that \(\left\lVert w_{n} \right\rVert_2 = 1\) for all \(n\), and recall that $Y_n \overset{\text{d}}{\rightarrow} \mathcal{N}_r(0, I_r)$ by assumption.

    Now, consider an arbitrary subsequence indexed by $(n_m)_{m=1}^\infty$. Because $\Vert w_{n_m} \Vert_2 = 1$ for all $m$, the sequence is bounded, and by Bolzano-Weierstrass (Theorem~2.42 in \citet{Rudin1976}), there exists a further subsequence $w_{n_{m_s}}$ converging to a fixed $w \in \mathbb{R}^r$. Along this further subsequence, by Slutsky's theorem, it holds that $w_{n_{m_s}}^\top Y_{n_{m_s}} \overset{\text{d}}{\rightarrow} \mathcal{N}(0, \Vert w \Vert_2^2)$. Note that $w_{n_{m_s}}$ lives on the unit sphere, as $\Vert w_{n_{m_s}} \Vert_2 = 1$ for all $s$. The unit sphere is a closed subset of $\mathbb{R}^r$, and because closed sets contain all their limit points, it follows that $w$ also lives on the unit sphere. That is, $\Vert w \Vert_2 = 1$, and so $w_{n_{m_s}}^\top Y_{n_{m_s}} \overset{\text{d}}{\rightarrow} \mathcal{N}(0,1)$.
    
    We have established that each subsequence $w_{n_m}^\top Y_{n_m}$ has a further subsequence $w_{n_{m_s}}^\top Y_{n_{m_s}}$ that converges in distribution to $\mathcal{N}(0,1)$. Letting $P_n$ denote the probability measure of $w_n^\top Y_n$, and letting $P$ denote the probability measure of a $\mathcal{N}(0,1)$ random variable, this means that each subsequence $P_{n_m}$ contains a further subsequence $P_{n_{m_s}}$ that converges weakly to $P$ as $s \to \infty$. Thus, Theorem 2.6 of \citet{billingsley1999ConvergenceProbabilityMeasures} ensures that $P_n$ converges weakly to $P$. Therefore, $w_n^\top Y_n = (u_n^\top \Sigma_n u_n)^{-1/2} u_n^\top \left( \hat{\phi}_n - \phi_n \right) \overset{\text{d}}{\longrightarrow} \mathcal{N}(0,1)$.
\end{proof}

\begin{lemma}
    \label{lemma:asymptotic_variance_un_equal}
    Suppose that $\hat{\Sigma}_n \Sigma_n^{-1} \overset{\text{p}}{\rightarrow} I_r$, where $\Sigma_n \in \mathbb{R}^{r \times r}$ is a non-random positive definite diagonal matrix, and $\hat{\Sigma}_n \in \mathbb{R}^{r \times r}$ is a random positive definite diagonal matrix. Moreover, suppose that $\Sigma_n$ and $\hat{\Sigma}_n$ admit the decompositions $\Sigma_n = N_n^{-1} \tilde{\Sigma}_n$ and $\hat{\Sigma}_n = N_n^{-1} \hat{\tilde{\Sigma}}_n$, where $N_n \in \mathbb{R}^{r \times r}$ is a diagonal matrix such that $0 \le N_{n,ii} \le n^2$ and $(N_{n,ii})^{-1} = O(n^{-2})$ for all $i = 1, 2, \dots, r$, and where $\tilde{\Sigma}_{n,ii}$ is contained in a compact set $[b_0, b_1]$ not depending on $i$ with $0 < b_0$ for all $i = 1, 2, \dots, r$.

    Then, for any sequence of vectors $(u_n)_{n=1}^{\infty}$ with $\Vert u_n \Vert_2 = 1$, we have $\frac{(u_n^\top \hat{\Sigma}_n u_n)^{-1/2}}{(u_n^\top \Sigma_n u_n)^{-1/2}} \overset{\text{p}}{\rightarrow} 1$.
\end{lemma}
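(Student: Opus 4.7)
The plan is to reduce the claim to showing $u_n^\top \hat{\Sigma}_n u_n / u_n^\top \Sigma_n u_n \overset{\text{p}}{\rightarrow} 1$, after which the continuous mapping theorem applied to $x \mapsto x^{-1/2}$ (continuous at $x = 1$) delivers the stated conclusion, since the target ratio is the square root of the reciprocal of this quantity.

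First, because both $\Sigma_n$ and $\hat{\Sigma}_n$ are diagonal, the hypothesis $\hat{\Sigma}_n \Sigma_n^{-1} \overset{\text{p}}{\rightarrow} I_r$ is equivalent to the componentwise convergence $R_{n,i} := \hat{\Sigma}_{n,ii}/\Sigma_{n,ii} \overset{\text{p}}{\rightarrow} 1$ for each $i \in \{1, \dots, r\}$. I would then introduce the deterministic weights
$$w_{n,i} := \frac{u_{n,i}^2 \Sigma_{n,ii}}{u_n^\top \Sigma_n u_n},$$
which are well-defined because $\Vert u_n \Vert_2 = 1$ forces $u_n \ne 0$, and the positive definiteness of $\Sigma_n$ forces $u_n^\top \Sigma_n u_n > 0$. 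The $w_{n,i}$ are non-negative and satisfy $\sum_{i=1}^r w_{n,i} = 1$, so they define a convex combination.

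Next, exploiting the diagonal structure of $\hat{\Sigma}_n$ yields the key identity
$$\frac{u_n^\top \hat{\Sigma}_n u_n}{u_n^\top \Sigma_n u_n} = \frac{\sum_{i=1}^r u_{n,i}^2 \Sigma_{n,ii} R_{n,i}}{u_n^\top \Sigma_n u_n} = \sum_{i=1}^r w_{n,i} R_{n,i},$$
so the triangle inequality together with $\sum_i w_{n,i} = 1$ gives the deterministic bound
$$\left| \frac{u_n^\top \hat{\Sigma}_n u_n}{u_n^\top \Sigma_n u_n} - 1 \right| = \left| \sum_{i=1}^r w_{n,i} (R_{n,i} - 1) \right| \le \max_{1 \le i \le r} |R_{n,i} - 1|.$$

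Finally, since $r$ is a fixed finite dimension and $R_{n,i} \overset{\text{p}}{\rightarrow} 1$ marginally for each $i$, a union bound gives $\max_{1 \le i \le r} |R_{n,i} - 1| \overset{\text{p}}{\rightarrow} 0$, from which I conclude $u_n^\top \hat{\Sigma}_n u_n / u_n^\top \Sigma_n u_n \overset{\text{p}}{\rightarrow} 1$, and the continuous mapping theorem completes the argument. The argument is essentially direct; the only real observation is that unit-norm $u_n$ and diagonal positive definite $\Sigma_n$ together force the ratio to be a convex combination of the $R_{n,i}$, so the approximation error is dominated by the worst-behaved coordinate. The decomposition $\Sigma_n = N_n^{-1} \tilde{\Sigma}_n$ and its associated bounds play no role in this particular argument; they are presumably retained in the statement because they are what certify the hypothesis $\hat{\Sigma}_n \Sigma_n^{-1} \overset{\text{p}}{\rightarrow} I_r$ in the downstream applications of the lemma.
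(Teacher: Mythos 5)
Your proof is correct, and it takes a genuinely different and more elementary route than the paper's. The key identity
\begin{equation*}
    \frac{u_n^\top \hat{\Sigma}_n u_n}{u_n^\top \Sigma_n u_n} = \sum_{i=1}^r w_{n,i} R_{n,i}, \qquad w_{n,i} := \frac{u_{n,i}^2 \Sigma_{n,ii}}{u_n^\top \Sigma_n u_n}, \qquad R_{n,i} := \frac{\hat{\Sigma}_{n,ii}}{\Sigma_{n,ii}},
\end{equation*}
is valid because both matrices are diagonal and positive definite, the weights are a convex combination, and the bound $\lvert \sum_i w_{n,i}(R_{n,i}-1)\rvert \le \max_i \lvert R_{n,i}-1\rvert$ together with a union bound over the fixed finite index set $\{1,\dots,r\}$ gives convergence of the ratio to $1$ in probability; the continuous mapping theorem applied to $x \mapsto x^{-1/2}$ (the ratio is strictly positive) finishes the argument. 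The paper instead runs a sub-subsequence argument: for an arbitrary subsequence it uses Bolzano--Weierstrass to extract a further subsequence along which $N_n/n^2$, $\tilde{\Sigma}_n$, and $u_n$ all converge (relying on the compactness hypotheses attached to the decomposition $\Sigma_n = N_n^{-1}\tilde{\Sigma}_n$ and the unit sphere), shows the ratio of quadratic forms converges to $1$ along that further subsequence, and concludes via Theorem 2.6 of Billingsley. Your observation that the decomposition hypotheses play no role is accurate: your argument needs only diagonality, positive definiteness, $u_n \ne 0$, and the entrywise convergence $R_{n,i} \overset{\text{p}}{\rightarrow} 1$ implied by $\hat{\Sigma}_n \Sigma_n^{-1} \overset{\text{p}}{\rightarrow} I_r$, so it proves a slightly stronger statement with a considerably shorter proof. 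The only thing the paper's heavier machinery buys is stylistic consistency with the neighboring lemmas (Lemma~\ref{lemma:gaussian_rotation} and parts of the proof of Proposition~\ref{prop:squiggle_estimation} use the same subsequence template); it is not needed for correctness here.
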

\begin{proof}
    Because $(N_{n,ii})^{-1} = O(n^{-2})$ and $0 \le N_{n,ii} \le n^2$, the quantity $d_{i} := \liminf_{n \to \infty} \dfrac{N_{n,ii}}{n^2}$ satisfies $0 < d_i \le 1$. Along an arbitrary subsequence $(n_m)_{m=1}^\infty$, the quantity $d'_{i} := \liminf_{m \to \infty} \dfrac{N_{{n_m},ii}}{n_m^2}$ satisfies $0 < d_i \le d'_i \le 1$. Whenever a limit inferior exists as a real number, there exists a further subsequence converging to that real number; consequently,  we can find a further subsequence $(n_{m_{s}})_{s=1}^{\infty}$ such that $N_{n_{m_s}} / n_{m_s}^2 \rightarrow \tilde{D}$ as $s \to \infty$, where $\tilde{D}$ is a diagonal matrix with the property that $0 < \tilde{D}_{ii} \le 1$ for all $i = 1, 2, \dots, r$. Then, $n_{m_s}^2 N_{n_{m_s}}^{-1} \rightarrow \tilde{D}^{-1}$.
    
    Next, because $\tilde{\Sigma}_{n_{m_s},ii}$ is contained within a compact set $[b_0, b_1]$ bounded away from zero for all $i$, applying Bolzano-Weierstrass (Theorem 2.42 of \citet{Rudin1976}) we can find a further subsequence indexed by $(n_{m_{s_t}})_{t=1}^\infty$ that converges, say $\tilde{\Sigma}_{n_{m_{s_t}}} \rightarrow \tilde{\Sigma}$. Because $[b_0, b_1]$ is a compact (and thus closed) set, it contains all of its limit points, and so $\tilde{\Sigma}$ must be a diagonal matrix where $\tilde{\Sigma}_{ii} \in [b_0, b_1]$ for all $i$.

    Now, we apply Bolzano-Weierstrass (Theorem 2.42 of \citet{Rudin1976}) once again to find a further subsequence $(n_{m_{s_{t_q}}})_{q=1}^\infty$ such that $u_{n_{m_{s_{t_q}}}} \to u$ as $q \to \infty$, where we know that $\Vert u \Vert_2 = 1$ because $u_{n_{m_{s_{t_q}}}}$ lives on the unit sphere, a closed subset, which consequently contains all of its limit points. Because $(n_{m_{s_{t_q}}})_{q=1}^\infty$ is a subsequence of $(n_{m_{s_{t}}})_{t=1}^\infty$ and the limits of sequences are preserved under subsequences, we know that $\tilde{\Sigma}_{n_{m_{s_{t_q}}}} \to \tilde{\Sigma}$ as $q \to \infty$, where $\tilde{\Sigma}_{ii} \in [b_0, b_1]$ for all $i$.

    For simplicity, relabel $(n_{m_{s_{t_q}}})_{q=1}^\infty$ as $(n_{m_p})_{p=1}^\infty$, so that we have $\tilde{\Sigma}_{n_{m_{p}}} \to \tilde{\Sigma}$ and $u_{n_{m_p}} \to u$ as $p \to \infty$, where $\tilde{\Sigma}$ is a diagonal matrix satisfying $\tilde{\Sigma}_{ii} \in [b_0, b_1]$ for all $i$, and $\Vert u \Vert_2 = 1$.
    
    Because we assumed that $\hat{\Sigma}_n \Sigma_n^{-1} \overset{\text{p}}{\rightarrow} I_r$, we also have $\hat{\Sigma}_{n_{m_p}} \Sigma_{n_{m_p}}^{-1} \overset{\text{p}}{\rightarrow} 1$, as limits are preserved under subsequences. Then, $\hat{\Sigma}_{n_{m_p}} \Sigma_{n_{m_p}}^{-1} = N_{n_{m_p}}^{-1} \hat{\tilde{\Sigma}}_{n_{m_p}} \tilde{\Sigma}_{n_{m_p}}^{-1} N_{n_{m_p}} = N_{n_{m_p}}^{-1} N_{n_{m_p}} \hat{\tilde{\Sigma}}_{n_{m_p}} \tilde{\Sigma}_{n_{m_p}}^{-1} = \hat{\tilde{\Sigma}}_{n_{m_p}} \tilde{\Sigma}_{n_{m_p}}^{-1}$, where the commutation of the matrix multiplication follows by the diagonality of the matrices. Hence, $\hat{\tilde{\Sigma}}_{n_{m_p}} \tilde{\Sigma}_{n_{m_p}}^{-1} \overset{\text{p}}{\rightarrow} I_r$, and because $\tilde{\Sigma}_{n_{m_p}} \rightarrow \tilde{\Sigma}$, it follows by the continuous mapping theorem that $\hat{\tilde{\Sigma}}_{n_{m_p}} \overset{\text{p}}{\rightarrow} \tilde{\Sigma}$ as $p \to \infty$.

    Putting this all together, we have $\dfrac{u_{n_{m_p}}^\top \hat{\Sigma}_{n_{m_p}} u_{n_{m_p}}}{u_{n_{m_p}}^\top \Sigma_{n_{m_p}} u_{n_{m_p}}} = \dfrac{u_{n_{m_p}}^\top n_{m_p}^2 N_{n_{m_p}}^{-1} \hat{\tilde{\Sigma}}_{n_{m_p}} u_{n_{m_p}}}{u_{n_{m_p}}^\top n_{m_p}^2 N_{n_{m_p}}^{-1} \tilde{\Sigma}_{n_{m_p}} u_{n_{m_p}}} \overset{\text{p}}{\longrightarrow} \dfrac{u^\top \tilde{D}^{-1} \tilde{\Sigma} u}{u^\top \tilde{D}^{-1} \tilde{\Sigma} u} = 1$.
    We have established that each subsequence $\dfrac{u^\top_{n_m} \hat{\Sigma}_{n_m} u_{n_m}}{u^\top_{n_m} \Sigma_{n_m} u_{n_m}}$ has a further subsequence $\dfrac{u^\top_{n_{m_p}} \hat{\Sigma}_{n_{m_p}} u_{n_{m_p}}}{u^\top_{n_{m_p}} \Sigma_{n_{m_p}} u_{n_{m_p}}}$ that converges in probability to $1$. Note that convergence in probability to a constant is equivalent to convergence in distribution to a constant. Letting $P_n$ denote the probability measure of $\dfrac{u^\top_{n} \hat{\Sigma}_{n} u_{n}}{u^\top_{n} \Sigma_{n} u_{n}}$, and letting $P$ denote the probability measure of the constant $1$, this means that each subsequence $P_{n_m}$ contains a further subsequence $P_{n_{m_p}}$ that converges weakly to $P$ as $p \to \infty$. Thus, Theorem 2.6 of \citet{billingsley1999ConvergenceProbabilityMeasures} ensures that $P_n$ converges weakly to $P$. Therefore, $\frac{u_n^\top \hat{\Sigma}_n u_n}{u_n^\top \Sigma_n u_n} \overset{\text{p}}{\rightarrow} 1$. Finally, applying the continuous mapping theorem with the continuous function $x \mapsto x^{-1/2}$ yields $\frac{(u_n^\top \hat{\Sigma}_n u_n)^{-1/2}}{(u_n^\top \Sigma_n u_n)^{-1/2}} \overset{\text{p}}{\rightarrow} 1$.
\end{proof}

\subsection{Proof of Proposition \ref{prop:normal_estimation}}
\label{sec:proof_of_normal_estimation}

Throughout this proof, we suppose that Proposition \ref{prop:univariate_gaussian_thinning} is applied to $A$ to yield $\Ate$ and $\Atr$, and community estimation is applied to $\Atr$ to yield $\Ztr \in \{0, 1\}^{n \times K}$. We will implicitly condition on $\Atr$ (and thus consider $\Atr$ and $\Ztr$ fixed), and so explicit conditioning in what follows will be suppressed in the notation. Denoting $M := \E[A]$, note that we can write $A^{(\text{te})} \sim \mathcal{MN}_{n \times n}((1 - \epsilon)M, ~ (1-\epsilon)\tau^2 I_n, I_n)$, where $\mathcal{MN}_{n \times n}$ is the matrix-normal distribution of dimension $n \times n$ \citep{glanz2018expectation}. Next, defining $\hat{N}^{-1} := (\ZtrT \Ztr)^{-1}$, by properties of the matrix-normal distribution it follows that 
\begin{align*}
    \hat{N}^{-1} \ZtrT A^{(\text{te})} \Ztr \hat{N}^{-1} \sim \mathcal{MN}_{K \times K}( (1-\epsilon) \hat{N}^{-1} \ZtrT M \Ztr \hat{N}^{-1}, ~ (1-\epsilon) \tau^2 \hat{N}^{-1}, \hat{N}^{-1}).
\end{align*}
Vectorizing the above leads to
\begin{equation}
    \begin{split}
        & \operatorname{vec}(\hat{N}^{-1} \ZtrT A^{(\text{te})} \Ztr \hat{N}^{-1}) \\
    &\quad\sim \mathcal{N}_{K^2}((1-\epsilon) \operatorname{vec}(\hat{N}^{-1} \ZtrT M \Ztr \hat{N}^{-1}), ~ (1-\epsilon) \tau^2 [\hat{N}^{-1} \otimes \hat{N}^{-1}]),
    \end{split}
    \label{eq:normal_proof_1_1}
\end{equation}
where $\otimes$ is the Kronecker product. Then, left-multiplying by $(1-\epsilon)^{-1} u^\top$ in \eqref{eq:normal_proof_1_1} and observing from \eqref{eq:target_of_inference} and the statement of Proposition~\ref{prop:normal_estimation} that
\begin{align*}
    \theta(\Atr) &:= u^\top \operatorname{vec}(\hat{N}^{-1} \ZtrT M \Ztr \hat{N}^{-1}), \\
    \hat{\theta}(\Ate, \Atr) &:= (1-\epsilon)^{-1} u^\top \operatorname{vec}(\hat{N}^{-1} \ZtrT \Ate \Ztr \hat{N}^{-1}), \\
    \sigma^2 &:= (1-\epsilon)^{-1} \tau^2 u^\top [\hat{N}^{-1} \otimes \hat{N}^{-1}] u,
\end{align*}
yields $\hat{\theta}(\Ate, \Atr) \sim \mathcal{N}(\theta(\Atr), ~ \sigma^2)$. Since we have implicitly conditioned on $\{\Atr = \atr\}$, this implies that $$P \Bigg( \theta(\Atr) \in \Big[ \hat{\theta}(\Ate, \Atr) \pm \phi_{1-\alpha/2} \cdot \sigma \Big] ~\Bigg|~ \Atr \Bigg) = 1 - \alpha.$$

\subsection{Proof of Proposition \ref{prop:poisson_estimation}}
\label{app:poisson_proposition_proof}

In this proof, all expressions will implicitly condition on the sequence of realizations $\left\{ \Antr = \antr \right\}_{n=1}^{\infty}$ and $\left\{ \Zntr = \hat{z}_n \right\}_{n=1}^{\infty}$. To begin, note that applying Proposition~\ref{prop:univariate_poisson_thinning} to $A_{n,ij} \ind \text{Poisson}(M_{n,ij})$ yields $\Anteij \ind \text{Poisson}((1-\epsilon) M_{n,ij})$. In what follows, we refer to this as the ``true'' model, $G_n$. That is, the true model $G_n$ is that $\Anteij \ind \operatorname{Poisson}((1-\epsilon)M_{n,ij})$. We now introduce a misspecified (``working'') model $F_{n, k \ell}$ that assumes the presence of the communities characterized by $\Zntr = \hat{z}_n$: that is,
\begin{equation}
  F_{n, k \ell}: \Anteij \ind \operatorname{Poisson}(\psi_{n, k(i) \ell(j)}),
  \label{eq:poisson_true_model_dist}
\end{equation}
where \(k(i)\) returns the value of \(k\) for which \(\hat{z}_{n, ik} = 1\) and \(\ell(j)\) returns the value of \(\ell\) for which \(\hat{z}_{n, j \ell} = 1\). Up to constants, for a given \(n\), under \(F_{n, k \ell}\) a dyad \((i, j)\) in community pair \((k, \ell)\) (i.e., $\ (i, j) \in \Ic_{n,k\ell} := \{(i,j) ~:~ \hat{z}_{n,ik} = 1, \hat{z}_{n,j \ell} = 1\}$)
has log-likelihood $\ell_{n,ij}(\psi_{n, k \ell}) = \Anteij \log(\psi_{n, k\ell}) - \psi_{n, k \ell}$, with derivative $\ell'_{n,ij}(\psi_{n, k \ell}) = \frac{\Anteij}{\psi_{n, k \ell}} - 1$, and second derivative $\ell''_{n,ij}(\psi_{n, k \ell}) = -\frac{\Anteij}{\psi_{n, k \ell}^2}$. The maximum likelihood estimator under $F_{n, k \ell}$ for this community pair is 
\begin{equation}
    \hat{\psi}_{n, k \ell} := \argmax_{\psi \in (0,\infty)} \sum_{(i,j) \in \Icnskl} \ell_{n,ij}(\psi) 
    = \frac{1}{|\Ic_{n,k \ell}|} \sum_{(i,j) \in \Ic_{n,k \ell}} \Anteij.
\label{eq:poisson_mle_misspecified}
\end{equation}
Next, we define $\psi^*_{n,k \ell}$ to be the value that maximizes the expected log-likelihood of the misspecified model $F_{n, k \ell}$ under the true model \(G_{n}\):
\begin{align}
    &\psi^*_{n,k \ell} := \argmax_{\psi \in (0, \infty)} \E_{G_n} \left[ \sum_{(i,j) \in \Ic_{n,k \ell}} \ell_{n,ij}(\psi) \right] \nonumber \\
    &= \argmax_{\psi \in (0, \infty)} \left( \log(\psi) \left( \sum_{(i,j) \in \Ic_{n,k \ell}} \E_{G_n} [\Anteij] \right) - |\Ic_{n,k \ell}| \psi \right) = \dfrac{1-\epsilon}{|\Ic_{n,k \ell}|} \sum_{(i,j) \in \Ic_{n,k \ell}} M_{n,ij}, 
\end{align}
where the final equality follows from \eqref{eq:poisson_true_model_dist}. As a consequence of \eqref{eq:poisson_mle_misspecified}, $0 =  \sum_{(i,j) \in \Icnkl} \ell'_{n,ij}(\hat{\psi}_{\Icnkl})$, and so by the mean-value theorem,
\begin{equation}
    0 = \sum_{(i,j) \in \Icnkl}  \ell'_{n,ij}(\psi^*_{\Icnkl}) + \sum_{(i,j) \in \Icnkl} \ell''_{n,ij}(\tilde{\psi}_{n, k \ell})(\hat{\psi}_{n, k \ell} - \psi^*_{n, k \ell}),
\label{eq:mean_value_poisson_proof}
\end{equation}
where $\tilde{\psi}_{n, k \ell}$ is a random variable contained between $\hat{\psi}_{n, k \ell}$ and $\psi_{n, k \ell}^*$. Now, define
\begin{align}
    V_{n, k \ell}(\psi) &:= \sum_{(i,j) \in \Ic_{n, k \ell}} \Var_{G_n}[\ell_{n,ij}'(\psi)], ~~\text{and}~~ J_{n, k \ell}(\psi) &:= - \sum_{(i,j) \in \Ic_{n, k \ell}} \E_{G_n}[\ell_{n,ij}''(\psi)], \label{eq:v_j_true_lemma3}
\end{align}
and note that
\begin{align}
    J_{n, k \ell}(\psi^*_{n, k \ell}) &= V_{n, k \ell}(\psi^*_{n, k \ell}) = \dfrac{|\Icnkl|}{\psi^*_{n, k \ell}}. 
    \label{eq:v_j_equal_poisson_proof}
\end{align}
So, by algebraic manipulation of \eqref{eq:mean_value_poisson_proof} we have
\begin{align}
    \sqrt{J_{n, k \ell}(\psi^*_{n, k \ell})} (\hat{\psi}_{n, k \ell} - \psi_{n, k \ell}^*)
    = -\dfrac{ \sum_{(i,j) \in \Icnkl}  \ell'_{n,ij}(\psi_{n, k \ell}^*) / \sqrt{J_{n, k \ell}(\psi^*_{n, k \ell})}}{\sum_{(i,j) \in \Icnkl}  \ell''_{n,ij}(\tilde\psi_{n, k \ell}) / J_{n, k \ell}(\psi_{n, k \ell}^*)}.
    \label{eq:almost_sand_poisson_proof}
\end{align}
Considering the numerator of the right hand side of \eqref{eq:almost_sand_poisson_proof}, by \eqref{eq:v_j_equal_poisson_proof} and Lemma \ref{lemma:mixing_clt},
\begin{equation}
    \dfrac{1}{\sqrt{J_{n, k \ell}(\psi_{n, k \ell}^*)}} \sum_{(i,j) \in \Icnkl}  \ell'_{n,ij}(\psi_{n, k \ell}^*) \overset{\text{d}}{\longrightarrow} \mathcal{N}(0,1).
    \label{eq:normal_conv_ll_poisson_proof}
\end{equation}
Next, we consider the denominator of the right hand side of \eqref{eq:almost_sand_poisson_proof}. Note that 
\begin{align}
    \dfrac{\sum_{(i,j) \in \Icnkl} \ell''_{n,ij}(\tilde{\psi}_{n, k \ell})}{\sum_{(i,j) \in \Icnkl} \ell''_{n,ij}(\psi^*_{n, k \ell})} = \left( \dfrac{\tilde{\psi}_{n, k \ell}}{\psi^*_{n, k \ell}} \right)^{-2}.
    \label{eq:poisson_inf_proof_condition_lemma_s3}
\end{align}
Then, because $\hat{\psi}_{n, k \ell} - \psi^*_{n, k \ell} = o_p(1)$ by Lemma~\ref{lemma:mixing_wlln}, we also have $\tilde{\psi}_{n, k \ell} - \psi^*_{n, k \ell} = o_p(1)$. Then, $\dfrac{\tilde{\psi}_{n, k \ell}}{\psi^*_{n, k \ell}} - 1 = \dfrac{\tilde{\psi}_{n, k \ell} - \psi^*_{n, k \ell}}{\psi^*_{n, k \ell}} = O(1) o_p(1) = o_p(1)$, where $(\psi^*_{n, k \ell})^{-1} = O(1)$ follows by the uniform bounds on $M_{n,ij}$. So, $\frac{\tilde{\psi}_{n, k \ell}}{\psi^*_{n, k \ell}} \overset{\text{p}}{\rightarrow} 1$, and by an application of the continuous mapping theorem with the continuous function $x \mapsto x^{-2}$, we have $\left( \frac{\tilde{\psi}_{n, k \ell}}{\psi^*_{n, k \ell}} \right)^{-2} \overset{\text{p}}{\rightarrow} 1$, and so by \eqref{eq:poisson_inf_proof_condition_lemma_s3},
\begin{align}
    \dfrac{\sum_{(i,j) \in \Icnkl} \ell''_{n,ij}(\tilde{\psi}_{n, k \ell})}{\sum_{(i,j) \in \Icnkl} \ell''_{n,ij}(\psi^*_{n, k \ell})} \overset{\text{p}}{\longrightarrow} 1.
    \label{eq:conv_ratio_denominator_poisson_proof}
\end{align}
Returning to \eqref{eq:almost_sand_poisson_proof} and combining \eqref{eq:normal_conv_ll_poisson_proof} and \eqref{eq:conv_ratio_denominator_poisson_proof} with Slutsky's theorem yields 
\begin{equation}
    \sqrt{J_{n,k \ell}(\psi^*_{n,k \ell})} (\hat{\psi}_{n,k \ell} - \psi^*_{n,k \ell}) \overset{\text{d}}{\longrightarrow} \mathcal{N}(0,1).
    \label{eq:convergence_poisson_prop}
\end{equation}
This establishes the convergence result for \emph{a single community pair} indexed by some $(k, \ell) \in \{1, 2, \dots, K\}^2$. Next, we turn to the more general convergence result of Proposition \ref{prop:poisson_estimation}.

Note that the observations $\Anteij$ are mutually independent under both the true and misspecified models, and each $(i,j)$ pair belongs to exactly one set  $\Ic_{n,k \ell}$ where $1 \leq k \leq K, 1 \leq \ell \leq K$. Consequently, all $\{\hat{\psi}_{n, k \ell}\}_{1 \le k, \ell \le K}$ are mutually independent across all community pairs. Defining $\hat{B}_n$ and $B^*_n$ to be the $K \times K$ matrices whose $(k,l)$th entries are $(1-\epsilon)^{-1} \hat{\psi}_{n,k \ell}$ and $(1-\epsilon)^{-1} \psi^*_{n,k \ell}$, respectively, by Lemma~\ref{lemma:marginal_independence_joint_convergence} we have
\begin{align}
   \Sigma_n^{-1/2} \left( \textnormal{vec} \left(\hat{B}_n \right) - \textnormal{vec} \left( B_n^* \right) \right) \overset{\text{d}}{\longrightarrow} \mathcal{N}_{K^2}(0, I_{K^2}),
   \label{eq:poisson_almost_end_convergence}
\end{align}
where $\Sigma_n := \textnormal{diag}(\textnormal{vec}(\Delta_n))$, and $\Delta_n \in \mathbb{R}^{K \times K}$ is defined entry-wise as
\begin{equation}
    \Delta_{n, k \ell} := \dfrac{1}{J_{n, k \ell}(\psi^*_{n, k \ell}) (1-\epsilon)^2} = (1-\epsilon)^{-1} \dfrac{1}{|\Icnkl|^2} \sum_{(i,j) \in \Icnkl} M_{n,ij}.
    \label{eq:delta_nkl}
\end{equation}
In the setting of Proposition \ref{prop:poisson_estimation}, the parameter of interest takes the form \( \theta_n =u_{n}^{\top} \operatorname{vec} \left( B_n^* \right)\) and the estimator takes the form $\hat{\theta}_n := u_n^\top \textnormal{vec} \left( \hat{B}_n \right)$, so by Lemma \ref{lemma:gaussian_rotation}, we have
\begin{equation}
        (u_n^\top \Sigma_n u_n)^{-1/2} u_n^\top \left( \operatorname{vec}(\hat{B}_n) - \operatorname{vec}(B^*_n) \right) = (u_n^\top \Sigma_n u_n)^{-1/2} \left( \hat{\theta}_n - \theta_n \right) \overset{\text{d}}{\longrightarrow} \mathcal{N}(0,1).  \label{eq:poisson_fixed_sigma_convergence}
\end{equation}
In practice, \(\Sigma_{n}\) is unknown, so we will make use of Lemma \ref{lemma:asymptotic_variance_un_equal} to establish that $\frac{(u_n^\top \hat{\Sigma}_n u_n)^{-1/2}}{(u_n^\top \Sigma_n u_n)^{-1/2}} \overset{\text{p}}{\rightarrow} 1$,
where $\hat{\Sigma}_n := \operatorname{diag}(\operatorname{vec}(\hat{\Delta}_n)),$ with $\hat{\Delta}_n$ defined entry-wise as
\begin{equation}
    \hat{\Delta}_{n, k \ell} := \dfrac{1}{\hat{J}_{n, k \ell}(\hat{\psi}_{n, k \ell}) (1-\epsilon)^2},
    \label{eq:delta_hat_nkl}
\end{equation}
where $\hat{J}_{n, k \ell}(\hat{\psi}_{n, k \ell}) =  -\sum_{(i,j) \in \Ic_{n, k \ell}} \ell_{n,ij}''(\hat{\psi}_{n, k \ell}) = \dfrac{|\Icnkl|}{\hat{\psi}_{n, k \ell}}$. To show that the conditions of Lemma \ref{lemma:asymptotic_variance_un_equal} are satisfied, we will first show that $\hat{\Sigma}_n \Sigma_n^{-1} \overset{\text{p}}{\rightarrow} I_{K^2}$. To do this, note that $\frac{\hat{J}_{n, k \ell}(\hat{\psi}_{n, k \ell})}{J_{n, k \ell}(\psi^*_{n, k \ell})} = \left( \frac{\hat{\psi}_{n, k \ell}}{\psi^*_{n, k \ell}} \right)^{-1}$. Then, we have $\frac{\hat{\psi}_{n, k \ell}}{\psi^*_{n, k \ell}} -1 = \dfrac{\hat{\psi}_{n, k \ell} - \psi^*_{n, k \ell}}{\psi^*_{n, k \ell}} = O(1) o_p(1)$, where we use the facts that $(\psi^*_{n, k \ell})^{-1} = O(1)$ by the uniform bounds on $M_{n,ij}$, and $\hat{\psi}_{n, k \ell} - \psi^*_{n, k \ell} = o_p(1)$ by Lemma~\ref{lemma:mixing_wlln}. Hence, $\dfrac{\hat{\psi}_{n, k \ell}}{\psi^*_{n, k \ell}} \overset{\text{p}}{\rightarrow} 1$, and so by the continuous mapping theorem, we have $\dfrac{\hat{J}_{n, k \ell}(\hat{\psi}_{n, k \ell})}{J_{n, k \ell}(\psi^*_{n, k \ell})} \overset{\text{p}}{\rightarrow} 1$, and subsequently that $\hat{\Sigma}_n \Sigma_n^{-1} \overset{\text{p}}{\rightarrow} I_{K^2}$.

Showing the next requirement of Lemma \ref{lemma:asymptotic_variance_un_equal} involves revisiting Equations (\ref{eq:delta_nkl}) and (\ref{eq:delta_hat_nkl}) and rewriting $\Delta_{n, k \ell} \frac{1}{|\Icnkl|} \left( (1-\epsilon)^{-1} \frac{1}{|\Icnkl|} \sum_{(i,j) \in \Icnkl} M_{n,ij} \right) =: \frac{1}{|\Ic_{n, k \ell}|} \tilde{\Delta}_{n, k \ell}$, and $\hat{\Delta}_{n, k \ell} = \frac{1}{|\Icnkl|} \left( (1-\epsilon)^{-2} \frac{1}{|\Icnkl|} \sum_{(i,j) \in \Icnkl} \Anteij \right) =: \frac{1}{|\Ic_{n, k \ell}|} \hat{\tilde{\Delta}}_{n, k \ell}$. With $\tilde{N}_n \in \mathbb{R}^{K \times K}$ defined entry-wise as $\tilde{N}_{n, k \ell} := |\Ic_{n, k \ell}|$, we can define $N_n := \operatorname{diag}(\operatorname{vec}(\tilde{N}_n))$ and write $\hat{\Sigma}_n = N_n^{-1} \hat{\tilde{\Sigma}}_n$ and $\Sigma_n = N_n^{-1} \tilde{\Sigma}_n$, where $\tilde{\Sigma}_n := \operatorname{diag}(\operatorname{vec}(\tilde{\Delta}_n))$ and $\hat{\tilde{\Sigma}}_n := \operatorname{diag}(\operatorname{vec}(\hat{\tilde{\Delta}}_n))$. Then, it is clear that $0 \le N_{n, k k} \le n^2$, and by assumption, $(N_{n, kk})^{-1} = O(n^{-2})$. The diagonal entries of the matrix $\tilde{\Sigma}_n$ are contained in a compact set bounded away from 0, because of the bound $0 < N_0 \le M_{n,ij} \le N_1$. Hence, applying Lemma \ref{lemma:asymptotic_variance_un_equal} yields $\frac{(u_n^\top \hat{\Sigma}_n u_n)^{-1/2}}{(u_n^\top \Sigma_n u_n)^{-1/2}} \overset{\text{p}}{\rightarrow} 1$, which applied to \eqref{eq:poisson_fixed_sigma_convergence} with Slutsky's theorem yields $\left( u_n^\top \hat{\Sigma}_n u_n \right)^{-1/2} \left( \hat{\theta}_n - \theta_n \right) \overset{\text{d}}{\rightarrow} \mathcal{N}(0,1)$.

Convergence in distribution implies pointwise convergence of the cumulative distribution function at all continuity points (which is all points in the case of $\mathcal{N}(0,1)$), so by reminding the reader of the implicit conditioning on $\{\Antr = \antr\}_{n =1}^{\infty}$, we have $\lim_{n \to \infty} P \Big( \theta_n(\Antr) \in [\hat{\theta}_n(\Ante, \Antr) \pm \phi_{1 - \alpha/2} \cdot \hat{\sigma}_n] \mid \Antr = \antr \Big) = 1 - \alpha$, where $\hat{\sigma}^2_n := u_n^\top \hat{\Sigma}_n u_n$.

\subsection{Proof of Proposition \ref{prop:taylor}(\ref{prop:taylor_a})}
\label{app:taylor_a_proof}

For simplicity, we suppress the notation indicating dependence of objects on $\Atr$. Define $d_0 := \frac{\gamma}{1-\gamma}$ and $d_1 := \frac{1-\gamma}{\gamma}$. Then, write $\Phi_{k \ell} = \dfrac{|\Iczkl|}{|\Ickl|} \Phi_{k \ell}^{(0)} + \dfrac{|\Icokl|}{|\Ickl|} \Phi_{k \ell}^{(1)}$, where for $s \in \{0,1\}$,
\begin{equation}
    \Phi_{k \ell}^{(s)} = \Phi_{k \ell}^{(s)}(\mathcal{M}_{k \ell}^{(s)}) := f \Bigg( \frac{1}{|\Icskl|} \sum_{(i,j) \in \Icskl} f(M_{ij}, d_s), ~ d_s^{-1} \Bigg),
    \label{eq:phi_s_original_def_matrix_version}
\end{equation}
where $\mathcal{M}_{k \ell}^{(s)} := \{M_{ij} ~:~ (i,j) \in \Icskl\}$, and where $f: (0,1) \times \mathbb{R}_+ \to (0,1)$ was defined in~\eqref{eq:f_def_bernoulli}. Although $\Phi_{k \ell}^{(s)}$ depends on the constant $d_s$, we will not vary this argument in this proof, and so we suppress the dependence of $\Phi_{k \ell}^{(s)}$ on $d_s$. For simplicity of notation, we vectorize and reindex $\mathcal{M}^{(s)}_{k \ell}$ as $\{M_{i} ~:~ 1 \le i \le n^{(s)}_{k \ell}\}$ where $n^{(s)}_{k \ell} := |\Icskl|$, and $n_{k \ell} := |\Ickl|$.

Then, let us rewrite \eqref{eq:phi_s_original_def_matrix_version} as $\Phi^{(s)}_{k \ell}(M_1, \dots, M_{n_{k \ell}^{(s)}}) = f \left( \dfrac{1}{n_{k \ell}^{(s)}} \sum_{i=1}^{n_{k\ell}^{(s)}} f(M_i, d_s), d_s^{-1} \right)$. Now, for $i \in \{1, 2, \dots, n_{k \ell}^{(s)}\}$ we define $L^{(s)}_i(t) := tM_i + (1-t) B^{(s)}_{k \ell}$, where $B^{(s)}_{k \ell} := \frac{1}{n_{k \ell}^{(s)}} \sum_{i=1}^{n_{k \ell}^{(s)}} M_i$. Then, we consider a Taylor expansion of $\Xi^{(s)}_{k \ell}(t) := \Phi^{(s)}_{k \ell} (L^{(s)}_1(t), \dots, L^{(s)}_{n_{k \ell}^{(s)}}(t))$
around $t=0$. By the mean-value form of Taylor's theorem,
\begin{equation}
    \Xi^{(s)}_{k \ell}(1) = \Xi^{(s)}_{k \ell}(0) + \Xi_{k \ell}'^{(s)}(0) + \frac{1}{2} \Xi''^{(s)}_{k \ell}(t^{(s)})
    \label{eq:Xi_mean_value_6a_proof}
\end{equation}
for some $0 \le t^{(s)} \le 1$. Next, by algebraic simplification, $\Xi^{(s)}_{k \ell}(1) = \Phi_{k \ell}^{(s)}(M_1, \dots, M_{n_{k \ell}^{(s)}})$, and $\Xi^{(s)}_{k \ell}(0) = \Phi_{k \ell}^{(s)}(B_{k \ell}^{(s)}, \dots, B_{k \ell}^{(s)}) = B_{k \ell}^{(s)}$. By the multivariate chain rule,
\begin{align}
    \Xi'^{(s)}_{k \ell}(0) = \dfrac{d}{dt} \Phi^{(s)}_{k \ell}(L^{(s)}_1(t), \dots, L^{(s)}_{n_{k \ell}^{(s)}}(t)) \Bigg|_{t=0} = \sum_{i=1}^{n_{k \ell}^{(s)}} L'^{(s)}_i(t) \cdot D_i \Phi_{k \ell}^{(s)}(L^{(s)}_1(t), \dots, L^{(s)}_{n_{k \ell}^{(s)}}(t)) \Bigg|_{t=0} \label{eq:Xi_s_prime_at_zero_prop_6a}
\end{align}
where $D_i \Phi_{k \ell}^{(s)}(z_1, \dots, z_{n^{(s)}_{k \ell}})$ is the partial derivative of $\Phi^{(s)}_{k \ell}$ with respect to its $i$th component, evaluated at the point $(z_1, \dots, z_{n_{k \ell}^{(s)}})$. Noting that $L'^{(s)}_i(t) = M_i - B^{(s)}_{k \ell}$ and that
\begin{align*}
    D_i &\Phi^{(s)}_{k \ell}(L^{(s)}_1(t), \dots, L^{(s)}_{n^{(s)}_{k \ell}}(t)) \\
    &= \dfrac{1}{\nskl} \left( \left(1 + \dfrac{1}{\nskl} \sum_{j=1}^{\nskl} f(L^{(s)}_j(t), d_s)(d_s^{-1}-1) \right) \left( 1 - L^{(s)}_i(t) + L^{(s)}_i(t) d_s \right) \right)^{-2},
\end{align*} 
this implies that $D_i \Phi^{(s)}_{k \ell}(L^{(s)}_1(0), \dots, L^{(s)}_{n_{k \ell}^{(s)}}(0)) = r$ is a constant not depending on $i$. Thus, returning to \eqref{eq:Xi_s_prime_at_zero_prop_6a} we have $\Xi'^{(s)}_{k \ell}(0) = r \sum_{i=1}^{n_{k \ell}^{(s)}} (M_i - \Bskl) = 0$. Moving on to the second derivative, by the product rule and the multivariate chain rule, we have
\begin{align*}
    &\Xi''^{(s)}_{k \ell}(t) = \dfrac{d}{dt} \sum_{i=1}^{n_{k \ell}^{(s)}} L_i'^{(s)}(t) D_i \Phi^{(s)}(L^{(s)}_1(t), \dots, L^{(s)}_{\nskl}(t)) \\
    &= \sum_{i=1}^{n_{k \ell}^{(s)}} \Bigg( L_i''^{(s)}(t) D_i \Phi^{(s)}_{k \ell}(L^{(s)}_1(t), \dots, L^{(s)}_{\nskl}(t)) + L_i'^{(s)}(t) \cdot \dfrac{d}{dt} D_i \Phi^{(s)}(L^{(s)}_1(t), \dots, L^{(s)}_{n_{k \ell}^{(s)}}(t)) \Bigg) \\
    &= \sum_{i=1}^{\nskl} \sum_{j=1}^{\nskl} (M_i - \Bskl)(M_j - \Bskl) D_i D_j \Phi^{(s)}_{k \ell}(L^{(s)}_1(t), \dots, L^{(s)}_{\nskl}(t)) \\
    &= \sum_{i=1}^{\nskl} \sum_{j=1}^{\nskl} (M_i - \Bskl)(M_j - \Bskl) h_{ij}^{(s)(k \ell)}(t),
\end{align*}
where we define
\begin{align}
    &h_{ij}^{(s)(k \ell)}(t) := D_i D_j \Phi_{k \ell}^{(s)}(L^{(s)}_1(t), \dots, L^{(s)}_{\nskl}(t)) \nonumber \\
    &= \dfrac{d}{d z_i} \Bigg( \dfrac{1}{\nskl} \Bigg[ 1 + \dfrac{1}{\nskl} \sum_{k=1} f( z_k, d_s )(d_s^{-1} - 1) \Bigg]^{-2} \left( 1 - z_j + z_j d_s \right)^{-2} \Bigg) \Bigg|_{z = (L^{(s)}_1(t), \dots, L^{(s)}_{\nskl}(t))}. \label{eq:h_original_def_6a}
\end{align}
Inserting these results into \eqref{eq:Xi_mean_value_6a_proof} and taking the sum $\frac{\nzkl}{\nkl} \Phi^{(0)}_{k \ell} + \frac{\nokl}{\nkl} \Phi^{(1)}_{k \ell}$,
\begin{equation}
    \Phi_{k \ell} = B_{k \ell} + \frac{1}{2} \sum_{s \in \{0,1\}} \frac{\nskl}{\nkl} \sum_{\substack{i=1 \\ j=1}}^{\nskl} (M_i - \Bskl)(M_j - \Bskl)h_{ij}^{(s)(k \ell)}(t^{(s)})
    \label{eq:prop_6a_reindexed_almost_done}
\end{equation}
for some $0 \le t^{(s)} \le 1$ for $s \in \{0,1\}$.

To conclude, we revert to the original indexing and unvectorized version of the expression (i.e., replace $i$ with $(i,j)$, $j$ with $(i',j')$, $\nkl$ with $|\Ickl|$, and $\nskl$ with $|\Icskl|$) and write
\begin{align*}
    \Phi_{k \ell} &= B_{k \ell} + \sum_{s \in \{0,1\}} \dfrac{|\Icskl|}{|\Ickl|} \sum_{\substack{(i,j) \in \Icskl \\ (i',j') \in \Icskl}} (M_{ij} - \Bskl)(M_{i'j'} - \Bskl) h^{(s)(k \ell)}_{iji'j'}(t_s),
\end{align*}
and absorb the $\frac{1}{2}$ factor from \eqref{eq:prop_6a_reindexed_almost_done} to rewrite $h^{(s)(k \ell)}_{ij}$ from \eqref{eq:h_original_def_6a} as $h^{(s)(k \ell)}_{iji'j'}$, where
\begin{align}
    h^{(s)(k \ell)}_{iji'j'}(t) &= \dfrac{d}{d z_{ij}} \Bigg[ \dfrac{1}{2n} \Bigg( 1 + \dfrac{1}{n} \sum_{(\tilde{i},\tilde{j}) \in \Icskl} f (z_{\tilde{i} \tilde{j}}, d_s)(d_s^{-1}-1) \Bigg)^{-2} \nonumber \\
    &\quad \quad \quad \cdot (1-z_{i'j'} + z_{i'j'} d_s)^{-2} \Bigg]_{\{z_{ij} = L^{(s)}_{ij}(t) ~ \forall (i,j) \in \Icskl\}} ,\label{eq:h_function_prop6a_def}
\end{align}
and remind the reader that $L^{(s)}_{ij}(t) := t M_{ij} + (1-t) \Bskl$.

\subsection{Proof of Proposition \ref{prop:taylor}(\ref{prop:taylor_b})}
\label{app:taylor_b_proof}

For simplicity of notation, in this proof we will suppress the explicit dependence of all objects on $\Atr$. Define $d_0 := \frac{\gamma}{1-\gamma}$ and $d_1 := \frac{1-\gamma}{\gamma}$. Then, write
\begin{equation}
    \Phi_{k \ell} = \dfrac{|\Iczkl|}{|\Ickl|} \Phi_{k \ell}^{(0)} + \dfrac{|\Icokl|}{|\Ickl|} \Phi_{k \ell}^{(1)},
    \label{eq:phi_combined_result_6b_taylor}
\end{equation}
so that for $s \in \{0,1\}$, we have $\Phi_{k \ell}^{(s)} := f \left( \dfrac{1}{|\Icskl|} \sum_{(i,j) \in \Icskl} f(M_{ij}, d_s), ~ d_s^{-1} \right)$, where $f: (0,1) \times \mathbb{R}_+ \to (0,1)$ was defined in~\eqref{eq:f_def_bernoulli}. First, we will produce a Taylor expansion of $\Phi^{(0)}_{k \ell}$ (viewed as a function of $d_0$) around $1$. Using the definition of $f$, we have that
\begin{align}
    \Phi^{(0)}_{k \ell} &= \expit \Bigg( \logit \Bigg( \dfrac{1}{|\Iczkl|} \sum_{(i,j) \in \Iczkl} \expit \Big( \logit(M_{ij}) + \log(d_0) \Big) \Bigg) - \log(d_0) \Bigg) =: g^{(0)}_{k \ell}(d_0). \label{eq:g0kl_def_taylor6b}
\end{align}
Our Taylor expansion result will be of the form
\begin{equation}
    g^{(0)}_{k \ell}(d_0) = g^{(0)}_{k \ell}(1) + g'^{(0)}_{k \ell}(1) (d_0-1) + \frac{g''^{(0)}_{k \ell}(\lambda_0)}{2}(d_0 -1)^2,
    \label{eq:taylor_expand_6b_step_g0_intermediate}
\end{equation}
where $d_0 \le \lambda_0 \le 1$. Taking a derivative with respect to $d_0$, we have
\begin{align*}
    g'^{(0)}_{k \ell}(d_0) &= \dfrac{\expit \Big( \logit \Big( \frac{1}{|\Iczkl|} \sum_{(i,j) \in \Iczkl} \expit( \logit(M_{ij}) + \log(d_0)) \Big) - \log(d_0) \Big)}{1 + \exp \Big( \logit \Big( \frac{1}{|\Iczkl|} \sum_{(i,j) \in \Iczkl} \expit(\logit(M_{ij}) + \log(d_0)) \Big) - \log(d_0) \Big)} \\
    \cdot \Bigg[ &-\frac{1}{d_0} + \Big( \frac{1}{|\Iczkl|} \sum_{(i,j) \in \Iczkl} \expit(\logit(M_{ij}) + \log(d_0))  \Big)^{-1} \\
    &\quad \quad \quad \quad \cdot \Big( 1 - \frac{1}{|\Iczkl|} \sum_{(i,j) \in \Iczkl} \expit(\logit(M_{ij}) + \log(d_0)) \Big)^{-1} \\
    &\quad \quad \quad \quad \cdot \dfrac{1}{|\Iczkl|} \sum_{(i,j) \in \Iczkl} \Big( \dfrac{\expit(\logit(M_{ij}) + \log(d_0))}{1 + \exp(\logit(M_{ij}) + \log(d_0))} \Big) \cdot \dfrac{1}{d_0} \Bigg].
\end{align*}
Evaluating $g'^{(0)}_{k \ell}(1)$ and algebraically simplifying, we have $g'^{(0)}_{k \ell}(1) = -\dfrac{1}{|\Iczkl|} \sum_{(i,j) \in \Iczkl} (M_{ij} - \Bzkl)^2$, where $\Bzkl := \frac{1}{|\Iczkl|} \sum_{(i,j) \in \Iczkl} M_{ij} = g^{(0)}_{k \ell}(1)$. Returning to \eqref{eq:taylor_expand_6b_step_g0_intermediate}, we have
\begin{align}
    \Phi^{(0)}_{k \ell} = \Bzkl + (1-d_0) \frac{1}{|\Iczkl|} \sum_{(i,j) \in \Iczkl} (M_{ij} - \Bzkl)^2 + \dfrac{g''^{(0)}_{k \ell}(\lambda_0)}{2}(d_0 - 1)^2.
    \label{eq:taylor_6b_phi0_almost_done}
\end{align}
Proceeding with $\Phi^{(1)}_{k \ell}$ does not add much difficulty: noting that $d_1 = d_0^{-1}$, we have
\begin{align}
    \Phi^{(1)}_{k \ell} &= \expit \Bigg( \logit \Bigg( \dfrac{1}{|\Icokl|} \sum_{(i,j) \in \Icokl} \expit \Big( \logit(M_{ij}) + \log(d_0^{-1}) \Big) \Bigg) - \log(d_0^{-1}) \Bigg) \nonumber \\
    &=: g^{(1)}_{k \ell}(d_0^{-1}) =: \tilde{g}^{(1)}_{k \ell}(d_0), \label{eq:g1kl_tilde_def_taylor6b}
\end{align}
where $g^{(1)}_{k \ell}$ is defined analogously to $g^{(0)}_{k \ell}$ by replacing $\Iczkl$ with $\Icokl$. Then, by Taylor's theorem,
\begin{equation}
    \Phi^{(1)}_{k \ell} = \tilde{g}^{(1)}_{k \ell}(d_0) = \tilde{g}^{(1)}_{k \ell}(1) + \tilde{g}'^{(1)}_{k \ell}(1)(d_0 - 1) + \frac{\tilde{g}_{k \ell}''^{(1)}(\lambda_1)}{2}(d_0 - 1)^2.
    \label{eq:taylor_phi1_prop_6b_intermediate}
\end{equation}
To get the derivative $\tilde{g}'^{(1)}_{k \ell}$, by the chain rule,
\begin{align*}
    \tilde{g}'^{(1)}_{k \ell}(d_0) = \frac{\mathrm{d}}{\mathrm{d} ~d_0} g^{(1)}_{k \ell}(d_0^{-1}) &= -g'^{(1)}_{k \ell}(d_0^{-1}) \frac{1}{d_0^2} = -g'^{(1)}_{k \ell}(d_1) \frac{1}{d_1^2}.
\end{align*}
Calculating $g'^{(1)}_{k \ell}$ is nearly identical to the previous calculation of $g'^{(0)}_{k \ell}$, and so we have $\tilde{g}'^{(1)}_{k \ell}(1) = -g'^{(1)}_{k \ell}(1) = \dfrac{1}{|\Icokl|} \sum_{(i,j) \in \Icokl} (M_{ij} - \Bokl)^2$. Substituting into \eqref{eq:taylor_phi1_prop_6b_intermediate} and noting that $B_{k \ell}^{(1)} = \tilde{g}^{(1)}_{k \ell}(1)$, we obtain
\begin{equation}
    \Phi^{(1)}_{k \ell} = \Bokl -(1 - d_0) \frac{1}{|\Icokl|} \sum_{(i,j) \in \Icokl} (M_{ij} - \Bokl)^2 + \dfrac{\tilde{g}''^{(1)}_{k \ell}(\lambda_1)}{2}(d_0 - 1)^2.
    \label{eq:taylor_6b_phi1_almost_done}
\end{equation}
Combining \eqref{eq:taylor_6b_phi0_almost_done} and \eqref{eq:taylor_6b_phi1_almost_done} via \eqref{eq:phi_combined_result_6b_taylor}, we have
\begin{align*}
    \Phi_{k \ell} &= B_{k \ell} + \frac{1 - d_0}{|\Ickl|} \left[ \sum_{(i,j) \in \Iczkl} (M_{ij} - \Bzkl)^2 - \sum_{(i,j) \in \Icokl} (M_{ij} - \Bokl)^2 \right] + (d_0 - 1)^2 q_{k \ell}(\lambda_0, \lambda_1),
\end{align*}
for some $\lambda_0, \lambda_1 \in \left[ \frac{\gamma}{1-\gamma}, 1 \right]$, where we define
\begin{align}
    q_{k \ell}(\lambda_0, \lambda_1) := q^{(0)}_{k \ell}(\lambda_0) + q^{(1)}_{k \ell}(\lambda_1), \label{eq:q_s_def_prop6b_proof}
\end{align}
and for $s \in \{0,1\}$ we define $q^{(s)}_{k \ell}(z) := \begin{cases}
        \frac{g''^{(0)}_{k \ell}(z)}{2}, & \text{if }s=0, \\
        \frac{\tilde{g}''^{(1)}_{k \ell}(z)}{2}, & \text{if }s=1,
    \end{cases}$, where $g^{(0)}_{k \ell}(z)$ was defined in \eqref{eq:g0kl_def_taylor6b} and where $\tilde{g}^{(1)}_{k \ell}(z)$ was defined in \eqref{eq:g1kl_tilde_def_taylor6b}.

\subsection{Proof of Proposition \ref{prop:gamma_to_zero}}
\label{app:gamma_to_zero_proof}

Define $d_0 := \frac{\gamma}{1-\gamma}$, $Q_{ij} := \frac{M_{ij}}{1-M_{ij}}$, $c^{(0)} := \log(d_0)$, and  $c^{(1)} := -\log(d_0)$. Recalling the definition of $\Phi_{k \ell}(\Atr)$ in \eqref{eq:Phi_kl_def} and the definitions of $f(a,v)$ in \eqref{eq:f_def_bernoulli} and $V^{(s)}_{k \ell}(\Atr)$ in \eqref{eq:V}, let us write $\Phi_{k \ell}(\Atr) = \frac{|\Iczkl|}{|\Ickl|} g_\gamma^{(0)}(\Iczkl) + \frac{|\Icokl|}{|\Ickl|} g_\gamma^{(1)}(\Icokl)$, where
\begin{equation}
    g^{(s)}_\gamma(\Icskl) := \expit \left( \textnormal{logit} \left( \frac{1}{|\Ic_{k \ell}^{(s)}|} \sum_{(i,j) \in \Ic^{(s)}_{k \ell}}  \expit(\textnormal{logit}(M_{ij}) + c^{(s)}) \right) - c^{(s)} \right),
    \label{eq:prop_gamma0_g_def}
\end{equation}
and where $\gamma \in (0, \frac{1}{2})$. In what follows, to avoid division by zero, we assume that all sets (e.g., $\Ickl$ and $\Icskl$) are non-empty.

First, we show that for any \textit{fixed} and non-empty $\Ic \subset [n]^2$, we have $\lim_{\gamma \to 0}g_\gamma^{(s)}(\Ic) = g_0^{(s)}(\Ic)$, where
\begin{equation}
    g_0^{(s)}(\Ic) := \begin{cases}
    \expit \left( \log \left( \frac{1}{|\Ic|} \sum_{(i,j) \in \Ic} \frac{M_{ij}}{1-M_{ij}} \right) \right), & \text{if } s = 0, \\ 
    \expit \left( \log( \left( \frac{1}{|\Ic|} \sum_{(i,j) \in \Ic} \frac{1-M_{ij}}{M_{ij}} \right)^{-1} \right), & \text{if } s = 1.
\end{cases}
\label{eq:g0_def_prop_gamma0}
\end{equation}
We will often apply the identity $\expit(\log(y)) = \frac{y}{1+y}$ for $y > 0$. First, for the $s = 0$ case,
\begin{align*}
    g_\gamma^{(0)}(\Ic) &= \expit \left( \logit \left( \frac{1}{|\Ic|} \sum_{(i,j) \in \Ic} \expit( \log(Q_{ij}) + \log(d_0)) \right) - \log(d_0) \right) \\
    &= \expit \left( \log \left( \frac{\frac{1}{|\Ic|} \sum_{(i,j) \in \Ic} \frac{Q_{ij}}{1 + Q_{ij} d_0}}{1 - \frac{1}{|\Ic|} \sum_{(i,j) \in \Ic} \frac{Q_{ij} d_0}{1 + Q_{ij} d_0}} \right) \right) = \frac{\frac{\frac{1}{|\Ic|} \sum_{(i,j) \in \Ic} \frac{Q_{ij}}{1 + Q_{ij} d_0}}{1 - \frac{1}{|\Ic|} \sum_{(i,j) \in \Ic} \frac{Q_{ij} d_0}{1 + Q_{ij} d_0}}}{1 + \frac{\frac{1}{|\Ic|} \sum_{(i,j) \in \Ic} \frac{Q_{ij}}{1 + Q_{ij} d_0}}{1 - \frac{1}{|\Ic|} \sum_{(i,j) \in \Ic} \frac{Q_{ij} d_0}{1 + Q_{ij} d_0}}}.
\end{align*}
Because $\lim_{\gamma \to 0} d_0 = 0$, we have
\begin{align*}
    \lim_{\gamma \to 0} g^{(0)}_\gamma(\Ic) &= \frac{\dfrac{1}{|\Ic|} \sum_{(i,j) \in \Ic} Q_{ij}}{1 + \frac{1}{|\Ic|} \sum_{(i,j) \in \Ic} Q_{ij}} = \expit \left( \log \left( \frac{1}{|\Ic|} \sum_{(i,j) \in \Ic} \frac{M_{ij}}{1 - M_{ij}} \right) \right) = g_0^{(0)}(\Ic).
\end{align*}
Next, for the $s = 1$ case, 
\begin{align*}
    g_\gamma^{(1)}(\Ic) &= \expit \left( \logit \left( \frac{1}{|\Ic|} \sum_{(i,j) \in \Ic} \expit( \log(Q_{ij}) - \log(d_0)) \right) + \log(d_0) \right) \\
    &= \expit \left( \log \left( \frac{\frac{1}{|\Ic|} \sum_{(i,j) \in \Ic} \frac{ d_0 Q_{ij}}{d_0 + Q_{ij}}}{1 - \frac{1}{|\Ic|} \sum_{(i,j) \in \Ic} \frac{Q_{ij}}{d_0 + Q_{ij}}} \right) \right) = \expit \left( \log \left( \dfrac{f_1(d_0)}{f_2(d_0)} \right) \right).
\end{align*}
L'H\^{o}pital's Rule implies that $\lim_{d_0 \to 0} \frac{f_1(d_0)}{f_2(d_0)} = \frac{1}{\frac{1}{|\Ic|} \sum_{(i,j) \in \Ic} Q_{ij}^{-1}}$, and so by the continuity of $x \mapsto \expit(\log(x))$,
\begin{align*}
    \lim_{\gamma \to 0} g_\gamma^{(1)}(\Ic) = \expit \left( \log \left( \frac{1}{\frac{1}{|\Ic|} \sum_{(i,j) \in \Ic} Q_{ij}^{-1}} \right) \right) = g_0^{(1)}(\Ic).
\end{align*}
Thus, we have shown that $\lim_{\gamma \to 0} g^{(s)}_\gamma(\Ic) = g_0^{(s)}(\Ic)$ for all fixed $\Ic \subset [n]^2$. Next,
\begin{align*}
    P \left( \left| g_\gamma^{(s)}(\Icskl) - g_0^{(s)}(\Icskl) \right| > \epsilon \right) &\le P \left( \max_{\Ic \subset [n]^2} \left| g_\gamma^{(s)}(\Ic) - g_0^{(s)}(\Ic) \right| > \epsilon \right) \\
    &\le \sum_{\Ic \subset [n]^2} P \left( \left| g_\gamma^{(s)}(\Ic) - g_0^{(s)}(\Ic) \right| > \epsilon \right),
\end{align*}
where the last inequality follows from a union bound. Taking limits, we have
\begin{align}
    \lim_{\gamma \to 0} P \left( \left| g_\gamma^{(s)}(\Icskl) - g_0^{(s)}(\Icskl) \right| > \epsilon \right) &\le \lim_{\gamma \to 0} \sum_{\Ic \subset [n]^2} P \left( \left| g_\gamma^{(s)}(\Ic) - g_0^{(s)}(\Ic) \right| > \epsilon \right) \nonumber \\
    &= \sum_{\Ic \in [n]^2} \left( \lim_{\gamma \to 0} P \left( \left| g_\gamma^{(s)}(\Ic) - g_0^{(s)}(\Ic) \right| > \epsilon \right) \right) = 0, \label{eq:prop11proof_g_gamma_tilde}
\end{align}
where the last equality follows from the fact that $[n]^2$ has finite cardinality, and because \\
$\lim_{\gamma \to 0} P \left( \left| g_\gamma^{(s)}(\Ic) - g_0^{(s)}(\Ic) \right| > \epsilon \right) = 0$ for all $\Ic \subset [n]^2$. Next, for all $\epsilon > 0$, we have
\begin{align*}
    P \left( \left| g_0^{(s)}(\Icskl) - g_0^{(s)}(\Icskltil) \right| > \epsilon \right) &\le P \left( g_0^{(s)}(\Icskl) \ne g_0^{(s)}(\Icskltil) \right) \le P \left( \Icskl \ne \Icskltil \right).
\end{align*}
Hence, taking limits we have
\begin{align}
    \lim_{\gamma \to 0} P \left( \left| g_0^{(s)}(\Icskl) - g_0^{(s)}(\Icskltil) \right| > \epsilon \right) \le \lim_{\gamma \to 0} P \left( \Icskl \ne \Icskltil \right) = 0,
    \label{eq:prop11proof_g_zero_tilde}
\end{align}
where the equality on the right-hand side of \eqref{eq:prop11proof_g_zero_tilde} follows by combining the assumption that $\lim_{\gamma \to 0} P (\Ickl \ne \Ickltil) = 0$, together with the observation that under the fission procedure of Proposition~\ref{prop:univariate_bernoulli_fission}, $\lim_{\gamma \to 0} P(\exists (i,j) : \Aijtr \ne A_{ij}) = 0$. Next, we have
\begin{align*}
    P \Big( \Big| g_\gamma^{(s)}(\Icskl) &- g_0^{(s)}(\Icskltil) \Big| > \epsilon \Big) = P \left( \left| g_\gamma^{(s)}(\Icskl) - g_0^{(s)}(\Icskl) + g_0^{(s)}(\Icskl) - g_0^{(s)}(\Icskltil) \right| > \epsilon \right) \\
    &\le P \left( \left| g_\gamma^{(s)}(\Icskl) - g_0^{(s)}(\Icskl) \right| + \left| g_0^{(s)}(\Icskl) - g_0^{(s)}(\Icskltil) \right| > \epsilon \right) \\
    &\le P \left( \left\{ \left| g_\gamma^{(s)}(\Icskl) - g_0^{(s)}(\Icskl) \right| \ge \frac{\epsilon}{2} \right\} \cup \left\{ \left| g_0^{(s)}(\Icskl) - g_0^{(s)}(\Icskltil) \right| \ge \frac{\epsilon}{2} \right\} \right) \\
    &\le P \left( \left| g_\gamma^{(s)}(\Icskl) - g_0^{(s)}(\Icskl) \right| \ge \frac{\epsilon}{2} \right) + P \left( \left| g_0^{(s)}(\Icskl) - g_0^{(s)}(\Icskltil) \right| \ge \frac{\epsilon}{2} \right).
\end{align*}
Taking limits, this implies that
\begin{align}
    &\lim_{\gamma \to 0} P \Big( \Big| g_\gamma^{(s)}(\Icskl) - g_0^{(s)}(\Icskltil) \Big| > \epsilon \Big) \nonumber \\
    &\le \left( \lim_{\gamma \to 0} P \left( \left| g_\gamma^{(s)}(\Icskl) - g_0^{(s)}(\Icskl) \right| \ge \frac{\epsilon}{2} \right) \right) + \left( \lim_{\gamma \to 0} P \left( \left| g_0^{(s)}(\Icskl) - g_0^{(s)}(\Icskltil) \right| \ge \frac{\epsilon}{2} \right) \right) = 0, \label{eq:Phis_to_Phistil}
\end{align}
where the first limit term is $0$ by \eqref{eq:prop11proof_g_gamma_tilde}, and the second limit term is $0$ by \eqref{eq:prop11proof_g_zero_tilde}.

Next, for all $\epsilon > 0$, $P \left( \left| \dfrac{|\Icskl|}{|\Ickl|} - \dfrac{|\Icskltil|}{|\Ickltil|} \right| > \epsilon \right) \le P \left( \dfrac{|\Icskl|}{|\Ickl|} \ne \dfrac{|\Icskltil|}{|\Ickltil|} \right) \le P \left( \Ickl \ne \Ickltil \right)$, and so taking limits and recalling the assumption that $\lim_{\gamma \to 0} P(\Ickl \ne \Ickltil) = 0$, we have
\begin{equation}
    \lim_{\gamma \to 0} P \left( \left| \dfrac{|\Icskl|}{|\Ickl|} - \dfrac{|\Icskltil|}{|\Ickltil|} \right| > \epsilon \right) = 0.
    \label{eq:Icskl_conv_Icskltil}
\end{equation}

To combine the convergence results across $s \in \{0, 1\}$, we apply Slutsky's theorem. Slutsky's theorem is commonly stated as a result involving discretely-indexed sequences of random variables, so we will make use of the general fact that $\lim_{a \to a_0} h(a) = L$ if and only if $\lim_{m \to \infty} h(a_m) = L$ for all sequences $(a_m)_{m=1}^\infty$ such that $\lim_{m \to \infty} a_m = a_0$.

Fix any sequence $(\gamma_m)_{m=1}^\infty$ such that $\lim_{m \to \infty} \gamma_m = 0$. Then, for all $\epsilon > 0$, from \eqref{eq:Phis_to_Phistil},
\begin{equation*}
    \lim_{m \to \infty} P \left( \left| g_{\gamma_m}(\Icskl) - g_0^{(s)}(\Icskltil) \right| > \epsilon \right) = 0
\end{equation*}
Equivalently, $g^{(s)}_{\gamma_m}(\Icskl) \overset{\text{p}}{\rightarrow} g_0(\Icskltil)$ as $m \to \infty$. Then, let us emphasize that the distributions of $\Icskl$ and $\Ickl$ depend on $\gamma$, and consider the sequence of random variables $\frac{|\Icskl(\gamma_m)|}{|\Ickl(\gamma_m)|} - \frac{|\Icskltil|}{|\Ickltil|}$. From \eqref{eq:Icskl_conv_Icskltil}, it follows that
\begin{equation*}
    \lim_{m \to \infty} P \left( \left| \frac{|\Icskl(\gamma_m)|}{|\Ickl(\gamma_m)|} - \frac{|\Icskltil|}{|\Ickltil|} \right| > \epsilon \right) = 0.
\end{equation*}
Equivalently, we have $\frac{|\Icskl(\gamma_m)|}{|\Ickl(\gamma_m)|} \overset{\text{p}}{\rightarrow} \frac{|\Icskltil|}{|\Ickltil|}$ as $m \to \infty$. So, invoking Slutsky's theorem with these results, we have that
\begin{align*}
    \frac{|\Iczkl(\gamma_m)|}{|\Ickl(\gamma_m)|} g^{(0)}_{\gamma_m}(\Iczkl(\gamma_m)) &+ \frac{|\Icokl(\gamma_m)|}{|\Ickl(\gamma_m)|} g^{(1)}_{\gamma_m}(\Icokl(\gamma_m)) \overset{\text{p}}{\longrightarrow} \frac{|\Iczkltil|}{|\Ickltil|} g^{(0)}_0(\Iczkltil) + \frac{|\Icokltil|}{|\Ickltil|} g^{(1)}_0(\Icokltil)
\end{align*}
as $m \to \infty$. Equivalently, we have
\begin{align*}
    \lim_{m \to \infty} P \Bigg( \Bigg| &\Bigg( \frac{|\Iczkl(\gamma_m)|}{|\Ickl(\gamma_m)|} g^{(0)}_{\gamma_m}(\Iczkl(\gamma_m)) + \frac{|\Icokl(\gamma_m)|}{|\Ickl(\gamma_m)|} g^{(1)}_{\gamma_m}(\Icokl(\gamma_m)) \Bigg) \\
    &- \Bigg( \frac{|\Iczkltil|}{|\Ickltil|} g^{(0)}_0(\Iczkltil) + \frac{|\Icokltil|}{|\Ickltil|} g^{(1)}_0(\Icokltil) \Bigg) \Bigg| > \epsilon \Bigg) = 0
\end{align*}
for all $\epsilon > 0$. Recalling that the sequence $(\gamma_m)_{m =1}^\infty$ was arbitrary, this implies that 
\begin{align*}
    \lim_{\gamma \to 0} P \Bigg( \Bigg| &\Bigg( \frac{|\Iczkl|}{|\Ickl|} g^{(0)}_{\gamma}(\Iczkl) + \frac{|\Icokl|}{|\Ickl|} g^{(1)}_{\gamma}(\Icokl) \Bigg) - \Bigg( \frac{|\Iczkltil|}{|\Ickltil|} g^{(0)}_0(\Iczkltil) + \frac{|\Icokltil|}{|\Ickltil|} g^{(1)}_0(\Icokltil) \Bigg) \Bigg| > \epsilon \Bigg) = 0.
\end{align*}
To conclude the proof, recall the definitions of $g_\gamma^{(0)}$ and $g_\gamma^{(1)}$ in \eqref{eq:g0_def_prop_gamma0}, and the definitions of $\Phi_{k \ell}(\Atr)$ in \eqref{eq:Phi_kl_def} and $\tilde{\Phi}(A)$ in the statement of Proposition~\ref{prop:gamma_to_zero}.

\subsection{Proof of Proposition \ref{prop:squiggle_estimation}}
\label{app:squiggle_proposition_proof}

To begin, we note that applying Proposition~\ref{prop:univariate_bernoulli_fission} to $A_{n,ij} \ind \text{Bernoulli}(M_{n,ij})$ yields $\Anteij \mid \Antrij \ind \text{Bernoulli}(T_{n,ij})$, where $T_{n,ij}$ was defined in Proposition~\ref{prop:univariate_bernoulli_fission}(\ref{prop:fission_define_T}). In what follows, we refer to this as the ``true'' model, $G_n$. That is, the true model $G_n$ is
\begin{equation}
    G_n: \Anteij \mid \Antrij \ind \operatorname{Bernoulli}(T_{n,ij}),
    \label{eq:bernoulli_true_model_prop_8_start}
\end{equation}
We now introduce a misspecified (``working'') model $F^{(s)}_{n, k \ell}$ that assumes the presence of the communities characterized by $\Zntr = \hat{z}_n$, splitting into the two cases where $\Aijtr = 0$ or $\Aijtr = 1$. That is, for $s \in \{0, 1\}$, 
\begin{equation*}
  F^{(s)}_{n, k \ell}: \Anteij \mid \{\Antrij = s\} \ind \operatorname{Bernoulli}(\psi^{(s)}_{n, k(i) \ell(j)}),
\end{equation*}
where \(k(i)\) returns the value of \(k\) for which \(\hat{z}_{n, ik} = 1\) and \(\ell(j)\) returns the value of \(\ell\) for which \(\hat{z}_{n, j \ell} = 1\). For a given \(n\), under \(F_{n, k \ell}^{(s)}\) a dyad \((i, j)\) in community pair \((k, \ell)\) and where $\Antrij = s$ (i.e., $\ (i, j) \in \Ic^{(s)}_{n,k\ell} := \{(i,j) ~:~ \hat{z}_{n,ik} = 1, \hat{z}_{n,j \ell} = 1, \Antrij = s\}$)
has a conditional log-likelihood and derivatives (up to constants) 
\begin{align}
    \ell_{n,ij}(\psi^{(s)}_{n, k \ell} \mid \Antr) &= \Anteij \log(\psi^{(s)}_{n, k\ell}) + (1 - \Anteij) \log(1 - \psi^{(s)}_{n, k \ell}), \nonumber \\
    \ell'_{n,ij}(\psi^{(s)}_{n, k \ell} \mid \Antr) &= \dfrac{\Anteij}{\psi^{(s)}_{n, k \ell}} + \dfrac{\Anteij - 1}{1 - \psi^{(s)}_{n, k \ell}}, \nonumber \\
    \ell''_{n,ij}(\psi^{(s)}_{n, k \ell} \mid \Antr) &= -\dfrac{\Anteij}{\psi^{(s)2}_{n, k \ell}} - \dfrac{1 - \Anteij}{(1 - \psi^{(s)2}_{n, k \ell})}.
\end{align}
The maximum likelihood estimator of $\psi^{(s)}_{n, k \ell}$ under \(F_{n, k \ell}^{(s)}\) is 
\begin{equation}
    \hat{B}^{(s)}_{n, k \ell} := \argmax_{\psi \in [0, 1]} \left\{ \sum_{(i,j) \in \Ic^{(s)}_{n,k \ell}} \ell_{n,ij}(\psi \mid \Antr) \right\} = \dfrac{1}{|\Ic^{(s)}_{n,k \ell}|} \sum_{(i,j) \in \Ic^{(s)}_{n,k \ell}} \Anteij.
    \label{eq:bernoulli_MLE_misspecified_prop_8}
\end{equation}
Next, we define $B^{(s)}_{n,k \ell}$ to be the value that maximizes the expected log-likelihood of the misspecified model \(F_{n, k \ell}^{(s)}\) under the true model \(G_{n}\):
\begin{align}
    B^{(s)}_{n,k \ell} &:= \argmax_{\psi \in (0, 1)} \E_{G_n} \left[ \sum_{(i,j) \in \Ic^{(s)}_{n,k \ell}} \ell_{n,ij}(\psi \mid \Antr) ~\Bigg|~ \Antr \right] = \dfrac{1}{|\Ic^{(s)}_{n,k \ell}|} \sum_{(i,j) \in \Ic^{(s)}_{n,k \ell}} T_{n,ij}, \label{eq:misspecified_target_poisson}
\end{align}
where the final equality follows from \eqref{eq:bernoulli_true_model_prop_8_start}. Note that $\E_{G_n}[\hat{B}^{(s)}_{n,k \ell} \mid \Antr] = B^{(s)}_{n, k \ell}$ and
\begin{equation}
    \tau^{(s)}_{n, k \ell} := \Var_{G_n}(\hat{B}^{(s)}_{n, k \ell} \mid \Antr) = \dfrac{1}{|\Ic^{(s)}_{n, k \ell}|^2} \sum_{(i,j) \in \Ic^{(s)}_{n, k \ell}} T_{n,ij}(1 - T_{n,ij}).
    \label{eq:tau_variance_squiggle}
\end{equation}
Because we assume $|\Icnskl|^{-1} = O(n^{-2})$ for the sequence of realizations $\{\Antr = \antr\}_{n=1}^\infty$, by Lemma \ref{lemma:mixing_clt} we have
\begin{equation}
    (\tau_{n, k \ell}^{(s)})^{-1/2} (\hat{B}^{(s)}_{n, k \ell} - B^{(s)}_{n, k \ell}) \mid \{\Antr = \antr \} \overset{\text{d}}{\rightarrow} \mathcal{N}(0,1).
    \label{eq:pre_delta_bernoulli_proof}
\end{equation}
Next, we apply a Taylor expansion to the left hand side of \eqref{eq:pre_delta_bernoulli_proof}. Define the function 
$h_{k \ell}^{(s)}(z) := \expit(\logit(z) - c^{(s)}),$
where $c^{(0)} :=   \log(\gamma/(1-\gamma)  )$ and  $c^{(1)} :=   \log((1-\gamma) / \gamma)$.
The function $h^{(s)}_{k \ell}$ is differentiable on $(0,1)$ with derivative $h'^{(s)}_{k \ell}(z) = \frac{\expit(\logit(z) - c^{(s)})}{1 + \exp(\logit(z) - c^{(s)})} \cdot \dfrac{1}{z(1-z)}$. Then, by the mean value theorem, $h^{(s)}_{k \ell} (\hat{B}^{(s)}_{n, k \ell}) = h^{(s)}_{k \ell}(B^{(s)}_{n, k \ell}) + h'^{(s)}_{k \ell} (\tilde{B}^{(s)}_{n, k \ell})(\hat{B}^{(s)}_{n, k \ell} - B^{(s)}_{n, k \ell})$, where $\tilde{B}^{(s)}_{n, k \ell}$ is a random variable between $\hat{B}^{(s)}_{n, k \ell}$ and $B^{(s)}_{n, k \ell}$. Rearranging, we have
\begin{equation}
     (\tau^{(s)}_{n, k \ell})^{-1/2} (\hat{B}^{(s)}_{n, k \ell} - B^{(s)}_{n, k \ell}) = (\tau^{(s)}_{n, k \ell})^{-1/2} \left( \dfrac{h_{k \ell}^{(s)}(\hat{B}^{(s)}_{n, k \ell}) - h_{k \ell}^{(s)}(B^{(s)}_{n, k \ell})}{h'^{(s)}_{k \ell}(\tilde{B}^{(s)}_{n, k \ell})} \right).
    \label{eq:delta_step_squiggle}
\end{equation}

For simplicity, until the end of this proof we will implicitly condition on $\{\Antr = \antr\}$ rather than writing this out explicitly.

Now, we show that the conditions of Lemma~\ref{lemma:alternative_cmt} hold and invoke it to argue that $\frac{h_{k \ell}'^{(s)}(\tilde{B}^{(s)}_{n, k \ell})}{h_{k \ell}'^{(s)}(B^{(s)}_{n, k \ell})} \overset{\text{p}}{\rightarrow} 1$. First, because $\hat{B}^{(s)}_{n, k \ell} - B^{(s)}_{n, k \ell} = o_p(1)$, it follows that $\tilde{B}^{(s)}_{n, k \ell} - B^{(s)}_{n, k \ell} = o_p(1)$.
Next, by the assumption that $0 < N_0 \le M_{n,ij} \le N_1 < 1$, it follows that $0 < \tilde{N}_0 \le T_{n,ij} \le \tilde{N}_1 < 1$ for constants $\tilde{N}_0$ and $\tilde{N}_1$, and so $0 < \tilde{N}_0 \le B^{(s)}_{n, k \ell} \le \tilde{N}_1 < 1$.
For a small constant $\epsilon > 0$ such that $0 < \tilde{N}_0 - \epsilon < \tilde{N}_0$ and $\tilde{N}_1 < \tilde{N}_1 + \epsilon < 1$, define the compact set $U := [\tilde{N}_0 - \epsilon, \tilde{N}_1 + \epsilon]$. Then, because $h'^{(s)}_{k \ell}$ is continuous on $(0,1)$, it is uniformly continuous on the compact set $U$. Next, because $B^{(s)}_{n, k \ell} \in \textnormal{int}(U)$ for all $n$ where $\textnormal{int}(U)$ is the interior of $U$ and $\tilde{B}^{(s)}_{n, k \ell} - B^{(s)}_{n, k \ell} = o_p(1)$, it follows that $\lim_{n \to \infty} P(\hat{B}^{(s)}_{n, k \ell} \in U) = 1$. So, by Lemma~\ref{lemma:alternative_cmt} we conclude that $h'^{(s)}_{k \ell}(\tilde{B}^{(s)}_{n, k \ell}) - h'^{(s)}_{k \ell}(B^{(s)}_{n, k \ell}) = o_p(1)$. Finally, noting that $0 < \breve{N}_0 \le  h'^{(s)}_{k \ell}(B^{(s)}_{n, k \ell}) \le \breve{N}_1 < 0$ for constants $\breve{N}_0$ and $\breve{N}_1$, we can divide both sides of this convergence by $h'^{(s)}_{k \ell}(B^{(s)}_{n, k \ell})$ to achieve $\frac{h'^{(s)}_{k \ell}(\tilde{B}^{(s)}_{n, k \ell})}{h'^{(s)}_{k \ell}(B^{(s)}_{n, k \ell})} - 1 = o_p(1)$. Equivalently,  $\frac{h_{k \ell}'^{(s)}(\tilde{B}^{(s)}_{n, k \ell})}{h_{k \ell}'^{(s)}(B^{(s)}_{n, k \ell})} \overset{\text{p}}{\rightarrow} 1$. Using this convergence in tandem with Equation~\eqref{eq:delta_step_squiggle}, by Slutsky's theorem we conclude that
\begin{equation}
    \left( \tau^{(s)}_{n, k \ell} h'^{(s)}_{k \ell}(B^{(s)}_{n,k \ell})^2 \right)^{-1/2} \left( h_{k \ell}^{(s)}(\hat{B}^{(s)}_{n, k \ell}) - h_{k \ell}^{(s)}(B^{(s)}_{n, k \ell}) \right) \overset{\text{d}}{\rightarrow} \mathcal{N}(0,1).
    \label{eq:individual_convergence_squiggle_uglier}
\end{equation}
To simplify notation, defining
\begin{align}
    \hat{V}^{(s)}_{n, k \ell} &:= h_{k \ell}^{(s)}(\hat{B}^{(s)}_{n, k \ell}), \label{eq:vhat} \\
    V^{(s)*}_{n, k \ell} &:= h^{(s)}_{k \ell}(B^{(s)}_{n, k \ell}), \label{eq:vstar} \\
    \zeta^{(s)}_{n, k \ell} &:= \tau^{(s)}_{n, k \ell} h'^{(s)}_{k \ell}(B^{(s)}_{n, k \ell})^2, \label{eq:zeta_s_def}
\end{align}
we rewrite the statement in \eqref{eq:individual_convergence_squiggle_uglier} as
\begin{equation}
    \left( \zeta^{(s)}_{n, k \ell} \right)^{-1/2} \left( \hat{V}^{(s)}_{n, k \ell} - V^{(s)*}_{n, k \ell} \right) \overset{\text{d}}{\rightarrow} \mathcal{N}(0,1).
    \label{eq:individual_convergence_squiggle}
\end{equation}
The convergence in \eqref{eq:individual_convergence_squiggle} summarizes the convergence result among the dyads $(i,j)$ such that $\Aijtr = s$ for a \emph{single community pair} indexed by a given $(k, \ell) \in \{1, 2, \dots, K\}^2$.

To treat \textit{all} dyads for a given community pair $(k, \ell) \in \{1, 2, \dots, K\}^2$, define the quantity $\zeta_{n, k \ell} := \sum_{s \in \{0,1\}} \dfrac{|\Icnskl|^2}{|\Icnkl|^2} \zeta^{(s)}_{n, k \ell}$, the quantity $\hat{\Phi}_{n, k \ell} := \dfrac{|\Icnzkl|}{|\Icnkl|} \hat{V}^{(0)}_{n, k \ell} + \dfrac{|\Icnokl|}{|\Icnkl|} \hat{V}^{(1)}_{n, k \ell}$, as well as $\Phi_{n, k \ell} := \dfrac{|\Icnzkl|}{|\Icnkl|} V^{(0)*}_{n, k \ell} + \dfrac{|\Icnokl|}{|\Icnkl|} V^{(1)*}_{n, k \ell}$. Also defining $U^{(s)}_{n, k \ell} := \left( \zeta^{(s)}_{n, k \ell} \right)^{-1/2} \left( \hat{V}^{(s)}_{n, k \ell} - V^{(s)*}_{n, k \ell} \right)$ for $s \in \{0, 1\}$, by the independence of $\hat{V}_{n, k \ell}^{(0)}$ and $\hat{V}_{n, k \ell}^{(1)}$, and by \eqref{eq:individual_convergence_squiggle}, we have $\begin{bmatrix}U_{n, k \ell}^{(0)} \\ U_{n, k \ell}^{(1)}\end{bmatrix} \overset{\text{d}}{\rightarrow} \mathcal{N}_2 \left( 0, I_2\right)$. Also define
\begin{align*}
    W_{n,k \ell} &:= \dfrac{\frac{|\Icnzkl|}{|\Icnkl|} (\hat{V}^{(0)}_{n, k \ell} - V^{(0)*}_{n, k \ell}) + \frac{|\Icnokl|}{|\Icnkl|} (\hat{V}^{(1)}_{n, k \ell} - V^{(1)*}_{n, k \ell})}{\sqrt{\frac{|\Icnzkl|^2}{|\Icnkl|^2} \zeta^{(0)}_{n, k \ell} + \frac{|\Icnokl|^2}{|\Icnkl|^2} \zeta^{(1)}_{n, k \ell}}} = a^{(0)}_{n, k \ell} U^{(0)}_{n, k \ell} + a^{(1)}_{n, k \ell} U^{(1)}_{n, k \ell},
\end{align*}
where $a^{(s)}_{n, k \ell} := \dfrac{\frac{|\Icnzkl|}{|\Icnkl|} (\zeta_n^{(s)})^{1/2}}{\sqrt{\frac{|\Icnzkl|^2}{|\Icnkl|^2} \zeta^{(0)}_{n, k \ell} + \frac{|\Icnokl|^2}{|\Icnkl|^2} \zeta^{(1)}_{n, k \ell}}}$, which satisfies $(a^{(0)}_{n, k \ell})^2 + (a^{(1)}_{n, k \ell})^2 = 1$.

Now, take any subsequence given by $(n_m)_{m=1}^\infty$. Because $(a_{n_m, k \ell}^{(0)}, a_{n_m, k \ell}^{(1)})$ lies on a compact set (the unit circle in $\mathbb{R}^2$), there exists a further subsequence $n_{m_r}$ such that $(a_{n_{m_r}, k \ell}^{(0)}, a_{n_{m_r}, k \ell}^{(1)}) \to (a^{(0)}, a^{(1)})$, where $(a^{(0)})^2 + (a^{(1)})^2 = 1$ by the fact that the unit circle is a closed subspace of $\mathbb{R}^2$, consequently contains its limit points. Along this further subsequence, by Slutsky's theorem and the continuous mapping theorem, we have $W_{n_{m_r}, k \ell} \overset{\text{d}}{\rightarrow} \mathcal{N}(0,1)$. (\textit{Note that in the proof of Proposition~\ref{prop:poisson_estimation}, this particular step was not necessary, as Proposition~\ref{prop:poisson_estimation} does not involve splitting and combining results across $\Icnzkl$ and $\Icnokl$, and instead is able to directly establish a result for $\Icnkl$.})

We have established that each subsequence $W_{n_m, k \ell}$ has a further subsequence $W_{n_{m_r}, k \ell}$ that converges in distribution to $\mathcal{N}(0,1)$. Letting $P_n$ denote the probability measure of $W_{n, k \ell}$, and letting $P$ denote the probability measure of a $\mathcal{N}(0,1)$ random variable, this means that each subsequence $P_{n_m}$ contains a further subsequence $P_{n_{m_r}}$ that converges weakly to $P$ as $r \to \infty$. Thus, Theorem 2.6 of \citet{billingsley1999ConvergenceProbabilityMeasures} ensures that $P_n$ converges weakly to $P$. Therefore, $W_{n, k \ell} \overset{\text{d}}{\rightarrow} \mathcal{N}(0,1)$. Then, recognizing by algebraic manipulation that $W_{n, k \ell} = (\zeta_{n, k \ell})^{-1/2} (\hat{\Phi}_{n, k \ell} - \Phi_{n, k \ell})$, we conclude that
\begin{equation}
    \left( \zeta_{n, k \ell} \right)^{-1/2} \left( \hat{\Phi}_{n, k \ell} - \Phi_{n, k \ell} \right) \overset{\text{d}}{\rightarrow} \mathcal{N}(0,1).
    \label{eq:convergence_phi}
\end{equation}

Noting that the collection $\left\{ \hat{\Phi}_{n, k \ell} \right\}_{k = 1, \ell=1}^K$ is mutually independent, and defining $\hat{\Phi}_n$ and $\Phi_{n}$ to be the $K \times K$ matrices whose $(k,l)$th entries are $\hat{\Phi}_{n, k \ell}$ and $\Phi_{n, k \ell}$ respectively, by Lemma~\ref{lemma:marginal_independence_joint_convergence} we have
\begin{align}
   \left( \Xi_n \right)^{-1/2} \left( \textnormal{vec} \left( \hat{\Phi}_n \right) - \textnormal{vec} \left( \Phi_n \right) \right) \overset{\text{d}}{\longrightarrow} \mathcal{N}_{K^2}(0, I_{K^2}),
   \label{eq:squiggle_almost_end_convergence_partial}
\end{align}
where $\Xi_n := \textnormal{diag}(\textnormal{vec}(\zeta_{n}))$, and $\zeta_n \in \mathbb{R}^{K \times K}$ is defined entry-wise as $\zeta_{n, k \ell}$.

In the setting of Proposition \ref{prop:squiggle_estimation}, the parameter of interest and estimator take the form
\begin{align}
    \hat{\xi}_n &:= u_n^\top \textnormal{vec} \left( \hat{\Phi}_n \right),
    \label{eq:squiggle_hat_proof_def} \\
    \xi_n &:= u_{n}^{\top} \operatorname{vec} \left( \Phi_n \right) \label{eq:squiggle_target_proof_def}.
\end{align}
Defining $\omega_n^2 := u_n^\top \Xi_n u_n$, by Lemma \ref{lemma:gaussian_rotation}, we have
\begin{equation}
    \begin{split}
        (u_n^\top \Xi_n u_n)^{-1/2} &u_n^\top \left( \operatorname{vec}(\hat{\Phi}_n) - \operatorname{vec}(\Phi_n) \right) = \dfrac{\hat{\xi}_n - \xi_n}{\omega_n} \overset{\text{d}}{\longrightarrow} \mathcal{N}(0,1).
    \end{split}
    \label{eq:bernoulli_conv_theoretical}
\end{equation}

Recall that $\Xi_n := \textnormal{diag}(\textnormal{vec}(\zeta_n))$ where $\zeta_{n, k \ell} := \sum_{s \in \{0, 1\}} \frac{|\Icnskl|^2}{|\Icnkl|^2} \zeta^{(s)}_{n, k \ell}$, and where $\zeta^{(s)}_{n, k \ell}$ was defined in \eqref{eq:zeta_s_def} and is proportional to $\tau^{(s)}_{n, k \ell}$. To provide an upper bound on $\zeta^{(s)}_{n, k \ell}$, first note by Jensen's inequality and the convexity of $x \mapsto x^2$ that
\begin{align}
    \tau^{(s)}_{n, k \ell} &:= \dfrac{1}{|\Icnskl|^2} \sum_{(i,j) \in \Icnskl} T_{n,ij}(1 - T_{n,ij}) \nonumber \\
    &= \dfrac{1}{|\Icnskl|} \left[ \dfrac{1}{|\Icnskl|} \sum_{(i,j) \in \Icnskl} T_{n,ij} - \sum_{(i,j) \Icnskl} \dfrac{1}{|\Icnskl|} T_{n,ij}^2  \right] \nonumber \\
    &\le \dfrac{1}{|\Icnskl|} \left[ \dfrac{1}{|\Icnskl|} \sum_{(i,j) \in \Icnskl} T_{n,ij} - \left( \dfrac{1}{|\Icnskl|} \sum_{(i,j) \Icnskl} T_{n,ij} \right)^2 \right] = \dfrac{B^{(s)}_{n, k \ell}(1 - B^{(s)}_{n, k \ell})}{|\Icnskl|} \nonumber \\
    &=: \tilde{\tau}^{(s)}_{n, k \ell}. \label{eq:tau_tilde_tau_bound_proof}
\end{align}
Then, define $\Delta^{(s)}_{n, k \ell} = \tilde{\tau}^{(s)}_{n, k \ell} \cdot (h'^{(s)}_{k \ell}(B^{(s)}_{n, k \ell}))^2 = \frac{B^{(s)}_{n, k \ell}(1 - B^{(s)}_{n, k \ell}) e^{2c^{(s)}}}{|\Icnskl| ((1-B^{(s)}_{n, k \ell}) e^{c^{(s)}} + B^{(s)}_{n, k \ell})^4}$, and note that $\zeta^{(s)}_{n, k \ell} \le \Delta^{(s)}_{n, k \ell}$. Further define $\Delta_{n, k \ell} := \sum_{s \in \{0,1\}} \frac{|\Icnskl|^2}{|\Icnkl|^2} \Delta^{(s)}_{n, k \ell}$ and $\Sigma_n := \textnormal{diag}(\textnormal{vec}(\Delta_n))$. It follows that
\begin{equation}
    \omega^2_n = u_n^\top \Xi_n u_n \le u_n^\top \Sigma_n u_n =: \sigma^2_n.
    \label{eq:sigma_sq_def_squiggle_proof}
\end{equation}
The variance estimate is $\hat{\sigma}^2 := u_n^\top \hat{\Sigma}_n u_n$, where $\hat{\Sigma}_n := \textnormal{diag}(\textnormal{vec}(\hat{\Delta}_n))$, $\hat{\Delta}_{n, k \ell} := \sum_{s \in \{0,1\}} \frac{|\Icnskl|^2}{|\Icnkl|^2} \hat{\Delta}^{(s)}_{n, k \ell}$, and $\hat{\Delta}^{(s)}_{n, k \ell} := \frac{\hat{B}^{(s)}_{n, k \ell}(1 - \hat{B}^{(s)}_{n, k \ell}) e^{2c^{(s)}}}{|\Icnskl| ((1-\hat{B}^{(s)}_{n, k \ell}) e^{c^{(s)}} + \hat{B}^{(s)}_{n, k \ell})^4}$. Next, we show that the conditions of Lemma~\ref{lemma:asymptotic_variance_un_equal} hold, so that 
$\dfrac{\sigma_n}{
\hat{\sigma}_n} = \dfrac{(u_n \hat{\Sigma}_n u_n)^{-1/2}}{(u_n \Sigma_n u_n)^{-1/2}} \overset{\text{p}}{\rightarrow} 1$. Define the function $\varphi_s: x \mapsto \dfrac{x(1-x)e^{2c^{(s)}}}{((1-x)e^{c^{(s)}} + x)^4}$, so that $|\Icnskl| \hat{\Delta}_{n, k \ell}^{(s)} = \varphi_s(\hat{B}^{(s)}_{n, k \ell})$ and $|\Icnskl| \Delta_{n, k \ell}^{(s)} = \varphi_s(B^{(s)}_{n, k \ell})$. Then, by the assumed bound $0 < N_0 \le M_{n,ij} \le N_1 < 1$, similar to an argument made previously in this proof, we can construct a compact set $U = [u_0, u_1] \subset (0,1)$ with $0 < u_0$ satisfying $B^{(s)}_{n, k \ell} \in U$ for all $n$ and $\lim_{n \to \infty} P(\hat{B}^{(s)}_{n, k \ell} \in U) = 1$. Because $\hat{B}^{(s)}_{n, k \ell} - B^{(s)}_{n, k \ell} = o_p(1)$, applying Lemma~\ref{lemma:alternative_cmt} we obtain
\begin{equation}
    |\Icnskl| (\hat{\Delta}^{(s)}_{n, k \ell} - \Delta^{(s)}_{n, k \ell}) = \varphi_s(\hat{B}^{(s)}_{n, k \ell}) - \varphi_s(B^{(s)}_{n, k \ell}) \overset{\text{p}}{\rightarrow} 0.
    \label{eq:difference_var_phi_to_0}
\end{equation}
Because $|\Icnzkl|^{-1} = O(n^{-2})$ and $|\Icnokl|^{-1} = O(n^{-2})$ and $|\Icnzkl| + |\Icnokl| = |\Icnkl|$, we have that $|\Icnkl|^{-1} = O(n^{-2})$. So,
\begin{equation}
    0 < \liminf_{n \to \infty} \dfrac{|\Icnokl|}{|\Icnkl|} \le \limsup_{n \to \infty} \dfrac{|\Icnokl|}{|\Icnkl|} < 1.
    \label{eq:liminf_ratio_zeros}
\end{equation}
The statement in \eqref{eq:liminf_ratio_zeros} also holds for an arbitrary subsequence $(n_m)_{m=1}^\infty$ as $m \to \infty$, so from \eqref{eq:liminf_ratio_zeros}, so we now use the fact that whenever the limit inferior or superior exists as a real number $\rho$, we can always find a further subsequence such that the \textit{limit} of the further subsequence is $\rho$. Hence, there exists a further subsequence $n_{m_r}$ such that $\lim_{r \to \infty} \frac{|\Ic^{(1)}_{n_{m_r}, k \ell}|}{|\Ic_{n_{m_r}, k \ell}|} = \rho$ and $\lim_{r \to \infty} \frac{|\Ic^{(0)}_{n_{m_r}, k \ell}|}{|\Ic_{n_{m_r}, k \ell}|} = 1-\rho$ for some $0 < \rho < 1$. Because $\varphi_s(B^{(s)}_{n_{m_r}, k \ell})$ is bounded from below by $0$ and bounded from above by a constant, we can find a further subsequence $(n_{m_{r_q}})_{q=1}^{\infty}$ such that $\varphi_s(B^{(s)}_{n_{m_{r_q}}, k \ell}) \to \lambda$ as $q \to \infty$ for some $0 < \lambda < \infty$.

For simplicity, let us relabel the subsequence $(n_{m_{r_q}})_{q=1}^{\infty}$ as $(n_{m_p})_{p=1}^{\infty}$. Using the fact that limits are preserved under subsequences, we have $\lim_{p \to \infty} \frac{|\Ic^{(1)}_{n_{m_p}, k \ell}|}{|\Ic_{n_{m_p}, k \ell}|} = \rho$, $\lim_{p \to \infty} \frac{|\Ic^{(0)}_{n_{m_p}, k \ell}|}{|\Ic_{n_{m_p}, k \ell}|} = 1-\rho$, and $\lim_{p \to \infty} \varphi_s(B^{(s)}_{n_{m_p}, k \ell}) = \lambda$ for some $0 < \lambda < \infty$.

Note that we can express $\frac{\hat{\Delta}_{n_{m_p}, k \ell}}{\Delta_{n_{m_p}, k \ell}}$ as
\begin{align*}
    \dfrac{\hat{\Delta}_{n_{m_p}, k \ell}}{\Delta_{n_{m_p}, k \ell}} &= \dfrac{\sum_{s \in \{0,1\}} \frac{|\Ic^{(s)}_{n_{m_p}, k \ell}|^2}{|\Ic_{n_{m_p}, k \ell}|^2} \hat{\Delta}^{(s)}_{n_{m_p}, k \ell} }{\sum_{s \in \{0,1\}} \frac{|\Ic^{(s)}_{n_{m_p}, k \ell}|^2}{|\Ic_{n_{m_p}, k \ell}|^2} \Delta^{(s)}_{n_{m_p}, k \ell}} = \dfrac{\sum_{s \in \{0, 1\}} \frac{|\Ic^{(s)}_{n_{m_p}, k \ell}|}{|\Ic_{n_{m_p}, k \ell}|}\varphi_s(\hat{B}^{(s)}_{n_{m_p}, k \ell})}{\sum_{s \in \{0, 1\}} \frac{|\Ic^{(s)}_{n_{m_p}, k \ell}|}{|\Ic_{n_{m_p}, k \ell}|}\varphi_s(B^{(s)}_{n_{m_p}, k \ell})} \\
    &= 1 + \dfrac{\sum_{s \in \{0, 1\}} \frac{|\Ic^{(s)}_{n_{m_p}, k \ell}|}{|\Ic_{n_{m_p}, k \ell}|} (\varphi_s(\hat{B}^{(s)}_{n_{m_p}, k \ell}) - \varphi_s(B^{(s)}_{n_{m_p}, k \ell}))}{\sum_{s \in \{0, 1\}} \frac{|\Ic^{(s)}_{n_{m_p}, k \ell}|}{|\Ic_{n_{m_p}, k \ell}|}\varphi_s(B^{(s)}_{n_{m_p}, k \ell})}.
\end{align*}
Along the further subsequence $(n_{m_p})_{p=1}^\infty$, by Slutsky's theorem we have $\frac{\hat{\Delta}_{n_{m_p}, k \ell}}{\Delta_{n_{m_p}, k \ell}} \overset{\text{p}}{\rightarrow} 1$.

We have established that each subsequence $\frac{\hat{\Delta}_{n_m, k \ell}}{\Delta_{n_m, k \ell}}$ has a further subsequence $\frac{\hat{\Delta}_{n_{m_p}, k \ell}}{\Delta_{n_{m_p}, k \ell}}$ that converges in probability to $1$. Note that convergence in probability to a constant is equivalent to convergence in distribution to a constant. Letting $P_n$ denote the probability measure of $\frac{\hat{\Delta}_{n, k \ell}}{\Delta_{n, k \ell}}$, and letting $P$ denote the probability measure of the constant $1$, this means that each subsequence $P_{n_m}$ contains a further subsequence $P_{n_{m_p}}$ that converges weakly to $P$ as $p \to \infty$. Thus, Theorem 2.6 of \citet{billingsley1999ConvergenceProbabilityMeasures} ensures that $P_n$ converges weakly to $P$. Therefore,
\begin{equation}
    \dfrac{\hat{\Delta}_{n, k \ell}}{\Delta_{n, k \ell}} \overset{\text{p}}{\rightarrow} 1
\label{eq:Delta_diagonal_convergence_squiggle}
\end{equation}
for all $k, \ell \in \{1, 2, \dots, K\}$.

By the construction of $\Sigma_n := \textnormal{diag}(\textnormal{vec}(\Delta_n))$ and $\hat{\Sigma}_n := \textnormal{diag}(\textnormal{vec}(\hat{\Delta}_n))$, the result in \eqref{eq:Delta_diagonal_convergence_squiggle} implies that $\hat{\Sigma}_n \Sigma_n^{-1} \overset{\text{p}}{\rightarrow} I_{K^2 \times K^2}$. To show that the final condition of Lemma~\eqref{lemma:asymptotic_variance_un_equal} holds, we can decompose $\Sigma_n = N_n^{-1} \tilde{\Sigma}_n$ and $\hat{\Sigma}_n = N_n^{-1} \hat{\tilde{\Sigma}}_n$, where $N_n := \textnormal{diag}(\textnormal{vec}(\tilde{N}_n))$ where $\tilde{N}_{n, k \ell} := |\Icnkl|$, and where $\tilde{\Sigma}_n := \textnormal{diag}(\textnormal{vec}(\tilde{\Delta}_n))$ and $\hat{\tilde{\Sigma}}_n := \textnormal{diag}(\textnormal{vec}(\hat{\tilde{\Delta}}_n))$, with $\hat{\tilde{\Delta}}_{n, k \ell} := |\Icnkl| \hat{\Delta}_{n, k \ell}$, and 
\begin{align*}
    \tilde{\Delta}_{n, k \ell} &:= |\Icnkl| \Delta_{n, k \ell} = |\Icnkl| \sum_{s \in \{0, 1\}} \dfrac{|\Icnskl|^2}{|\Icnkl|^2} \Delta^{(s)}_{n, k \ell} = \sum_{s \in \{0, 1\}} \dfrac{|\Icnskl|}{|\Icnkl|} \varphi_s(B^{(s)}_{n, k\ell}),
\end{align*}
where the last equality above (and the fact that $0 < b_0 \le \varphi_s(B^{(0)}_{n, k \ell}), \varphi_s(B^{(1)}_{n, k \ell}) \le \tilde{K}_1 < b_1$ for constants $b_0$ and $b_1$) imply that $\tilde{\Delta}_{n, k \ell}$ (and consequently $\tilde{\Sigma}_{n,ii}$ for all $i$) is contained in a compact set $[b_0, b_1]$. So, by Lemma~\ref{lemma:asymptotic_variance_un_equal}, we have $\dfrac{\sigma_n}{\hat{\sigma}_n} = \dfrac{(u_n^\top \hat{\Sigma}_n u_n)^{-1/2}}{(u_n^\top \Sigma_n u_n)^{-1/2}} \overset{\text{p}}{\longrightarrow} 1$. Now, recall from (\ref{eq:bernoulli_conv_theoretical}) that $\frac{\hat{\xi}_n - \xi_n}{\omega_n} = \frac{\sigma_n}{\omega_n} \cdot \frac{\hat{\xi}_n - \xi_n}{\sigma_n} \overset{\text{d}}{\longrightarrow} \mathcal{N}(0,1)$. Thus, by Slutsky's theorem,
\begin{align}
    \dfrac{\sigma_n}{\omega_n} \cdot \dfrac{\hat{\xi}_n - \xi_n}{\hat{\sigma}_n} \overset{\text{d}}{\longrightarrow} \mathcal{N}(0,1).
    \label{eq:idontknowwhattocallthis}
\end{align}
By the definition of convergence in distribution, the cumulative distribution function (CDF) of the left-hand size converges pointwise to the CDF of the $\mathcal{N}(0,1)$ distribution at all continuity points (which is every point in the case of $\mathcal{N}(0,1)$). Hence, denoting $\phi_{1 - \alpha/2}$ to be the $(1 - \alpha/2)$-quantile of the $\mathcal{N}(0,1)$ distribution, \eqref{eq:idontknowwhattocallthis} implies that
\begin{align}
    \lim_{n \to \infty} P \left(-\phi_{1 - \alpha/2} \cdot \dfrac{\sigma_n}{\omega_n} \le \dfrac{\hat{\xi}_n - \xi_n}{\hat{\sigma}_n} \le \phi_{1 - \alpha/2} \cdot \dfrac{\sigma_n}{\omega_n} \right) = 1 - \alpha.
    \label{eq:ialsodontknow}
\end{align}
Because $\frac{\sigma_n}{\omega_n} \le 1$ for all $n$, we have
\begin{align}
    P \left(-\phi_{1 - \alpha/2} \cdot \dfrac{\sigma_n}{\omega_n} \le \dfrac{\hat{\xi}_n - \xi_n}{\hat{\sigma}_n} \le \phi_{1 - \alpha/2} \cdot \dfrac{\sigma_n}{\omega_n} \right) \le P \left(-\phi_{1 - \alpha/2} \le \dfrac{\hat{\xi}_n - \xi_n}{\hat{\sigma}_n} \le \phi_{1 - \alpha/2}\right), \label{ineq:almost_done_squiggle_xi_thing}
\end{align}
which in combination with (\ref{eq:ialsodontknow}) implies that
\begin{align}
    \liminf_{n \to \infty} P \left(\hat{\xi}_n - \phi_{1 - \alpha/2} \cdot \hat{\sigma}_n \le \xi_n \le \hat{\xi}_n + \phi_{1 - \alpha/2} \cdot \hat{\sigma}_n\right) \ge 1 - \alpha. \label{eq:almost_done_bernoulli}
\end{align}
Finally, we return to explicitly writing out the conditioning on $\{\Antr = \antr\}$, and write the arguments of the estimator $\hat{\xi}_n = \hat{\xi}_n(\Ante, \Antr)$ and estimand $\xi_n(\Antr)$ to rewrite (\ref{eq:almost_done_bernoulli}) as
\begin{align}
    \liminf_{n \to \infty} P \Big(\hat{\xi}_n &\left( \Ante, \Antr \right) - \phi_{1 - \alpha/2} \cdot \hat{\sigma}_n \le \xi_n \left( \Antr \right) \\
    &\le \hat{\xi}_n \left( \Ante, \Antr \right) + \phi_{1 - \alpha/2} \cdot \hat{\sigma}_n \mid \Antr = \antr \Big) \ge 1 - \alpha,
    \label{eq:done_bernoulli}
\end{align}
with $\xi(\Antr)$ defined in \eqref{eq:squiggle_target_proof_def}, $\hat{\xi}(\Ante, \Antr)$ in \eqref{eq:squiggle_hat_proof_def}, and $\hat{\sigma}_n$ in \eqref{eq:sigma_sq_def_squiggle_proof}.

\subsection{Proof of Corollary \ref{cor:exact_coverage_bernoulli_target}}
\label{app:proof_of_corollary_to_squiggle}

Under the additional condition given in Corollary~\ref{cor:exact_coverage_bernoulli_target}, for large $n$, we have by Proposition~\ref{prop:taylor}(\ref{prop:taylor_a}) that $\Phi_{k \ell} = B_{k \ell}$ for all $(k, \ell)$ such that the corresponding entry of $u_n \in \mathbb{R}^{K^2}$ is nonzero. So, it follows that $\xi_n(\Antr) = \theta_n(\Antr)$ where $\theta_n(\Antr)$ is defined in \eqref{eq:target_of_inference}.

By this additional condition given in Corollary~\ref{cor:exact_coverage_bernoulli_target}, it also follows that for large $n$, the inequality $\tau^{(s)}_{n, k \ell} \le \tilde{\tau}^{(s)}_{n, k \ell}$ from \eqref{eq:tau_tilde_tau_bound_proof} in Supplement~\ref{app:squiggle_proposition_proof} becomes an equality $\tau^{(s)}_{n, k \ell} = \tilde{\tau}^{(s)}_{n, k \ell}$ for all $(k, \ell)$ such that the corresponding entry of $u_n \in \mathbb{R}^{K^2}$ is nonzero. Consequently, for large $n$ we have $\omega^2_n = \sigma_n^2$ in \eqref{eq:sigma_sq_def_squiggle_proof}, and the inequality in \eqref{ineq:almost_done_squiggle_xi_thing} also becomes an equality.

Putting these facts together, for large $n$ we can replace $\xi_n(\Antr)$ with $\theta_n(\Antr)$, and due to the change from inequalities to equalities in \eqref{eq:tau_tilde_tau_bound_proof}, \eqref{eq:sigma_sq_def_squiggle_proof}, and \eqref{ineq:almost_done_squiggle_xi_thing}, the limit inferior in \eqref{eq:almost_done_bernoulli} and \eqref{eq:done_bernoulli} can be replaced with a limit, and so we have \\ $\lim_{n \to \infty} P \Big( \theta(\Antr) \in \hat{\xi}(\Ante, \Antr) \pm \phi_{1-\alpha/2} \cdot \hat{\sigma}_n \mid \Antr \Big) = 1-\alpha$.



\bibliographystyle{agsm}
\bibliography{refs}

@article{abbe2018CommunityDetectionStochastic,
  title = {Community {{Detection}} and {{Stochastic Block Models}}: {{Recent Developments}}},
  shorttitle = {Community {{Detection}} and {{Stochastic Block Models}}},
  author = {Abbe, Emmanuel},
  year = {2018},
  journal = {Journal of Machine Learning Research},
  volume = {18},
  number = {177},
  pages = {1--86},
  issn = {1533-7928},
  urldate = {2024-07-03},
  abstract = {The stochastic block model (SBM) is a random graph model with planted clusters. It is widely employed as a canonical model to study clustering and community detection, and provides generally a fertile ground to study the statistical and computational tradeoffs that arise in network and data sciences. This note surveys the recent developments that establish the fundamental limits for community detection in the SBM, both with respect to information-theoretic and computational thresholds, and for various recovery requirements such as exact, partial and weak recovery (a.k.a., detection). The main results discussed are the phase transitions for exact recovery at the Chernoff-Hellinger threshold, the phase transition for weak recovery at the Kesten- Stigum threshold, the optimal distortion-SNR tradeoff for partial recovery, the learning of the SBM parameters and the gap between information-theoretic and computational thresholds. The note also covers some of the algorithms developed in the quest of achieving the limits, in particular two-round algorithms via graph-splitting, semi-definite programming, linearized belief propagation, classical and nonbacktracking spectral methods. A few open problems are also discussed.},
  file = {/Users/ethan/Zotero/storage/XYS9JFD6/Abbe - 2018 - Community Detection and Stochastic Block Models R.pdf}
}

@misc{agterberg2023OverviewAsymptoticNormality,
  title = {An {{Overview}} of {{Asymptotic Normality}} in {{Stochastic Blockmodels}}: {{Cluster Analysis}} and {{Inference}}},
  shorttitle = {An {{Overview}} of {{Asymptotic Normality}} in {{Stochastic Blockmodels}}},
  author = {Agterberg, Joshua and Cape, Joshua},
  year = {2023},
  month = may,
  number = {arXiv:2305.06353},
  eprint = {2305.06353},
  primaryclass = {math, stat},
  publisher = {arXiv},
  doi = {10.48550/arXiv.2305.06353},
  urldate = {2024-07-02},
  abstract = {This paper provides a selective review of the statistical network analysis literature focused on clustering and inference problems for stochastic blockmodels and their variants. We survey asymptotic normality results for stochastic blockmodels as a means of thematically linking classical statistical concepts to contemporary research in network data analysis. Of note, multiple different forms of asymptotically Gaussian behavior arise in stochastic blockmodels and are useful for different purposes, pertaining to estimation and testing, the characterization of cluster structure in community detection, and understanding latent space geometry. This paper concludes with a discussion of open problems and ongoing research activities addressing asymptotic normality and its implications for statistical network modeling.},
  archiveprefix = {arXiv},
  file = {/Users/ethan/Zotero/storage/3MX4SQAW/Agterberg and Cape - 2023 - An Overview of Asymptotic Normality in Stochastic .pdf}
}

@article{airoldi2008MixedMembershipStochastic,
  title = {Mixed {{Membership Stochastic Blockmodels}}},
  author = {Airoldi, Edoardo M. and Blei, David M. and Fienberg, Stephen E. and Xing, Eric P.},
  year = {2008},
  journal = {Journal of Machine Learning Research},
  volume = {9},
  number = {65},
  pages = {1981--2014},
  issn = {1533-7928},
  urldate = {2024-07-02},
  abstract = {Consider data consisting of pairwise measurements, such as presence or absence of links between pairs of objects. These data arise, for instance, in the analysis of protein interactions and gene regulatory networks, collections of author-recipient email, and social networks. Analyzing pairwise measurements with probabilistic models requires special assumptions, since the usual independence or exchangeability assumptions no longer hold. Here we introduce a class of variance allocation models for pairwise measurements: mixed membership stochastic blockmodels. These models combine global parameters that instantiate dense patches of connectivity (blockmodel) with local parameters that instantiate node-specific variability in the connections (mixed membership). We develop a general variational inference algorithm for fast approximate posterior inference. We demonstrate the advantages of mixed membership stochastic blockmodels with applications to social networks and protein interaction networks.},
  file = {/Users/ethan/Zotero/storage/BSGPM5RA/Airoldi et al. - 2008 - Mixed Membership Stochastic Blockmodels.pdf}
}

@article{amini2013PseudolikelihoodMethodsCommunity,
  title = {Pseudo-Likelihood Methods for Community Detection in Large Sparse Networks},
  author = {Amini, Arash A. and Chen, Aiyou and Bickel, Peter J. and Levina, Elizaveta},
  year = {2013},
  month = aug,
  journal = {The Annals of Statistics},
  volume = {41},
  number = {4},
  pages = {2097--2122},
  publisher = {Institute of Mathematical Statistics},
  issn = {0090-5364, 2168-8966},
  doi = {10.1214/13-AOS1138},
  urldate = {2023-10-24},
  abstract = {Many algorithms have been proposed for fitting network models with communities, but most of them do not scale well to large networks, and often fail on sparse networks. Here we propose a new fast pseudo-likelihood method for fitting the stochastic block model for networks, as well as a variant that allows for an arbitrary degree distribution by conditioning on degrees. We show that the algorithms perform well under a range of settings, including on very sparse networks, and illustrate on the example of a network of political blogs. We also propose spectral clustering with perturbations, a method of independent interest, which works well on sparse networks where regular spectral clustering fails, and use it to provide an initial value for pseudo-likelihood. We prove that pseudo-likelihood provides consistent estimates of the communities under a mild condition on the starting value, for the case of a block model with two communities.},
  file = {/Users/ethan/Zotero/storage/DWFQ9ZWS/Amini et al. - 2013 - Pseudo-likelihood methods for community detection .pdf}
}

@Manual{aminiNettNetworkAnalysis,
    title = {nett: Network Analysis and Community Detection},
    author = {Arash A. Amini and Linfan Zhang},
    year = {2022},
    note = {R package version 1.0.0},
    url = {https://CRAN.R-project.org/package=nett},
  }

@article{athreya2018StatisticalInferenceRandom,
  title = {Statistical {{Inference}} on {{Random Dot Product Graphs}}: A {{Survey}}},
  shorttitle = {Statistical {{Inference}} on {{Random Dot Product Graphs}}},
  author = {Athreya, Avanti and Fishkind, Donniell E. and Tang, Minh and Priebe, Carey E. and Park, Youngser and Vogelstein, Joshua T. and Levin, Keith and Lyzinski, Vince and Qin, Yichen and Sussman, Daniel L.},
  year = {2018},
  journal = {Journal of Machine Learning Research},
  volume = {18},
  number = {226},
  pages = {1--92},
  issn = {1533-7928},
  urldate = {2024-07-08},
  abstract = {The random dot product graph (RDPG) is an independent-edge random graph that is analytically tractable and, simultaneously, either encompasses or can successfully approximate a wide range of random graphs, from relatively simple stochastic block models to complex latent position graphs. In this survey paper, we describe a comprehensive paradigm for statistical inference on random dot product graphs, a paradigm centered on spectral embeddings of adjacency and Laplacian matrices. We examine the graph-inferential analogues of several canonical tenets of classical Euclidean inference. In particular, we summarize a body of existing results on the consistency and asymptotic normality of the adjacency and Laplacian spectral embeddings, and the role these spectral embeddings can play in the construction of single- and multi-sample hypothesis tests for graph data. We investigate several real-world applications, including community detection and classification in large social networks and the determination of functional and biologically relevant network properties from an exploratory data analysis of the Drosophila connectome. We outline requisite background and current open problems in spectral graph inference.},
  file = {/Users/ethan/Zotero/storage/E7HGTXVG/Athreya et al. - 2018 - Statistical Inference on Random Dot Product Graphs.pdf}
}

@book{billingsley1999ConvergenceProbabilityMeasures,
  title = {Convergence of {{Probability Measures}}},
  author = {Billingsley, Patrick},
  year = {1999},
  publisher = {John Wiley \& Sons}
}

@article{glanz2018expectation,
  title={An expectation--maximization algorithm for the matrix normal distribution with an application in remote sensing},
  author={Glanz, Hunter and Carvalho, Luis},
  journal={Journal of Multivariate Analysis},
  volume={167},
  pages={31--48},
  year={2018},
  publisher={Elsevier}
}

@article{button2019DoubledippingRevisited,
  title = {Double-Dipping Revisited},
  author = {Button, Katherine S.},
  year = {2019},
  month = may,
  journal = {Nature Neuroscience},
  volume = {22},
  number = {5},
  pages = {688--690},
  publisher = {Nature Publishing Group},
  issn = {1546-1726},
  doi = {10.1038/s41593-019-0398-z},
  urldate = {2024-07-08},
  abstract = {Robust conclusions require rigorous statistics. In 2009 a seminal paper described the dangers and prevalence of double-dipping in neuroscience. Ten years on, I consider progress toward statistical rigor in neuroimaging.},
  copyright = {2019 Springer Nature Limited},
  langid = {english},
  file = {/Users/ethan/Zotero/storage/V9CNSIBG/Button - 2019 - Double-dipping revisited.pdf}
}

@article{chatterjee2013EstimatingUnderstandingExponential,
  title = {Estimating and Understanding Exponential Random Graph Models},
  author = {Chatterjee, Sourav and Diaconis, Persi},
  year = {2013},
  month = oct,
  journal = {The Annals of Statistics},
  volume = {41},
  number = {5},
  pages = {2428--2461},
  publisher = {Institute of Mathematical Statistics},
  issn = {0090-5364, 2168-8966},
  doi = {10.1214/13-AOS1155},
  urldate = {2024-07-09},
  abstract = {We introduce a method for the theoretical analysis of exponential random graph models. The method is based on a large-deviations approximation to the normalizing constant shown to be consistent using theory developed by Chatterjee and Varadhan [European J. Combin. 32 (2011) 1000--1017]. The theory explains a host of difficulties encountered by applied workers: many distinct models have essentially the same MLE, rendering the problems ``practically'' ill-posed. We give the first rigorous proofs of ``degeneracy'' observed in these models. Here, almost all graphs have essentially no edges or are essentially complete. We supplement recent work of Bhamidi, Bresler and Sly [2008 IEEE 49th Annual IEEE Symposium on Foundations of Computer Science (FOCS) (2008) 803--812 IEEE] showing that for many models, the extra sufficient statistics are useless: most realizations look like the results of a simple Erd{\H o}s--R{\'e}nyi model. We also find classes of models where the limiting graphs differ from Erd{\H o}s--R{\'e}nyi graphs. A limitation of our approach, inherited from the limitation of graph limit theory, is that it works only for dense graphs.},
  file = {/Users/ethan/Zotero/storage/7RVDWR38/Chatterjee and Diaconis - 2013 - Estimating and understanding exponential random gr.pdf}
}

@article{chen2018NetworkCrossValidationDetermining,
  title = {Network {{Cross-Validation}} for {{Determining}} the {{Number}} of {{Communities}} in {{Network Data}}},
  author = {Chen, Kehui and Lei, Jing},
  year = {2018},
  month = jan,
  journal = {Journal of the American Statistical Association},
  volume = {113},
  number = {521},
  pages = {241--251},
  publisher = {Taylor \& Francis},
  issn = {0162-1459},
  doi = {10.1080/01621459.2016.1246365},
  urldate = {2023-10-23},
  abstract = {The stochastic block model (SBM) and its variants have been a popular tool for analyzing large network data with community structures. In this article, we develop an efficient network cross-validation (NCV) approach to determine the number of communities, as well as to choose between the regular stochastic block model and the degree corrected block model (DCBM). The proposed NCV method is based on a block-wise node-pair splitting technique, combined with an integrated step of community recovery using sub-blocks of the adjacency matrix. We prove that the probability of under-selection vanishes as the number of nodes increases, under mild conditions satisfied by a wide range of popular community recovery algorithms. The solid performance of our method is also demonstrated in extensive simulations and two data examples. Supplementary materials for this article are available online.},
  file = {/Users/ethan/Zotero/storage/RC33JQ7D/Chen and Lei - 2018 - Network Cross-Validation for Determining the Numbe.pdf}
}

@article{chen2021EstimatingGraphDimension,
  title={Estimating {{Graph Dimension}} with {{Cross-validated Eigenvalues}}},
  author={Chen, Fan and Roch, Sebastien and Rohe, Karl and Yu, Shuqi},
  journal={arXiv preprint arXiv:2108.03336},
  year={2021}
}

@article{cox1975NoteDataSplittingEvaluation,
  title = {A {{Note}} on {{Data-Splitting}} for the {{Evaluation}} of {{Significance Levels}}},
  author = {Cox, D. R.},
  year = {1975},
  journal = {Biometrika},
  volume = {62},
  number = {2},
  eprint = {2335385},
  eprinttype = {jstor},
  pages = {441--444},
  publisher = {[Oxford University Press, Biometrika Trust]},
  issn = {0006-3444},
  doi = {10.2307/2335385},
  urldate = {2023-10-31},
  abstract = {It has sometimes been suggested that to overcome difficulties arising in significance tests when the effects tested are selected in the light of the data, the data should be split randomly into two portions. The first portion is used to choose the hypothesis for test and the second portion for the evaluation of significance. After some general criticism of the idea, it is investigated theoretically on a simple problem about normal means. Recommendations are reached about the proportions into which the data should be divided and the theoretical efficiency of the procedure is assessed and found to be quite high.},
  file = {/Users/ethan/Zotero/storage/JZKU5E67/Cox - 1975 - A Note on Data-Splitting for the Evaluation of Sig.pdf}
}

@article{desilva2005ComplexNetworksSimple,
  title = {Complex Networks and Simple Models in Biology},
  author = {{de Silva}, Eric and Stumpf, Michael P.H},
  year = {2005},
  month = aug,
  journal = {Journal of The Royal Society Interface},
  volume = {2},
  number = {5},
  pages = {419--430},
  publisher = {Royal Society},
  doi = {10.1098/rsif.2005.0067},
  urldate = {2024-07-16},
  abstract = {The analysis of molecular networks, such as transcriptional, metabolic and protein interaction networks, has progressed substantially because of the power of models from statistical physics. Increasingly, the data are becoming so detailed---though not always complete or correct---that the simple models are reaching the limits of their usefulness. Here, we will discuss how network information can be described and to some extent quantified. In particular statistics offers a range of tools, such as model selection, which have not yet been widely applied in the analysis of biological networks. We will also outline a number of present challenges posed by biological network data in systems biology, and the extent to which these can be addressed by new developments in statistics, physics and applied mathematics.},
  file = {/Users/ethan/Zotero/storage/RV7E4EMG/de Silva and Stumpf - 2005 - Complex networks and simple models in biology.pdf}
}

@article{wang2017likelihoodsbm,
author = {Y. X. Rachel Wang and Peter J. Bickel},
title = {{Likelihood-based model selection for stochastic block models}},
volume = {45},
journal = {The Annals of Statistics},
number = {2},
publisher = {Institute of Mathematical Statistics},
pages = {500 -- 528},
keywords = {likelihood ratio statistic, model misspecification, Network communities, stochastic block models},
year = {2017},
doi = {10.1214/16-AOS1457},
URL = {https://doi.org/10.1214/16-AOS1457}
}

@article{dharamshi2023GeneralizedDataThinning,
  title={Generalized {{Data Thinning Using Sufficient Statistics}}},
  author={Dharamshi, Ameer and Neufeld, Anna and Motwani, Keshav and Gao, Lucy L and Witten, Daniela and Bien, Jacob},
  journal={Journal of the American Statistical Association},
  volume={120},
  number={549},
  pages={511--523},
  year={2025},
  publisher={Taylor \& Francis}
}

@article{dharamshi2024DecomposingGaussiansUnknown,
  title={Decomposing {{Gaussians}} with {{Unknown Covariance}}},
  author={Dharamshi, Ameer and Neufeld, Anna and Gao, Lucy L and Bien, Jacob and Witten, Daniela},
  journal={(to appear in Biometrika)},
  year={2025},
  publisher={Oxford University Press}
}

@article{duranthon2023optimal,
  title={Optimal inference in contextual stochastic block models},
  author={Duranthon, Olivier and Zdeborov{\'a}, Lenka},
  journal={arXiv preprint arXiv:2306.07948},
  year={2023}
}

@article{fan2022SimpleStatisticalInference,
  title = {Simple: {{Statistical Inference}} on {{Membership Profiles}} in {{Large Networks}}},
  shorttitle = {Simple},
  author = {Fan, Jianqing and Fan, Yingying and Han, Xiao and Lv, Jinchi},
  year = {2022},
  month = apr,
  journal = {Journal of the Royal Statistical Society Series B: Statistical Methodology},
  volume = {84},
  number = {2},
  pages = {630--653},
  issn = {1369-7412},
  doi = {10.1111/rssb.12505},
  urldate = {2024-07-02},
  abstract = {Network data are prevalent in many contemporary big data applications in which a common interest is to unveil important latent links between different pairs of nodes. Yet a simple fundamental question of how to precisely quantify the statistical uncertainty associated with the identification of latent links still remains largely unexplored. In this paper, we propose the method of statistical inference on membership profiles in large networks (SIMPLE) in the setting of degree-corrected mixed membership model, where the null hypothesis assumes that the pair of nodes share the same profile of community memberships. In the simpler case of no degree heterogeneity, the model reduces to the mixed membership model for which an alternative more robust test is also proposed. Both tests are of the Hotelling-type statistics based on the rows of empirical eigenvectors or their ratios, whose asymptotic covariance matrices are very challenging to derive and estimate. Nevertheless, their analytical expressions are unveiled and the unknown covariance matrices are consistently estimated. Under some mild regularity conditions, we establish the exact limiting distributions of the two forms of SIMPLE test statistics under the null hypothesis and contiguous alternative hypothesis. They are the chi-square distributions and the noncentral chi-square distributions, respectively, with degrees of freedom depending on whether the degrees are corrected or not. We also address the important issue of estimating the unknown number of communities and establish the asymptotic properties of the associated test statistics. The advantages and practical utility of our new procedures in terms of both size and power are demonstrated through several simulation examples and real network applications.},
  file = {/Users/ethan/Zotero/storage/UBMVHFJT/Fan et al. - 2022 - Simple Statistical Inference on Membership Profil.pdf}
}

@article{funke2019stochastic,
  title={Stochastic block models: A comparison of variants and inference methods},
  author={Funke, Thorben and Becker, Till},
  journal={PloS one},
  volume={14},
  number={4},
  pages={e0215296},
  year={2019},
  publisher={Public Library of Science San Francisco, CA USA}
}

@article{fithian2017OptimalInferenceModel,
  title = {Optimal {{Inference After Model Selection}}},
  author = {Fithian, William and Sun, Dennis and Taylor, Jonathan},
  year = {2017},
  month = apr,
  journal = {arXiv:1410.2597 [math, stat]},
  eprint = {1410.2597},
  primaryclass = {math, stat},
  urldate = {2021-10-13},
  abstract = {To perform inference after model selection, we propose controlling the selective type I error; i.e., the error rate of a test given that it was performed. By doing so, we recover long-run frequency properties among selected hypotheses analogous to those that apply in the classical (non-adaptive) context. Our proposal is closely related to data splitting and has a similar intuitive justification, but is more powerful. Exploiting the classical theory of Lehmann and Scheff{\textbackslash}'e (1955), we derive most powerful unbiased selective tests and confidence intervals for inference in exponential family models after arbitrary selection procedures. For linear regression, we derive new selective z-tests that generalize recent proposals for inference after model selection and improve on their power, and new selective t-tests that do not require knowledge of the error variance.},
  archiveprefix = {arXiv},
  file = {/Users/ethan/Zotero/storage/NRZMPEK7/Fithian et al_2017_Optimal Inference After Model Selection.pdf}
}

@article{harenberg2014community,
  title={Community detection in large-scale networks: a survey and empirical evaluation},
  author={Harenberg, Steve and Bello, Gonzalo and Gjeltema, La and Ranshous, Stephen and Harlalka, Jitendra and Seay, Ramona and Padmanabhan, Kanchana and Samatova, Nagiza},
  journal={Wiley Interdisciplinary Reviews: Computational Statistics},
  volume={6},
  number={6},
  pages={426--439},
  year={2014},
  publisher={Wiley Online Library}
}

@article{green2022BootstrappingExchangeableRandom,
  title = {Bootstrapping Exchangeable Random Graphs},
  author = {Green, Alden and Shalizi, Cosma Rohilla},
  year = {2022},
  month = jan,
  journal = {Electronic Journal of Statistics},
  volume = {16},
  number = {1},
  issn = {1935-7524},
  doi = {10.1214/21-EJS1896},
  urldate = {2024-12-04},
  file = {/Users/ethan/Zotero/storage/9264A4Y6/Green and Shalizi - 2022 - Bootstrapping exchangeable random graphs.pdf}
}

@book{van2000asymptotic,
  title={Asymptotic Statistics},
  author={Van der Vaart, Aad W},
  volume={3},
  year={2000},
  publisher={Cambridge University Press}
}

@book{Rudin1976,
  title={Principles of Mathematical Analysis},
  author={Rudin, Walter},
  year={1976},
  publisher={McGraw-Hill},
  address={New York},
  edition={3rd},
  series={International Series in Pure and Applied Mathematics}
}

@article{holland1983StochasticBlockmodelsFirst,
  title = {Stochastic Blockmodels: {{First}} Steps},
  shorttitle = {Stochastic Blockmodels},
  author = {Holland, Paul W. and Laskey, Kathryn Blackmond and Leinhardt, Samuel},
  year = {1983},
  month = jun,
  journal = {Social Networks},
  volume = {5},
  number = {2},
  pages = {109--137},
  issn = {0378-8733},
  doi = {10.1016/0378-8733(83)90021-7},
  abstract = {A stochastic model is proposed for social networks in which the actors in a network are partitioned into subgroups called blocks. The model provides a stochastic generalization of the blockmodel. Estimation techniques are developed for the special case of a single relation social network, with blocks specified a priori. An extension of the model allows for tendencies toward reciprocation of ties beyond those explained by the partition. The extended model provides a one degree-of-freedom test of the model. A numerical example from the social network literature is used to illustrate the methods.},
  file = {/Users/ethan/Zotero/storage/95LTKWUG/Holland et al_1983_Stochastic blockmodels.pdf}
}

@article{hubert1985ComparingPartitions,
  title = {Comparing Partitions},
  author = {Hubert, Lawrence and Arabie, Phipps},
  year = {1985},
  month = dec,
  journal = {Journal of Classification},
  volume = {2},
  number = {1},
  pages = {193--218},
  issn = {0176-4268, 1432-1343},
  doi = {10.1007/BF01908075},
  urldate = {2024-11-12},
  copyright = {http://www.springer.com/tdm},
  langid = {english}
}

@article{hwang2024estimation,
  title={On the estimation of the number of communities for sparse networks},
  author={Hwang, Neil and Xu, Jiarui and Chatterjee, Shirshendu and Bhattacharyya, Sharmodeep},
  journal={Journal of the American Statistical Association},
  volume={119},
  number={547},
  pages={1895--1910},
  year={2024},
  publisher={Taylor \& Francis}
}

@article{kao2019HybridMixedMembershipBlockmodel,
  title = {Hybrid {{Mixed-Membership Blockmodel}} for {{Inference}} on {{Realistic Network Interactions}}},
  author = {Kao, Edward K. and Smith, Steven Thomas and Airoldi, Edoardo M.},
  year = {2019},
  month = jul,
  journal = {IEEE Transactions on Network Science and Engineering},
  volume = {6},
  number = {3},
  pages = {336--350},
  issn = {2327-4697},
  doi = {10.1109/TNSE.2018.2823324},
  urldate = {2024-07-08},
  abstract = {This work proposes a novel hybrid mixed-membership blockmodel (HMMB) that integrates three canonical network models to capture the characteristics of real-world interactions: community structure with mixed-membership, power-law-distributed node degrees, and sparsity. This hybrid model provides the capacity needed for realism, enabling control and inference on individual attributes of interest such as mixed-membership and popularity. A rigorous inference procedure is developed for estimating the parameters of this model through iterative Bayesian updates, with targeted initialization to improve identifiability. For the estimation of mixed-membership parameters, the Cram{\'e}r-Rao bound is derived by quantifying the information content in terms of the Fisher information matrix. The effectiveness of the proposed inference is demonstrated in simulations where the estimates achieve covariances close to the Cram{\'e}r-Rao bound while maintaining good truth coverage. We illustrate the utility of the proposed model and inference procedure in the application of detecting a community from a few cue nodes, where success depends on accurately estimating the mixed-memberships. Performance evaluations on both simulated and real-world data show that inference with HMMB is able to recover mixed-memberships in the presence of challenging community overlap, leading to significantly improved detection performance over algorithms based on network modularity and simpler models.},
  file = {/Users/ethan/Zotero/storage/JUEJI2V2/Kao et al. - 2019 - Hybrid Mixed-Membership Blockmodel for Inference o.pdf}
}

@article{karrer2011StochasticBlockmodelsCommunity,
  title = {Stochastic Blockmodels and Community Structure in Networks},
  author = {Karrer, Brian and Newman, M. E. J.},
  year = {2011},
  month = jan,
  journal = {Physical Review E},
  volume = {83},
  number = {1},
  pages = {016107},
  publisher = {American Physical Society},
  doi = {10.1103/PhysRevE.83.016107},
  urldate = {2024-07-08},
  abstract = {Stochastic blockmodels have been proposed as a tool for detecting community structure in networks as well as for generating synthetic networks for use as benchmarks. Most blockmodels, however, ignore variation in vertex degree, making them unsuitable for applications to real-world networks, which typically display broad degree distributions that can significantly affect the results. Here we demonstrate how the generalization of blockmodels to incorporate this missing element leads to an improved objective function for community detection in complex networks. We also propose a heuristic algorithm for community detection using this objective function or its non-degree-corrected counterpart and show that the degree-corrected version dramatically outperforms the uncorrected one in both real-world and synthetic networks.},
  file = {/Users/ethan/Zotero/storage/WFRBYNC8/Karrer and Newman - 2011 - Stochastic blockmodels and community structure in .pdf}
}

@article{kriegeskorte2009CircularAnalysisSystems,
  title = {Circular Analysis in Systems Neuroscience: The Dangers of Double Dipping},
  shorttitle = {Circular Analysis in Systems Neuroscience},
  author = {Kriegeskorte, Nikolaus and Simmons, W. Kyle and Bellgowan, Patrick S. F. and Baker, Chris I.},
  year = {2009},
  month = may,
  journal = {Nature Neuroscience},
  volume = {12},
  number = {5},
  pages = {535--540},
  publisher = {Nature Publishing Group},
  issn = {1546-1726},
  doi = {10.1038/nn.2303},
  urldate = {2023-11-15},
  abstract = {This perspective illustrates some of the problems involved in analyzing the complex data yielded by systems neuroscience techniques, such as brain imaging and electrophysiology. Specifically, when test statistics are not independent of the selection criteria, common analyses can produce spurious results. The authors suggest ways to avoid such errors.},
  copyright = {2009 Springer Nature America, Inc.},
  langid = {english},
  file = {/Users/ethan/Zotero/storage/ZI8EL4CE/Kriegeskorte et al. - 2009 - Circular analysis in systems neuroscience the dan.pdf}
}

@article{leiner2025DataFissionSplitting,
  title={Data fission: splitting a single data point},
  author={Leiner, James and Duan, Boyan and Wasserman, Larry and Ramdas, Aaditya},
  journal={Journal of the American Statistical Association},
  volume={120},
  number={549},
  pages={135--146},
  year={2025},
  publisher={Taylor \& Francis}
}

@article{levin2025bootstrapping,
  title={Bootstrapping {{Networks}} with {{Latent Space Structure}}},
  author={Levin, Keith and Levina, Elizaveta},
  journal={Electronic Journal of Statistics},
  volume={19},
  number={1},
  pages={745--791},
  year={2025},
  publisher={The Institute of Mathematical Statistics and the Bernoulli Society}
}

@article{li2020NetworkCrossvalidationEdge,
  title = {Network Cross-Validation by Edge Sampling},
  author = {Li, Tianxi and Levina, Elizaveta and Zhu, Ji},
  year = {2020},
  month = jun,
  journal = {Biometrika},
  volume = {107},
  number = {2},
  pages = {257--276},
  issn = {0006-3444},
  doi = {10.1093/biomet/asaa006},
  urldate = {2022-09-28},
  abstract = {While many statistical models and methods are now available for network analysis, resampling of network data remains a challenging problem. Cross-validation is a useful general tool for model selection and parameter tuning, but it is not directly applicable to networks since splitting network nodes into groups requires deleting edges and destroys some of the network structure. In this paper we propose a new network resampling strategy, based on splitting node pairs rather than nodes, that is applicable to cross-validation for a wide range of network model selection tasks. We provide theoretical justification for our method in a general setting and examples of how the method can be used in specific network model selection and parameter tuning tasks. Numerical results on simulated networks and on a statisticians' citation network show that the proposed cross-validation approach works well for model selection.},
  file = {/Users/ethan/Zotero/storage/KNF4D52R/Li et al_2020_Network cross-validation by edge sampling.pdf}
}

@inproceedings{linTheoreticalPropertiesNetwork,
  title = {On the {{Theoretical Properties}} of the {{Network Jackknife}}},
  booktitle = {Proceedings of the 37th {{International Conference}} on {{Machine Learning}}},
  author = {Lin, Qiaohui and Lunde, Robert and Sarkar, Purnamrita},
  pages = {6105--6115},
  year = {2020},
  organization = {PMLR}
}

@article{liu2020ComputationalNetworkBiology,
  title = {Computational Network Biology: {{Data}}, Models, and Applications},
  shorttitle = {Computational Network Biology},
  author = {Liu, Chuang and Ma, Yifang and Zhao, Jing and Nussinov, Ruth and Zhang, Yi-Cheng and Cheng, Feixiong and Zhang, Zi-Ke},
  year = {2020},
  month = mar,
  journal = {Physics Reports},
  series = {Computational Network Biology: {{Data}}, Models, and Applications},
  volume = {846},
  pages = {1--66},
  issn = {0370-1573},
  doi = {10.1016/j.physrep.2019.12.004},
  urldate = {2024-07-16},
  abstract = {Biological entities are involved in intricate and complex interactions, in which uncovering the biological information from the network concepts are of great significance. Benefiting from the advances of network science and high-throughput biomedical technologies, studying the biological systems from network biology has attracted much attention in recent years, and networks have long been central to our understanding of biological systems, in the form of linkage maps among genotypes, phenotypes, and the corresponding environmental factors. In this review, we summarize the recent developments of computational network biology, first introducing various types of biological networks and network structural properties. We then review the network-based approaches, ranging from some network metrics to the complicated machine-learning methods, and emphasize how to use these algorithms to gain new biological insights. Furthermore, we highlight the application in neuroscience, human disease, and drug developments from the perspectives of network science, and we discuss some major challenges and future directions. We hope that this review will draw increasing interdisciplinary attention from physicists, computer scientists, and biologists.},
  file = {/Users/ethan/Zotero/storage/NZ7PALY7/Liu et al. - 2020 - Computational network biology Data, models, and a.pdf}
}

@article{lusseau2003bottlenose,
  title={The bottlenose dolphin community of doubtful sound features a large proportion of long-lasting associations: can geographic isolation explain this unique trait?},
  author={Lusseau, David and Schneider, Karsten and Boisseau, Oliver J and Haase, Patti and Slooten, Elisabeth and Dawson, Steve M},
  journal={Behavioral ecology and sociobiology},
  volume={54},
  number={4},
  pages={396--405},
  year={2003},
  publisher={Springer}
}

@article{neufeld2024DataThinningConvolutionclosed,
  title={Data thinning for convolution-closed distributions},
  author={Neufeld, Anna and Dharamshi, Ameer and Gao, Lucy L and Witten, Daniela},
  journal={Journal of Machine Learning Research},
  volume={25},
  number={57},
  pages={1--35},
  year={2024}
}

@article{neufeld2025DiscussionDataFission,
  title={Discussion of “Data fission: splitting a single data point”},
  author={Neufeld, Anna and Dharamshi, Ameer and Gao, Lucy L and Witten, Daniela and Bien, Jacob},
  journal={Journal of the American Statistical Association},
  volume={120},
  number={549},
  pages={151--157},
  year={2025},
  publisher={Taylor \& Francis}
}

@article{omalley2008AnalysisSocialNetworks,
  title = {The Analysis of Social Networks},
  author = {O'Malley, A. James and Marsden, Peter V.},
  year = {2008},
  month = dec,
  journal = {Health Services and Outcomes Research Methodology},
  volume = {8},
  number = {4},
  pages = {222--269},
  issn = {1387-3741, 1572-9400},
  doi = {10.1007/s10742-008-0041-z},
  urldate = {2024-07-16},
  abstract = {Many questions about the social organization of medicine and health services involve interdependencies among social actors that may be depicted by networks of relationships. Social network studies have been pursued for some time in social science disciplines, where numerous descriptive methods for analyzing them have been proposed. More recently, interest in the analysis of social network data has grown among statisticians, who have developed more elaborate models and methods for fitting them to network data. This article reviews fundamentals of, and recent innovations in, social network analysis using a physician influence network as an example. After introducing forms of network data, basic network statistics, and common descriptive measures, it describes two distinct types of statistical models for network data: individual-outcome models in which networks enter the construction of explanatory variables, and relational models in which the network itself is a multivariate dependent variable. Complexities in estimating both types of models arise due to the complex correlation structures among outcome measures.},
  copyright = {http://www.springer.com/tdm},
  langid = {english},
  file = {/Users/ethan/Zotero/storage/6ZQT9B9V/O’Malley and Marsden - 2008 - The analysis of social networks.pdf}
}

@Manual{manynet2025,
title = {manynet: Many Ways to Make, Modify, Map, Mark, and Measure Myriad
Networks},
    author = {James Hollway},
    year = {2025},
    note = {R package version 1.5.1},
    url = {https://CRAN.R-project.org/package=manynet},
  }

@article{rasines2023SplittingStrategiesPostselection,
  title = {Splitting Strategies for Post-Selection Inference},
  author = {Rasines, D Garc{\'i}a and Young, G A},
  year = {2023},
  month = aug,
  journal = {Biometrika},
  volume = {110},
  number = {3},
  pages = {597--614},
  issn = {0006-3444, 1464-3510},
  doi = {10.1093/biomet/asac070},
  urldate = {2024-11-12},
  abstract = {Summary             We consider the problem of providing valid inference for a selected parameter in a sparse regression setting. It is well known that classical regression tools can be unreliable in this context because of the bias generated in the selection step. Many approaches have been proposed in recent years to ensure inferential validity. In this article we consider a simple alternative to data splitting based on randomizing the response vector, which allows for higher selection and inferential power than the former, and is applicable with an arbitrary selection rule. We perform a theoretical and empirical comparison of the two methods and derive a central limit theorem for the randomization approach. Our investigations show that the gain in power can be substantial.},
  copyright = {https://creativecommons.org/licenses/by/4.0/},
  langid = {english},
  file = {/Users/ethan/Zotero/storage/ED9FY24Q/Rasines and Young - 2023 - Splitting strategies for post-selection inference.pdf}
}

@article{rubin-delanchy2022StatisticalInterpretationSpectral,
  title = {A {{Statistical Interpretation}} of {{Spectral Embedding}}: {{The Generalised Random Dot Product Graph}}},
  shorttitle = {A {{Statistical Interpretation}} of {{Spectral Embedding}}},
  author = {{Rubin-Delanchy}, Patrick and Cape, Joshua and Tang, Minh and Priebe, Carey E.},
  year = {2022},
  month = sep,
  journal = {Journal of the Royal Statistical Society Series B: Statistical Methodology},
  volume = {84},
  number = {4},
  pages = {1446--1473},
  issn = {1369-7412},
  doi = {10.1111/rssb.12509},
  urldate = {2024-07-02},
  abstract = {Spectral embedding is a procedure which can be used to obtain vector representations of the nodes of a graph. This paper proposes a generalisation of the latent position network model known as the random dot product graph, to allow interpretation of those vector representations as latent position estimates. The generalisation is needed to model heterophilic connectivity (e.g. `opposites attract') and to cope with negative eigenvalues more generally. We show that, whether the adjacency or normalised Laplacian matrix is used, spectral embedding produces uniformly consistent latent position estimates with asymptotically Gaussian error (up to identifiability). The standard and mixed membership stochastic block models are special cases in which the latent positions take only K distinct vector values, representing communities, or live in the (K - 1)-simplex with those vertices respectively. Under the stochastic block model, our theory suggests spectral clustering using a Gaussian mixture model (rather than K-means) and, under mixed membership, fitting the minimum volume enclosing simplex, existing recommendations previously only supported under non-negative-definite assumptions. Empirical improvements in link prediction (over the random dot product graph), and the potential to uncover richer latent structure (than posited under the standard or mixed membership stochastic block models) are demonstrated in a cyber-security example.},
  file = {/Users/ethan/Zotero/storage/CDCJHUDX/Rubin-Delanchy et al. - 2022 - A Statistical Interpretation of Spectral Embedding.pdf}
}

@article{snijders2011StatisticalModelsSocial,
  title = {Statistical {{Models}} for {{Social Networks}}},
  author = {Snijders, Tom A. B.},
  year = {2011},
  month = aug,
  journal = {Annual Review of Sociology},
  volume = {37},
  number = {Volume 37, 2011},
  pages = {131--153},
  publisher = {Annual Reviews},
  issn = {0360-0572, 1545-2115},
  doi = {10.1146/annurev.soc.012809.102709},
  urldate = {2024-07-16},
  abstract = {Statistical models for social networks as dependent variables must represent the typical network dependencies between tie variables such as reciprocity, homophily, transitivity, etc. This review first treats models for single (cross-sectionally observed) networks and then for network dynamics. For single networks, the older literature concentrated on conditionally uniform models. Various types of latent space models have been developed: for discrete, general metric, ultrametric, Euclidean, and partially ordered spaces. Exponential random graph models were proposed long ago but now are applied more and more thanks to the non-Markovian social circuit specifications that were recently proposed. Modeling network dynamics is less complicated than modeling single network observations because dependencies are spread out in time. For modeling network dynamics, continuous-time models are more fruitful. Actor-oriented models here provide a model that can represent many dependencies in a flexible way. Strong model development is now going on to combine the features of these models and to extend them to more complicated outcome spaces.},
  langid = {english},
  file = {/Users/ethan/Zotero/storage/P5FLAMQ6/Snijders - 2011 - Statistical Models for Social Networks.pdf}
}

@article{tang2022AsymptoticallyEfficientEstimators,
  title = {Asymptotically Efficient Estimators for Stochastic Blockmodels: {{The}} Naive {{MLE}}, the Rank-Constrained {{MLE}}, and the Spectral Estimator},
  shorttitle = {Asymptotically Efficient Estimators for Stochastic Blockmodels},
  author = {Tang, Minh and Cape, Joshua and Priebe, Carey E.},
  year = {2022},
  month = may,
  journal = {Bernoulli},
  volume = {28},
  number = {2},
  pages = {1049--1073},
  publisher = {{Bernoulli Society for Mathematical Statistics and Probability}},
  issn = {1350-7265},
  doi = {10.3150/21-BEJ1376},
  urldate = {2024-07-02},
  abstract = {We establish asymptotic normality results for estimation of the block probability matrix B in stochastic blockmodel graphs using spectral embedding when the average degrees grows at the rate of {$\omega$}(n) in n, the number of vertices. As a corollary, we show that when B is of full-rank, estimates of B obtained from spectral embedding are asymptotically efficient. When B is singular the estimates obtained from spectral embedding can have smaller mean square error than those obtained from maximizing the log-likelihood under no rank assumption, and furthermore, can be almost as efficient as the true MLE that assumes the rank of B is known. Our results indicate, in the context of stochastic blockmodel graphs, that spectral embedding is not just computationally tractable, but that the resulting estimates are also admissible, even when compared to the purportedly optimal but computationally intractable maximum likelihood estimation under no rank assumption.},
  file = {/Users/ethan/Zotero/storage/XG7FJNQ4/Tang et al. - 2022 - Asymptotically efficient estimators for stochastic.pdf}
}

@article{thompson2016UsingBootstrapStatistical,
  title = {Using the Bootstrap for Statistical Inference on Random Graphs},
  author = {Thompson, Mary E. and Ramirez Ramirez, Lilia L. and Lyubchich, Vyacheslav and Gel, Yulia R.},
  year = {2016},
  month = mar,
  journal = {Canadian Journal of Statistics},
  volume = {44},
  number = {1},
  pages = {3--24},
  issn = {0319-5724, 1708-945X},
  doi = {10.1002/cjs.11271},
  urldate = {2024-12-04},
  abstract = {Abstract                                                            In this paper we propose a new nonparametric approach to network inference that may be viewed as a fusion of block sampling procedures for temporally and spatially dependent processes with the classical network methodology. We develop estimation and uncertainty quantification procedures for network mean degree using a ``patchwork'' sample and nonparametric bootstrap, under the assumption of unknown degree distribution. We provide a heuristic justification of asymptotic properties of the proposed ``patchwork'' sampling and present cross-validation methodology for selecting an optimal ``patch'' size. We validate the new ``patchwork'' bootstrap on simulated networks with short- and long-tailed mean degree distributions, and revisit the Erd{\"o}s collaboration data to illustrate the proposed methodology.                 The Canadian Journal of Statistics                 44: 3--24; 2016 {\copyright} 2015 Statistical Society of Canada                                       ,              R{\'e}sum{\'e}                                                            Les auteurs proposent une approche non param{\'e}trique pour l'inf{\'e}rence dans les r{\'e}seaux qui peut {\^e}tre d{\'e}crite comme une fusion entre la m{\'e}thodologie classique pour les r{\'e}seaux et des proc{\'e}dures d'{\'e}chantillonnage par bloc pour des processus pr{\'e}sentant une d{\'e}pendance spatiale et temporelle. Ils d{\'e}veloppent des proc{\'e}dures permettant d'estimer le degr{\'e} moyen du r{\'e}seau et d'en mesurer l'incertitude gr{\^a}ce {\`a} l'{\'e}chantillonnage en mosa{\"i} que et au bootstrap non param{\'e}trique, sous l'hypoth{\`e}se que la distribution du degr{\'e} est inconnue. Les auteurs pr{\'e}sentent une justification heuristique des propri{\'e}t{\'e}s asymptotiques de l'{\'e}chantillonnage en mosa{\"i} que propos{\'e} et d{\'e}crivent une m{\'e}thodologie de validation crois{\'e}e afin de choisir la taille optimale des pi{\`e}ces de la mosa{\"i} que. Ils valident le bootstrap pour la mosa{\"i} que en simulant des r{\'e}seaux dont la distribution du degr{\'e} moyen pr{\'e}sente des queues lourdes ou l{\'e}g{\`e}res. Ils illustrent {\'e}galement leur m{\'e}thodologie {\`a} l'aide des donn{\'e}es de collaboration d'Erd{\"o}s.                 La revue canadienne de statistique                 44: 3--24; 2016 {\copyright} 2015 Soci{\'e}t{\'e} statistique du Canada},
  copyright = {http://onlinelibrary.wiley.com/termsAndConditions\#vor},
  langid = {english}
}

@article{tian2018SelectiveInferenceRandomized,
  title = {Selective Inference with a Randomized Response},
  author = {Tian, Xiaoying and Taylor, Jonathan},
  year = {2018},
  month = apr,
  journal = {The Annals of Statistics},
  volume = {46},
  number = {2},
  issn = {0090-5364},
  doi = {10.1214/17-AOS1564},
  urldate = {2024-11-12},
  file = {/Users/ethan/Zotero/storage/PFW3YD9G/Tian and Taylor - 2018 - Selective inference with a randomized response.pdf}
}

@inproceedings{young2007RandomDotProduct,
  title = {Random {{Dot Product Graph Models}} for {{Social Networks}}},
  booktitle = {Algorithms and {{Models}} for the {{Web-Graph}}},
  author = {Young, Stephen J. and Scheinerman, Edward R.},
  editor = {Bonato, Anthony and Chung, Fan R. K.},
  year = {2007},
  pages = {138--149},
  publisher = {Springer},
  address = {Berlin, Heidelberg},
  doi = {10.1007/978-3-540-77004-6_11},
  abstract = {Inspired by the recent interest in combining geometry with random graph models, we explore in this paper two generalizations of the random dot product graph model proposed by Kraetzl, Nickel and Scheinerman, and Tucker [1,2]. In particular we consider the properties of clustering, diameter and degree distribution with respect to these models. Additionally we explore the conductance of these models and show that in a geometric sense, the conductance is constant.},
  isbn = {978-3-540-77004-6},
  langid = {english},
  file = {/Users/ethan/Zotero/storage/5XF6LGG8/Young and Scheinerman - 2007 - Random Dot Product Graph Models for Social Network.pdf}
}

@article{chakrabarty2025NetworkCrossValidationModel,
  title={Network {{Cross-Validation}} and {{Model Selection}} via {{Subsampling}}},
  author={Chakrabarty, Sayan and Sengupta, Srijan and Chen, Yuguo},
  journal={arXiv preprint arXiv:2504.06903},
  year={2025}
}

@article{bickel2009NonparametricViewNetwork,
  title = {A Nonparametric View of Network Models and {{Newman}}--{{Girvan}} and Other Modularities},
  author = {Bickel, Peter J. and Chen, Aiyou},
  year = {2009},
  month = dec,
  journal = {Proceedings of the National Academy of Sciences},
  volume = {106},
  number = {50},
  pages = {21068--21073},
  publisher = {Proceedings of the National Academy of Sciences},
  doi = {10.1073/pnas.0907096106},
  urldate = {2025-05-29}
}

@book{barabasi2016NetworkScience,
  title = {Network Science},
  author = {Barab{\'a}si, Albert-L{\'a}szl{\'o} and P{\'o}sfai, M{\'a}rton},
  year = {2016},
  publisher = {Cambridge University Press},
  address = {Cambridge},
  isbn = {978-1-107-07626-6},
  langid = {english},
  lccn = {004.6}
}

\end{document}